\newtheorem{theorem}{Theorem}
\newtheorem{corollary}[theorem]{Corollary}
\newtheorem{proposition}[theorem]{Proposition}
\newtheorem{example}{Example}
\newtheorem{assumpM}{Assumption}
\newtheorem{assumpN}{Assumption}
\newcommand{\eps}{\varepsilon}
\newcommand{\bY}{\boldsymbol Y}
\newcommand{\bZ}{\bm{Z}}
\newcommand{\bz}{\bm{z}}
\newcommand{\bU}{\bm{U}}
\newcommand{\bV}{\bm{V}}
\newcommand{\bu}{\bm{u}}
\newcommand{\bss}{\bm{s}}
\newcommand{\bN}{\bm{N}}
\newcommand{\zero}{\bm{0}}
\newcommand{\bvartheta}{\bm{\vartheta}}
\newcommand{\bth}{\bm{\theta}}
\newcommand{\brho}{\bm{\rho}}
\newcommand{\bnu}{\bm{\nu}}
\newcommand{\bBeta}{\bm{\eta}}
\newcommand{\E}{\mathsf{E}}
\newcommand{\Var}{\mathsf{Var}}
\newcommand{\dist}{\mathsf{D}}
\newcommand{\prob}{\mathsf{P}}
\newcommand{\ud}{\mbox{d}}
\DeclareMathOperator{\tr}{tr}
\author[1]{Mat\'{u}\v{s} Maciak}
\author[2]{Ostap Okhrin}
\author[1]{Michal Pe\v{s}ta\thanks{Corresponding author; Address: Sokolovsk\'{a}~49/83, 18675 Prague~8, Czech Republic; Email: \texttt{michal.pesta@mff.cuni.cz}}}
\affil[1]{Charles University, Prague, Czech Republic, Faculty of Mathematics and Physics, Department of Probability and Mathematical Statistics}
\affil[2]{Technische Universit\"at Dresden, Germany, Chair of Statistics, Institute of Economics and Transport, School of Transportation}
\title{Infinitely Stochastic Micro Forecasting}
\date{\small\today}
\begin{document}
	
\maketitle

\begin{abstract}
Forecasting costs is now a~front burner in empirical economics. We propose an~unconventional tool for stochastic prediction of future expenses based on the individual (micro) developments of recorded events. Consider a~firm, enterprise, institution, or state, which possesses knowledge about particular historical events. For each event, there is a~series of several related subevents: payments or losses spread over time, which all leads to an infinitely stochastic process at the end. Nevertheless, the issue is that some already occurred events do not have to be necessarily reported. The aim lies in forecasting future subevent flows coming from already reported, occurred but not reported, and yet not occurred events. Our methodology is illustrated on quantitative risk assessment, however, it can be applied to other areas such as startups, epidemics, war damages, advertising and commercials, digital payments, or drug prescription as manifested in the paper. As a~theoretical contribution, inference for infinitely stochastic processes is developed. In particular, a~non-homogeneous Poisson process with non-homogeneous Poisson processes as marks is used, which includes for instance the Cox process as a~special case.
\end{abstract}

\medskip

\begin{small}
\noindent \emph{Keywords:} stochastic prediction, infinitely stochastic process, marked process, time-varying models, dynamic panel data, resampling, risk valuation
\medskip

\end{small}

\section{Introduction}\label{sec:intro}
Human as well as monetary losses and uncertainty about their extent are one of the main sources of risk. A~probabilistic prediction of the future monetary losses lies on the cutting edge of quantitative risk assessment, for instance, valuation of operational risk in banking or reserving risk in insurance, while the study of human losses is of particular interest in conflict solution and epidemics modeling. 
We propose a~general prediction methodology, which together with the underlying stochastic procedures is applicable to various areas as demonstrated by case examples later on.

Let us define the general structure of our model on the basis of a~financial example. The event's \emph{`lifetime'} can be described as follows: The $i$th loss occurs at the \emph{occurrence time}, which is denoted by $T_i$. Such a~loss is often reported (e.g., to a~financial company) not immediately after the event, but for various reasons, after the \emph{reporting delay} (waiting time) $W_i$, which is the time difference between the \emph{occurrence epoch} (event time) and the \emph{observation epoch} (reporting time). Furthermore, $Z_i=T_i+W_i$ stands for the $i$th \emph{reporting (notification) time}. The contemplated cash flows are visualized in Figure~\ref{fig:illustration}, which elucidates the whole framework behind the loss reporting process together with the time developments of the losses. 
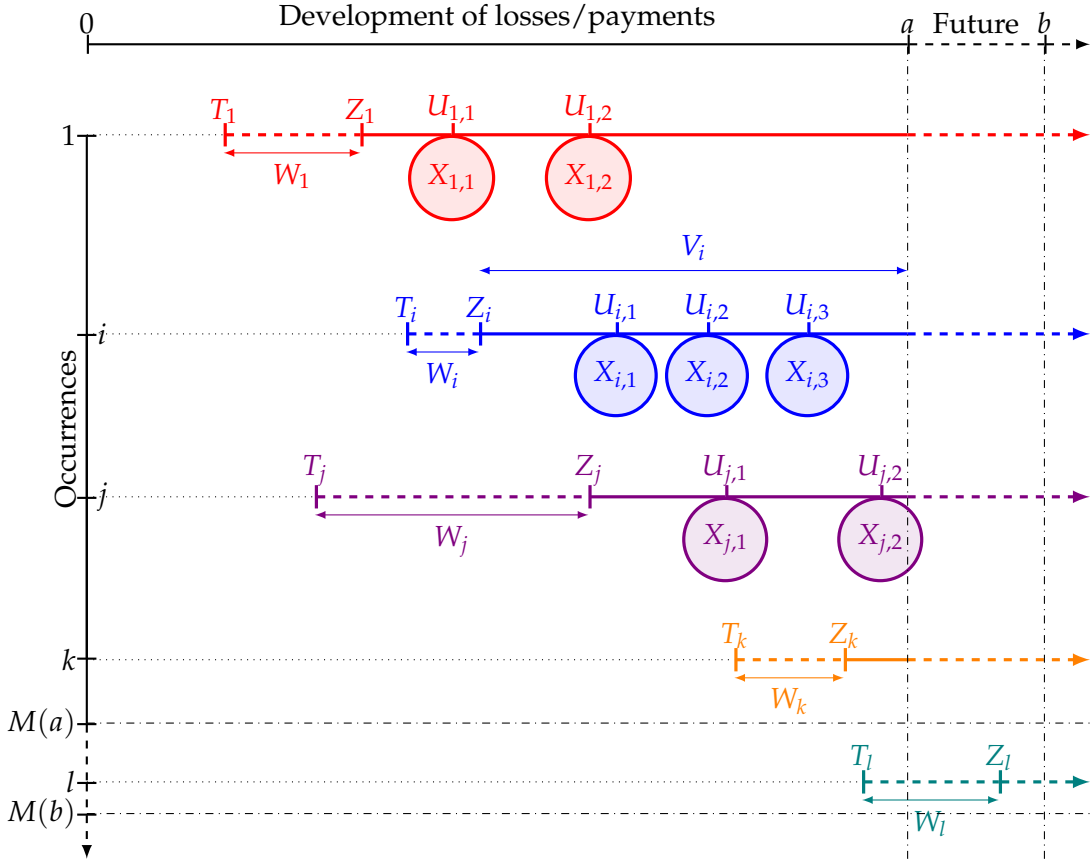
\begin{figure}[!ht]
	\centering
	\begin{tikzpicture}[>=latex, xscale=12, yscale=12]
	
	\draw[|-, thick] (-0.1,0.8) node[left]{$1$} -- (-0.1,0.15) node[midway, above, rotate=90]{Occurrences};
	
	\draw[|-, thick] (-0.1,0.58) node[right]{$i$} -- (-0.1,0.40) node[right]{$j$};
	
	\draw[|-|, thick] (-0.1,0.40) -- (-0.1,0.22) node[left]{$k$};
	
	\draw[|-, dashed, thick] (-0.1,0.15) node[left]{$M(a)$} -- (-0.1,0.085);
	
	\draw[|-, dashed, thick] (-0.1,0.085) node[left]{$l$} -- (-0.1,0.05);
	
	\draw[|->, dashed, thick] (-0.1,0.05) node[left]{$M(b)$} -- (-0.1,0.0);
	
	\draw[|-, thick] (-0.1,0.9) node[left, above]{$0$} -- (0.8,0.9) node[midway, above]{Development of losses/payments};
	
	\draw[|-, dashed, thick] (0.8,0.9) node[left, above]{$a$} -- (0.95,0.9) node[midway, above]{Future};
	
	\draw[|->, dashed, thick] (0.95,0.9) node[left, above]{$b$} -- (1,0.9);
	
	
	\draw[dotted] (-0.1,0.8) -- (0.05,0.8);
	
	\draw[|-, dashed, color=red, very thick] (0.05,0.8) node[above]{$T_1$} -- (0.20,0.8);
	\draw[<->, color=red] (0.05,0.78) -- (0.20,0.78) node[midway, below]{$W_1$};
	\draw[|-, color=red, very thick] (0.20,0.8) node[above]{$Z_1$} -- (0.30,0.8);
	\draw[|-, color=red, very thick] (0.30,0.8) node[above]{$U_{1,1}$} node[draw,circle,below,fill=red!10] {$X_{1,1}$} node {} -- (0.45,0.8);
	\draw[|-, color=red, very thick] (0.45,0.8) node[above]{$U_{1,2}$} node[draw,circle,below,fill=red!10] {$X_{1,2}$} node {} -- (0.8,0.8);
	
	\draw[->, dashed, color=red, very thick] (0.8,0.8) -- (1,0.8);
	
	\draw[dotted] (-0.1,0.58) -- (0.25,0.58);
	
	\draw[|-, dashed, color=blue, very thick] (0.25,0.58) node[above]{$T_i$} -- (0.33,0.58);
	\draw[<->, color=blue] (0.25,0.56) -- (0.33,0.56) node[midway, below]{$W_i$};
	\draw[|-, color=blue, very thick] (0.33,0.58) node[above]{$Z_i$} -- (0.48,0.58);
	\draw[|-, color=blue, very thick] (0.48,0.58) node[above]{$U_{i,1}$} node[draw,circle,below,fill=blue!10] {$X_{i,1}$} node {} -- (0.58,0.58);
	\draw[|-, color=blue, very thick] (0.58,0.58) node[above]{$U_{i,2}$} node[draw,circle,below,fill=blue!10] {$X_{i,2}$} node {} -- (0.69,0.58);
	\draw[|-, color=blue, very thick] (0.69,0.58) node[above]{$U_{i,3}$} node[draw,circle,below,fill=blue!10] {$X_{i,3}$} node {} -- (0.8,0.58);
	
	\draw[->, dashed, color=blue, very thick] (0.8,0.58) -- (1,0.58);
	
	\draw[<->, color=blue] (0.33,0.65) -- (0.8,0.65) node[midway, above]{$V_i$};
	
	\draw[dotted] (-0.1,0.4) -- (0.15,0.4);
	
	\draw[|-, dashed, color=violet, very thick] (0.15,0.4) node[above]{$T_j$} -- (0.45,0.4);
	\draw[<->, color=violet] (0.15,0.38) -- (0.45,0.38) node[midway, below]{$W_j$};
	\draw[|-, color=violet, very thick] (0.45,0.4) node[above]{$Z_j$} -- (0.60,0.4);
	\draw[|-, color=violet, very thick] (0.60,0.4) node[above]{$U_{j,1}$} node[draw,circle,below,fill=violet!10] {$X_{j,1}$} node {} -- (0.77,0.4);
	\draw[|-, color=violet, very thick] (0.77,0.4) node[above]{$U_{j,2}$} node[draw,circle,below,fill=violet!10] {$X_{j,2}$} node {} -- (0.8,0.4);
	
	\draw[->, dashed, color=violet, very thick] (0.8,0.4) -- (1,0.4);
	
	\draw[dotted] (-0.1,0.22) -- (0.61,0.22);
	
	\draw[|-, dashed, color=orange, very thick] (0.61,0.22) node[above]{$T_k$} -- (0.73,0.22);
	\draw[<->, color=orange] (0.61,0.20) -- (0.73,0.20) node[midway, below]{$W_k$};
	\draw[|-, color=orange, very thick] (0.73,0.22) node[above]{$Z_k$} -- (0.8,0.22);
	
	\draw[->, dashed, color=orange, very thick] (0.8,0.22) -- (1,0.22);
	
	\draw[dotted] (-0.1,0.085) -- (0.75,0.085);
	
	\draw[|-, dashed, color=teal, very thick] (0.75,0.085) node[above]{$T_l$} -- (0.90,0.085);
	\draw[<->, color=teal] (0.75,0.065) -- (0.90,0.065) node[midway, below]{$W_l$};
	
	\draw[|->, dashed, color=teal, very thick] (0.9,0.085) node[above]{$Z_l$} -- (1,0.085);

	\draw[dash dot] (0.8,0.0) -- (0.8,0.9);
	
	\draw[dash dot] (0.95,0.0) -- (0.95,0.9);
	
	\draw[dash dot] (-0.1,0.05) -- (1.0,0.05);
	
	\draw[dash dot] (-0.1,0.15) -- (1.0,0.15);
	
	\end{tikzpicture}
	\caption{Scheme of the event occurrence process and the event development processes.}
	\label{fig:illustration}
\end{figure}

Our observation history for the reported losses is a~time interval $[0,a]$, where~$a$ is the present time. The main aim is to predict the losses, which are going to be reported in the future time horizon $(a,b]$, and simultaneously to predict the development of the losses within the time interval $(a,b]$, that have already occurred before time~$a$, but are not settled yet. Some of them are already incurred (i.e., occurred within $[0,a]$, but will be reported after time point~$a$). The observed loss data are \emph{truncated} in the way that we observe only the reported losses, i.e., $Z_i\leq a$. Without loss of generality, we assume that the reporting times are chronologically ordered such that $Z_{i_1}\leq Z_{i_2}$ for $i_1<i_2$. After some internal process is carried out, the company pays $N_i(t)$ payments till time point $t$, or $N_i(\infty)$ payments in order to fully settle the~$i$th loss. The amount of the $k$th payment within the $i$th loss paid at time $U_{i,k}$ is represented by $X_{i,k}$, for $k=1,\ldots,N_i(a)$. The time window from the reporting time $Z_i$ up to the last observed (available) time~$a$ for the $i$th loss has length $V_i$, i.e., $V_i=a-Z_i$. One can think of the reporting times $Z_i$'s as the arrival times of the counting process $\{M(t)\}_{t\geq 0}$ and the payment times $U_{i,k}$'s as the arrival times of the counting processes $\{N_i(t)\}_{t\geq 0}$ for $i\in\mathbb{N}$. Assuming that there are $i=1,\ldots,M(a)$ losses already reported, we observe a~collection
\[
\{T_i,Z_i,\{U_{i,j}\}_{j=1,\ldots,N_i(a)},\{X_{i,j}\}_{j=1,\ldots,N_i(a)}\}_{i=1,\ldots,M(a)}
\]
or, alternatively and equivalently, $\{Z_i,W_i,\{N_i(t)\}_{t\in[0,a]},\{X_{i,j}\}_{j=1,\ldots,N_i(a)}\}_{i=1,\ldots,M(a)}$.

\subsection{Motivation and applications}
The proposed class of models---\emph{infinitely stochastic processes}---is a~very rich and general class that nests, for examples, doubly stochastic (Cox) processes. Our approach and results are motivated in the context of several applications taken from the empirical economics literature.

\paragraph{Case~1: Operational risk}
Banks and other financial institutions have to face operational risk covering fraud, system failures, security, privacy protection, terrorism, legal risks, employee compensation claims, physical (e.g., infrastructure shutdown) or environmental risks. As pointed out in~\cite{OR2007}, some large banks prefer to use their own formal definition of operational risk. For instance, Deutsche Bank~(2017) defines operational risk as ``the risk of loss resulting from inadequate or failed internal processes, people and systems or from external events, and includes legal risk.'' Recent developments for operational risk prediction---comprehensively summarized by~\cite{BLM2018}---reveal that challenges like truncated data and non-homogeneous processes have to be handled in operational risk modeling. The empirical literature, for instance \cite{Cohen2018}, suggests that operational risk capital models can be based on the loss distribution approach. Generally, a~loss~$i$ corresponding to operational risk is occurred at~$T_i$, but is internally reported later at~$Z_i$. Consequently, compensations need to carried out by the bank to the affected side. For the loss~$i$, the compensations $X_{i,k}$'s are going to be paid at the times $U_{i,k}$'s. The bank is then required (e.g., by the Third Basel Accord) to quantify the future distribution of losses (measured through their compensations) belonging to operational risk.

\paragraph{Case~2: War damages}
Modeling of the evolution of national and international conflicts is a~long standing strand of research~\citep{GleditschMetternichRuggeri2014}, although data collection is a~very challenging task, see \cite{Arnold2019} and \cite{Cressey2008}. Based on the comprehensive databases as COPDAB \citep{Azar1980}, MIDLOC \citep{Braithwaite2010}, or PRIO \citep{Hallberg2012}, main approaches still remain to be classic econometric linear ones with a list of exogenous factors or those based on the hazard models, cf.~\cite{CollierHoefflerSoderbom2004}, \cite{Schrodt2014}, \cite{Clauset2014} and \cite{BakkeGreenhillWard2010}, \cite{HarrisonWolf2012}. As the list of current approaches was criticized by \cite{Schrodt2014}, namely ``garbage can models that ignore the effect of collinearity'', ``complex models without understanding the underlying assumptions'' or ``linear statistical monoculture'', we believe that our model can make a useful step forward in the conflict prediction. Using the proposed model, one considers each point~$T_i$ in the~$M$ process as the beginning of the tension between regions/countries. The followed up~$Z_i$ is the official beginning of the conflict, through the official notice or the first armed intrusion. This point (as the mark) starts a~\emph{spread the armed conflict} over a~series of battles~$k$ at time points~$U_{i,k}$ that take~$X_{i,k}$ lifes. The main assumptions of the process are fully in-line with the nature of the war: \ref{ass:Zs} ``there will always be another war'' and \ref{ass:Us} ``each war has an end''.

\paragraph{Case~3: Epidemics}
Proper modeling and prediction of the spread of epidemic is a~very important strand of literature, in particular in the view of recent H1N1 and Ebola epidemics. Classical models arising from modeling the online diffusions are those based on Susceptible-Infected-Recovered model introduced by \cite{Kermack27}, \cite{LindaAllen2008}, often extended to the stochastic case as in \cite{bobashev2007}, \cite{PingYan2008}, and others. Newly, these models were linked to the Hawkes processes in which spread of the disease in one population has been investigated, see \cite{SIRHawkes2018}. Considering \emph{several populations} (neighbor regions, countries, flight connections, etc.), the proposed model is a~natural flexible extension, in which each point~$T_i$ of the process is the infection of an~individual in the region~$i$. The delay~$W_i$ is thus the incubation period, after which the process of infection individuals in the population~$i$ starts, with individuals being infected at the points~$U_{i,k}$. The values~$X_{i,k}$ may be considered as the severity of the illness also converted to monetary quantities.

\paragraph{Case~4: Drug prescription}
Health care expenditures have become one of the most serious issues of the modern society and \emph{prescribed medicaments} seem to form one of the fastest growing component of the health insurance expenses. Managed care organizations encourage physicians to be more cost-conscious, see, e.g., \cite{miller}. They use financial incentives to induce physicians to reduce expenses while maintaining the quality of medical care. \emph{General practitioners (GP)} comprise a~significant part of the health care system and influence importantly the total amount of insurance money spent. Therefore, the \emph{prescription behaviour} of GPs is of utmost importance. It has been studied from the point of view of the pharmaceutical firms in several studies, see~\cite{gonul}, \cite{manchanda}, and other references therein; or from the point of view of the health insurance companies, e.g., \cite{HPH2017}. Furthermore, the prescription patterns and the influencing factors have been analyzed in~\cite{ekedahl}, \cite{rokstad}, \cite{watkins}, or~\cite{Caldbick2015}. One may think of a~spread of disease/illness as the occurrence time $T_i$ of event and, correspondingly, a~visit to the GP as the reporting time $Z_i$. Expenses for the prescribed drugs---sometimes more than one medical examination by the GP is needed (at times $U_{i,k}$)---are then the event payments $X_{i,k}$. After all, the responsible health care financing organization is interested to know the future expenditures for prescribed medicaments by the GPs within a~predetermined time horizon.

\paragraph{Case~5: Startups}
A~recent entrepreneurship bloom prompts for another straightforward application of the proposed micro forecasting method. Many well-known multinational companies leading the global market these days have begun their business in terms of small and locally based startups with only very limited human, social, and financial capital. These are, although, crucial factors for the future startup performance and its ability to survive~\citep{BPTW2004}. Especially the last one---the financial capital---turns out to play the most significant role for establishing an~entrepreneur on the global market~\citep{CG2008}. The initial financial capital is, however, usually not sufficient to start operations at the desired scale as the credit constraints for bank loans are too strict~\citep{HJR1994}. Therefore, the startup team looks for additional external sources of equity capital (such as external support, collaboration, fund raising, etc.), which is credited later over time. When modeling the overall impact of the startup in terms of its frontier production, this additional capital should be also considered~\citep{AKS1977}. Thus, the \emph{future cash flow} is of the main interest. Using our terminology, a~new entrepreneur~$i$ starts with its business after some waiting time~$W_i$ and additional capital amounts~$X_{i,k}$'s arrive at the times~$U_{i,k}$'s. The whole scenario can be analogously also adapted, for instance, for modeling the human capital, social investments, or business performance of the startups.


\paragraph{Case~6: Advertising and commercials}
In recent years, television and internet advertising have become increasingly tailored to individuals. Television commercials simply rely on so-called contextual advertising, where ads are chosen based upon the broadcast contents \citep{LTW2015}. More sophisticated ad placement techniques use online behavioral advertising or targeting advertising, which are typical for internet commercials and are based on the browsing history, online activities, or web searches \citep{GT2011}. \emph{Shopping behavior and shopping patterns} have recently been analyzed by economists in order to better understand the process through which consumers search for their preferred options \citep{BURDA2012,XIAO2018}. Let us concentrate on the perspective of a~company selling a~specific product over the Internet. A~moment, when some ad is placed over the Internet by another company running some website, can be thought of the occurrence time~$T_i$. The owner of the website is directly or indirectly paid for the advertisement by the product-selling company. Then, the reporting time~$Z_i$ naturally corresponds to the time when the ad is displayed or when it is recognized by some user (for example, the first click on it). If the user proceeds with a~purchase in the online store advertised by the ad, the payment time and the payment amount represent the mark of an~event development process. The product-selling company can consequently predict their future income based on the advertising and, thus, judge the efficiency of their commercial product placement.

\paragraph{Case~7: Apple Pay}
Digital payment platforms are multi-sided and layered modular artifacts that primarily mediate payment transactions between payers and payees~\citep{Kazan2015}. Apple Pay as a~payment method has become increasingly popular in everyday life during recent years. As remarked by~\cite{LIU2019268}, it simply boosts satisfaction through elevated coolness in a~successful encounter. \cite{LIU2015372} examined recent changes in the payment sector in financial services, specifically related to mobile payments that enable new channels for consumer payments for goods and services purchases, and other forms of economic exchange. Although, Apple Pay serves merely as a~proxy and mediator between cardholders' (card issuer) and merchants' (acquirer) bank accounts, a~card issuer (e.g., a~bank) is obliged to pay a~portion from the payment for such a~service. Therefore, from the \emph{card issuer's perspective}, it is of interest to determine the amount of charges for the Apple Pay service during the future time period. Here, the date of issuing a~payment card can be considered as the occurrence time $T_i$ of event, the date of registering the Apple Pay is the reporting time $Z_i$, and the corresponding purchases are the payment amounts $X_{i,k}$ realized at payment times $U_{i,k}$.


\paragraph{Case~8: Actuarial claims reserving}
An~\emph{insurance company} needs to predict future claims with corresponding payments and, additionally, future payments coming from already occurred claims, which do not have to be necessarily reported, and this is due to the current regulatory framework for insurance supervision (e.g., Solvency~II Directive or Swiss Solvency Test). A~\emph{lifetime} of a~claim can be characterized by the following variables that are driving the claim process: A~claim~$i$ (i.e., a~loss) occurs at the \emph{accident time}~$T_i$, however the insurance company is notified with some delay (heavy injuries that did not allow insured person report the claim, too light damages of the vehicle that allow an~insured person to postpone a~report, etc.) at the \emph{reporting time}~$Z_i$; the corresponding claim payments $X_{i,k}$'s are going to be paid by the insurance company to the insured at the payment times $U_{i,k}$'s. Finally, the distribution of cumulative payments within a~predetermined time window has to be predicted in order to settle the required claim reserves (e.g., Value at Risk or Expected Shortfall at $99.5\%$). Below, we concentrate in more details on the claims reserving task and \emph{exemplify the proposed methodology} through analyses of two insurance lines of business in order to demonstrate practical efficiency of our prediction method. 

\subsection{Outline}
This paper is structured as follow: Next section introduces the data and main stochastic objects we intend to model. Section~\ref{sec:theory} provides the assumptions and theory for occurrence and reporting times; number of payments; reporting delays; and payment amounts in different subsections. Section~\ref{sec:practical} contains a~practical application to the actuarial data. Afterwards, our conclusion follows. The proofs of our theoretical results are put in the Appendix.

\section{Granular loss reserving}\label{sec:granular}
A~classical actuarial problem called claims reserving is elaborated from an~emerging perspective. In contrast to the traditional claims reserving techniques based on aggregated information from historical data, our approach relies on granular individual claim-by-claim data and contributes to increase in the prediction's precision.

\emph{Claims (loss) reserving} in insurance determines a~sufficient amount of money, that needs to be put aside from the premium, to cover future claim (loss) payments. The main issue is to \emph{estimate/predict} these \emph{claims reserves}, which should be held by the insurer in order to meet all future claims arising from policies currently in force and policies written in the past. Claims reserving is a~classical problem in \emph{non-life insurance}, sometimes also called general insurance (in UK) or property and casual insurance (in USA). A~non-life insurance policy is a~contract between the insurer and the insured. The insurer receives a~deterministic amount of money, known as premium, from the insured in order to obtain a~financial coverage against well-specified randomly occurred events. If such an event (claim) occurs, the insurer is obliged to pay in respect of the claim a~claim amount, also known as a~loss amount. In layman's terms, if an~accident happens to an~insured person, he or she goes to the insurance company to request a~claim payment. The insurance company pays this claim amount from the loss reserves. In many cases several payments are performed for a~single accident, for instance, further health problems, hidden damages of the car that were not visible by the first inspection, etc.

Claims reserving methods based on \emph{aggregated data} from so-called run-off triangles are predominantly used to calculate the claims reserves, see~\cite{EV2002} or~\cite{wutrich_kniha} for an overview. Such models are not based on the particular claims or accidents, but rather on the aggregated overall payments through some predefined period, typically one year. These conventional reserving techniques have series of disadvantages: loss of information from the policy and the claim's development due to the aggregation; usually small number of observations in the aggregated data; only few observations for recent accident years; various assumptions of independence, which can sometimes be unrealistic or at least questionable; and sensitivity to the most recent paid claims, see \cite{PH2013} or \cite{PO2014} for some recent developments. In order to overcome the above mentioned deficiencies or imperfections, \emph{micro (granular) loss reserving methods for individual claim-by-claim data} need to be derived. Moreover, estimation of the whole distribution of the total future payments is a~crucial part of the risk valuation process. 

\subsection{Current status}
To estimate the distribution of reserves means to predict future cash flows and their uncertainty. On the top of that, this is becoming compulsory by the law due to the introduction of new supervisory guidelines. 

The loss reserving approaches based on individual/micro-level/granular/claim-by-claim data do not represent the mainstream in the reserving field. First attempts within the reserving framework of incorporating the claim information for reporting delays were using a~Bayesian approach~\citep{Jewell1989,Jewell1990} or an~empirical-Bayes approach~\citep{WTC1984}. Substantial branch of the individual loss reserving methods, that are based on a~position dependent marked Poisson process, involves work of \cite{Arjas}, \cite{Norberg,Norberg1999}, and \cite{HaastrupArjas}. A~Markov model for granular loss reserving was proposed by~\cite{Hesselager1994}. Including claims features to specify the model components within the setup of the marked point processes was revisited by~\cite{Larsen}. Empirical investigation by~\cite{AntonioPlat} indicates that individual reserving provides a~better accuracy compared to some selected aggregated models. A~discrete time formulation instead of the continuous time point process description was suggested by~\cite{GodecharleAntonio} and~\cite{PigeonAntonioDenuit}. Besides that, \cite{ZhaoZhouWang} and \cite{ZhaoZhou} proposed semiparametric techniques from survival analysis. Several case studies of the individual reserving approaches can be found in~\cite{TaylorMcGuireSullivan}. Machine learning techniques in the individual claims reserving were elaborated by~\cite{W2016}. Furthermore, \cite{VW2016} pointed out a~gain in using the individual methods by employing non-stationarity. Cox processes were utilized by~\cite{BLT2016}.


Practical loss prediction techniques often forfeit diagnostics of the theoretical models' assumptions. This is overcome by our approach, where we deal with nonlinear continuous time Markov environments~\citep{HS2009}. Generally, times of events together with accompanying measures can be analyzed as marked point processes, e.g., in case of ultra-high-frequency data, see~\cite{Engle2000}. Stochastic methods for modeling the total claim amount via marked Poisson cluster models have been recently proposed by~\cite{BWZ2018}. Here, the marks can take values only in a~finite-dimensional space and have a~common distribution. Our approach allows for point processes as marks, which is very suitable for a~practical application of unrestricted number of payments, and enables time-varying distributions for, e.g., payment dates, reporting delays, or payment amounts.

\subsection{Main goals}
This paper contributes to the literature by aiming at using \emph{all} the available information in the data. The proposed model motivated by the data controls for dependencies (between different payment amounts, between payments amounts and reporting delays, between reporting delay and accident date, etc.) in a~simple and natural way. We assume a~marked non-stationary Poisson process for the time ordered reporting dates; flexibly parametrized conditional distribution of the reporting delay and payment sizes given the accident date; or a~non-homogeneous Poisson process in the role of a~process' mark for the number of payments. All the models in this paper are supported with the asymptotic theory, and are combined in an~\emph{omnibus model}. Application to the true data strongly outperforms classical models in both point and interval forecast of the reserved losses. Up to the best of our knowledge, this is the first time where all the possible cross and temporal dependencies of the claim data are taken into account.

\subsection{Data}\label{sec:data}
Nowadays, modern databases and computer facilities provide a~foundation for loss reserving based on individual data. There is no more reason to rely on the reserving techniques using aggregated data only. We possess the unique database from the Guarantee Fund of the Czech Insurer's Bureau for car insurance which consist of claims developments from the beginning of 2004 up to the end of 2016. Each record in the data set contains:
\begin{itemize}
\item Claim ID (if one claim is associated with more payments, each payment has a~separate entry);
\item Type of claim, which can be either \emph{bodily injury} or \emph{material damage};
\item Accident time (occurrence);
\item Reporting time (notification);
\item Date of payment, when the payment is credited to the client's bank account;
\item Amount of payment.
\end{itemize}
%
All in all we have 4450 claims comprising of 10820 payments for bodily injuries and 30545 claims distributed into 35642 payments for material injuries within the investigated time interval. For \emph{back-testing} purposes, we only use the data up to the end of 2015 to construct the prediction. The data from 2016 are only employed for comparison purposes with the obtained results. 





\section{Theoretical framework}\label{sec:theory}
The size of the claims reserves protects insurance company against the future losses. Bearing this practical issue in mind, we propose a~series of theoretical models describing each component of the claim's chain. We assume that the reporting dates $Z_i$'s follow a~non-homogeneous Poisson process with a~parametric intensity function; the reporting delays $W_i$'s follows a~time-varying continuous parametric distribution conditional on the reporting dates; the payment dates for each claim~$i$ are represented by arrival times of a~non-homogeneous Poisson process $N_i(t)$, where $N_i$ is a~mark of $M$; and the payment amounts $X_{i,j}$'s are modeled similarly to the reporting delays via a~time-varying parametric conditional distribution. All the models are then brought together under one umbrella in the empirical study.

Recently, \cite{GIESECKE2018} have discussed marked point processes with applications in finance and economics to model the timing of defaults, corporate bankruptcies, market transactions, unemployment spells, births, and mortgage delinquencies. They developed likelihood estimators for the parameters of a~marked point process and incompletely observed explanatory factors that influence the arrival intensity and mark distribution, although they presumed only finite dimensional marks. We go beyond therein investigated models by assuming infinite dimensional marks via different stochastic framework. We establish an approximation to the likelihood and analyze the convergence and large-sample properties of the associated estimators. Numerical results illustrate the behavior of our estimators.

Recall that our \emph{primary practical goal} is to model and, consequently, to simulate a~distribution of the sum of future payments within the time period $(a,b]$. Besides that, the \emph{secondary practical goal} is to back-predict claims that have already occurred, but are still not reported. As a~theoretical by-product, we investigate marked non-homogeneous Poisson processes with infinite dimensional marks, which has not been done yet.

\subsection{Occurrence and reporting times}
As for the insurance company it is not so relevant, when the accident has happened, but rather when it has been reported, as on that date the whole procedure of the claim payments starts. 

We proceed to the assumptions on the reporting dates $Z_i$'s that are needed for showing existence, uniqueness, and asymptotic properties of the proposed estimators. Taking into account, that different claims are supposed to be unrelated, it is natural to assume that the time differences $\{Z_i-Z_{i-1}\}_{i\in\mathbb{N}}$ are independent ($Z_0\equiv 0$). Thus, the reporting dates $Z_i$'s can be viewed as the arrival times of a~counting process with independent increments. A~reasonable and parsimonious representative would be a~non-homogeneous Poisson process.

\begin{assumpM}\label{ass:Zs}
The time ordered reporting times $\{Z_i\}_{i\in\mathbb{N}}$ are arrival times of a~non-homoge\-neous Poisson process $\{M(t)\}_{t\geq 0}$ with a~parametric intensity $\psi(t;\brho)>0$ such that $M(t)=\sum_{i=1}^{\infty}\mathbbm{1}\{Z_i\leq t\}$, $\brho\in{\boldsymbol R}\subseteq\mathbb{R}^q$, and ${\boldsymbol R}$ is an~open convex set.
\end{assumpM}

The reporting epochs $\{Z_i\}_{i\in\mathbb{N}}$ are reversely determined by the counts $\{M(t)\}_{t\geq 0}$ such that $Z_i=\inf_{t\geq 0}\{M(t)\geq i\}$. The intensity $\psi(t;\brho)$ can be considered as a~\emph{risk exposure} for an~accident reporting (not occurring) in time~$t$. Although, one may still argue that it should be more convenient to assume that the occurrence (accident) times, and not the reporting times, should form the arrival times of some non-homogeneous Poisson process. This is indeed in concordance with the parametric time-varying conditional density~$f_W$ for the reporting delay~$W_i$ (given $Z_i=z$) defined later on, which results in the fact that $\{T_i\}_{i\in\mathbb{N}}$ are arrival times of another non-homogeneous Poisson process having the intensity
\begin{equation}\label{eq:displacement}
\mu(t;\brho,\bvartheta)=\int_{\mathbb{R}}\psi(z;\brho)f_{W}\{t;w(z,\bvartheta)\}\ud z,
\end{equation}
because of the \emph{displacement theorem} \citep[p.~61]{Kingman1993}. Thus, $\mu(t;\brho,\bvartheta)$ is just a~risk exposure for an~accident occurring in time~$t$.

We should emphasize that similarly to the below derived statistical inference for the non-homogeneous Poisson process, many other authors dealt with consistent estimation of the process intensity. To mention at least some of them, we refer to~\cite{Konecny1987}, \cite{SCHOENBERG2005}, \cite{Waag2007}, \cite{WG2009}, \cite{coeurjolly2014}, and~\cite{PDV2017}, although sometimes in a~more general setup. There are two main reasons why we derive consistency and asymptotic normality of the intensity estimator in a~different fashion: First, it is of a~practical interest to require simple assumptions, which are easily verifiable and allowing for a~huge class of parametric intensities. Second, our theoretical results and ways of proving them serve as an~intermediate product for developing a~suitable statistical inference for the marked non-homogeneous Poisson process with marks being non-homogeneous Poisson processes (discussed in Subsection~\ref{subsec:NumberOfPayments}).

Since we consider a~fully parametric approach, it is firstly necessary to estimate the unknown parameter~$\brho$. We employ the maximum likelihood (ML) approach for the arrival times. The unconditional likelihood in case of $M(t)$, when the last observable (deterministic) time is~$t$, has the form
\begin{equation}\label{eq:likelihood}
L(\brho;Z_1,\ldots,Z_{M(t)},t)=\exp\{-\Psi(t;\brho)\}\prod_{i=1}^{M(t)} \psi(Z_i;\brho),\quad 0<Z_1<\ldots<Z_{M(t)}<t,
\end{equation}
where $\Psi(t;\brho):=\int_{0}^t\psi(z;\brho)\ud z=\E M(t)$ is a~\emph{cumulative intensity} function.

Maximizing the log-likelihood function
\begin{equation}\label{eq:llWZ}
\ell(\brho;\bZ,t)=\sum_{i=1}^{M(t)}\log\psi(Z_i;\brho)-\int_0^t\psi(z;\brho)\ud z
\end{equation}
with respect to $\brho$ provides an~ML estimator $\widehat{\brho}$. The true value $\brho_0\in{\boldsymbol R}$ of the unknown parameter $\brho$ is supposed to uniquely maximize $\E_{\brho}\ell(\brho;\bZ,t)$. The uniqueness of~$\brho_0$ is essential for identifiability and, consequently, for consistency and asymptotic normality of the ML estimator~\citep{White1982}. These properties are proved later on. Although, one has to realize that we are dealing with \emph{not independent and not identically distributed} (n.i.n.i.d.) random variables (i.e., arrival times). 
Let us define
\[
h(Z_i;\brho,t):=\frac{1}{M(t)}\Psi(t;\brho)-\log \psi(Z_i;\brho),
\]
which means that
\begin{equation}\label{eq:MLErho}
\widehat{\brho}=\arg\min_{\brho\in{\boldsymbol R}} \sum_{i=1}^{M(t)} h(Z_i;\brho,t).
\end{equation}
Assume the following to hold with respect to the $h(Z_i;\brho,t)$ functions in order to obtain a~sensible estimator (i.e., consistent and asymptotically normal).

\begin{assumpM}\label{ass:convex}
$h(z;\brho,t)$ is convex in $\brho\in{\boldsymbol R}$ for all $0<z<t$.
\end{assumpM}

The convexity of~$h$ from Assumption~\ref{ass:convex} contributes to assurance that there exists a~unique optimum, namely some function $\widehat{\brho}\equiv\widehat{\brho}(t)$.

For simplicity of further notations, let us denote $[\cdot][\cdot]^{\top}\equiv[\cdot]^{\otimes 2}$, $\partial_{\brho_0}\equiv\frac{\partial}{\partial\brho}\left[\cdot\right]_{\brho=\brho_0}$, $\partial^2_{\brho_0}\equiv\frac{\partial^2}{\partial\brho\partial\brho^{\top}}\left[\cdot\right]_{\brho=\brho_0}$, $\partial_{\brho_0,i}\equiv\frac{\partial}{\partial\rho_i}\left[\cdot\right]_{\brho=\brho_0}$, and $\partial^2_{\brho_0,i,j}\equiv\frac{\partial^2}{\partial\rho_i\partial\rho_j}\left[\cdot\right]_{\brho=\brho_0}$ for $\brho=[\rho_1,\ldots,\rho_q]^{\top}$. Symbol~$\zero$ stands for a~zero vector and ${\boldsymbol I}$ means an~identity matrix with a~suitable dimension. Furthermore for $t>0$, let us define an~information matrix $\mathcal{I}(t;\brho_0):=\int_0^t\frac{\{\partial_{\brho_0}\psi(z;\brho)\}^{\otimes 2}}{\psi(z;\brho_0)}\ud z$ and a~matrix $\mathcal{K}(t,\brho_0):=\frac{1}{\sqrt{\psi(t;\brho_0)}}\left[\partial^2_{\brho_0}\psi(t;\brho)-\frac{\{\partial_{\brho_0}\psi(t;\brho)\}^{\otimes 2}}{\psi(t;\brho_0)}\right]$.

\begin{assumpM}\label{ass:interchange1}
$\frac{\partial^2}{\partial\brho\partial\brho^{\top}}\psi(\cdot;\brho):\,\mathbb{R}^+\to\mathbb{R}^{q\times q}$ is continuous for all~$\brho\in{\boldsymbol R}$ and there exist Lebesgue-integrable functions $m_{1,i}$ and $m_{2,i,j}$ such that
\[
\left|\frac{\partial}{\partial\rho_i}\psi(t;\brho)\right|\leq m_{1,i}(t)\quad\mbox{and}\quad\left|\frac{\partial^2}{\partial\rho_i\partial\rho_j}\psi(t;\brho)\right|\leq m_{2,i,j}(t)
\]
for all $\brho\in{\boldsymbol R}$, almost every $t>0$, and $i,j=1,\ldots,q$.
\end{assumpM}

The definition of $\mathcal{I}(t;\brho_0)$, the uniqueness of the true value $\brho_0\in{\boldsymbol R}$, and the differentiability of $\psi(t;\cdot)$ from Assumption~\ref{ass:interchange1} ensure that $\mathcal{I}(t;\brho_0)$ is positive definite. Moreover, the integrable majorants from Assumption~\ref{ass:interchange1} together with the smoothness of~$\psi$ allow to interchange the integral and the derivative of~$\psi$.

\begin{assumpM}\label{ass:infmat}
As $t\to\infty$,
\begin{enumerate}
\item $M(t)\mathcal{I}^{-1}(t,\brho_0)$ converges in probability to a~positive semidefinite matrix;
\item $\int_0^{t}\left\{\mathcal{I}^{-1/2}(t,\brho_0)\mathcal{K}(z,\brho_0)\mathcal{I}^{-1/2}(t,\brho_0)\right\}^2\ud z\to\zero$.
\end{enumerate}
\end{assumpM}

To check whether Assumption~\ref{ass:infmat}~(i) holds, one needs, for instance, to verify the following two relations: $\mathcal{I}^{-1}(t,\brho_0)\E M(t)=\mathcal{I}^{-1}(t,\brho_0)\Psi(t;\brho_0)$ converges to a~positive semidefinite matrix, which may also be a~zero matrix; and $\Var \big\{\left(\mathcal{I}^{-1}(t,\brho_0)\right)_{i,j}M(t)\big\}=\left(\mathcal{I}^{-1}(t,\brho_0)\right)_{i,j}^2\Psi(t;\brho_0)\to 0$ as $t\to\infty$ for all $i,j=1,\ldots,q$. By the Cauchy-Schwarz inequality and equations~\eqref{eq:jkelement}--\eqref{eq:EjlEkm} from the proof of the consequent Theorem~\ref{thm:consistency}, Assumption~\ref{ass:infmat}~(ii) is satisfied if for all $j,k,\ell,m=1,\ldots,q$ holds
\begin{align*}
&\left(\mathcal{I}^{-1/2}(t,\brho_0)\right)_{j,k}^2\left(\mathcal{I}^{-1/2}(t,\brho_0)\right)_{\ell,m}^2\\
&\ \times\int_0^{t}\frac{1}{\psi(z;\brho_0)}\left\{\partial_{\brho_0,k,\ell}^2\psi(z;\brho)-\frac{\partial_{\brho_0,k}\psi(z;\brho)\partial_{\brho_0,\ell}\psi(z;\brho)}{\psi(z;\brho_0)}\right\}^2\ud z\to 0,\quad t\to\infty.
\end{align*}

At first sight, technical Assumptions~\ref{ass:interchange1} and~\ref{ass:infmat} have also a~practical impact on admissibility of the intensity function~$\psi$ and on the amount of information about the parameter~$\brho$ contained in the process~$M$. Basically, the intensity~$\psi$ has to be sufficiently smooth and adequately regular with respect to the information matrix function~$\mathcal{I}$. In practice, we do not allow for too `wild' and too quickly changing behavior of the process of reporting times.

\begin{theorem}[Consistency I]\label{thm:consistency}
Under Assumptions~\ref{ass:Zs}--\ref{ass:infmat},
\[
\mathcal{I}^{1/2}(t,\brho_0)\left(
\widehat{\brho}-\brho_0\right)=-\mathcal{I}^{-1/2}(t,\brho_0)\sum_{i=1}^{M(t)}\partial_{\brho_0}h\left(Z_i;\brho_0,t\right)+o_{\prob}(1),\quad t\to\infty.
\]
\end{theorem}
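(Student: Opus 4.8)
The statement is a standard Taylor-expansion linearization of the ML estimator, so the plan is to follow the classical M-estimation route, being careful that we are in an n.i.n.i.d.\ setting. First I would write out the score equation: since $\widehat{\brho}$ minimizes $\sum_{i=1}^{M(t)} h(Z_i;\brho,t)$ by~\eqref{eq:MLErho} and lies in the open convex set~${\boldsymbol R}$ (existence and uniqueness being guaranteed by the convexity in Assumption~\ref{ass:convex}), the first-order condition reads $\sum_{i=1}^{M(t)}\partial_{\brho}h(Z_i;\widehat{\brho},t)=\zero$. Then I would Taylor-expand this score around~$\brho_0$ to first order, obtaining
\[
\zero=\sum_{i=1}^{M(t)}\partial_{\brho_0}h(Z_i;\brho_0,t)+\left[\sum_{i=1}^{M(t)}\partial^2_{\brho_0}h(Z_i;\brho,t)\big|_{\brho=\widetilde{\brho}}\right](\widehat{\brho}-\brho_0),
\]
with $\widetilde{\brho}$ on the segment between $\widehat{\brho}$ and $\brho_0$ (valid componentwise by the mean value theorem, using the $C^2$ smoothness of $\psi$ from Assumption~\ref{ass:interchange1}, which transfers to~$h$). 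Rearranging gives $\mathcal{I}^{1/2}(t,\brho_0)(\widehat{\brho}-\brho_0)=-\mathcal{I}^{-1/2}(t,\brho_0)\sum_{i}\partial_{\brho_0}h(Z_i;\brho_0,t)$ multiplied on the left by a ``sandwich'' factor $\big[\mathcal{I}^{-1/2}(t,\brho_0)(\sum_i\partial^2 h)\mathcal{I}^{-1/2}(t,\brho_0)\big]^{-1}$, so the whole task reduces to showing that this normalized Hessian converges in probability to the identity matrix~${\boldsymbol I}$.

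The core computation is therefore to evaluate $\partial^2_{\brho_0}h(Z_i;\brho_0,t)$. Since $h(z;\brho,t)=\frac{1}{M(t)}\Psi(t;\brho)-\log\psi(z;\brho)$, we get $\sum_{i=1}^{M(t)}\partial^2_{\brho_0}h(Z_i;\brho_0,t)=\partial^2_{\brho_0}\Psi(t;\brho)-\sum_{i=1}^{M(t)}\partial^2_{\brho_0}\log\psi(Z_i;\brho)$. I would split $-\partial^2\log\psi=\frac{\{\partial_{\brho_0}\psi\}^{\otimes2}}{\psi^2}-\frac{\partial^2_{\brho_0}\psi}{\psi}$ and interchange the $\brho$-derivative with the $z$-integral in $\Psi$ (legitimate by the integrable majorants $m_{1,i},m_{2,i,j}$ of Assumption~\ref{ass:interchange1}), so that $\partial^2_{\brho_0}\Psi(t;\brho)=\int_0^t\partial^2_{\brho_0}\psi(z;\brho)\,\ud z$. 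The idea is that the $\frac{\{\partial\psi\}^{\otimes2}}{\psi^2}$ term summed over $Z_i$ is a Poisson functional whose compensator is exactly $\int_0^t\frac{\{\partial_{\brho_0}\psi(z;\brho)\}^{\otimes2}}{\psi(z;\brho_0)}\,\ud z=\mathcal{I}(t,\brho_0)$, so after sandwiching by $\mathcal{I}^{-1/2}(t,\brho_0)$ the expectation of this piece is~${\boldsymbol I}$; meanwhile the remainder — the difference between $\int_0^t\frac{\partial^2_{\brho_0}\psi}{\psi}\ud z$ evaluated along the process and its compensator, together with the $\mathcal{K}$-type terms — has variance controlled by Assumption~\ref{ass:infmat}~(ii), and the centering/normalization by $\mathcal{I}^{-1}(t,\brho_0)$ times the process count is handled by Assumption~\ref{ass:infmat}~(i). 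Concretely I would compute, for fixed $t$, the mean and variance of each matrix entry of $\mathcal{I}^{-1/2}(t,\brho_0)\big(\sum_i\partial^2_{\brho_0}h(Z_i;\brho_0,t)\big)\mathcal{I}^{-1/2}(t,\brho_0)$ using the Poisson-process moment formulas (expectation and variance of $\sum_i g(Z_i)$ are $\int_0^t g\,\psi$ and $\int_0^t g^2\,\psi$), displaying the intermediate identities that the excerpt refers to as~\eqref{eq:jkelement}--\eqref{eq:EjlEkm}, and conclude convergence in probability to~${\boldsymbol I}$ by Chebyshev.

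With the normalized Hessian shown to converge to~${\boldsymbol I}$, I would finish by a continuous-mapping / Slutsky argument: its inverse also converges to~${\boldsymbol I}$, so the sandwich factor is $\boldsymbol{I}+o_{\prob}(1)$, and multiplying it against the (bounded in probability) leading term $\mathcal{I}^{-1/2}(t,\brho_0)\sum_i\partial_{\brho_0}h(Z_i;\brho_0,t)$ produces the claimed expansion with an $o_{\prob}(1)$ error. I expect the main obstacle to be the Hessian-consistency step — specifically, controlling the Taylor remainder uniformly enough that evaluating $\partial^2 h$ at the random intermediate point~$\widetilde{\brho}$ rather than at~$\brho_0$ is asymptotically negligible, and verifying that the ``second-derivative minus compensator'' fluctuation vanishes after the $\mathcal{I}^{-1/2}$ normalization; this is exactly where Assumption~\ref{ass:infmat} does the work, and where one must be careful that, unlike in the i.i.d.\ case, there is no law of large numbers available and everything must go through first and second moments of Poisson functionals and Chebyshev's inequality.
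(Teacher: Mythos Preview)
Your moment computations for the normalized score and Hessian are essentially the same as the paper's: the paper also shows $\E\bV(t)={\boldsymbol I}$ and $\Var\{\bss^{\top}\bV(t)\bss\}\to 0$ via exactly the Poisson-functional identities you describe (these are the paper's~\eqref{eq:jkelement}--\eqref{eq:EjlEkm}), and likewise $\E\bU(t)=\zero$, $\Var\bU(t)={\boldsymbol I}$ so that the leading term is $O_{\prob}(1)$. On that part there is nothing to criticize.

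The architectural difference is that the paper does \emph{not} Taylor-expand the score and invoke a mean-value intermediate point~$\widetilde{\brho}$. Instead it reparametrizes the objective, setting
\[
H_t(\bss)=\sum_{i=1}^{M(t)}\left[h\left\{Z_i;\brho_0+\mathcal{I}^{-1/2}(t,\brho_0)\bss,t\right\}-h(Z_i;\brho_0,t)\right],
\]
which is convex in~$\bss$ by Assumption~\ref{ass:convex} and is minimized at $\mathcal{I}^{1/2}(t,\brho_0)(\widehat{\brho}-\brho_0)$. A second-order expansion of $H_t$ at $\bss=\zero$ gives $H_t(\bss)=\bss^{\top}\bU(t)+\tfrac12\bss^{\top}\bV(t)\bss+r_t(\bss)$ with $r_t(\bss)=M(t)\,o\{\bss^{\top}\mathcal{I}^{-1}(t,\brho_0)\bss\}\to 0$ in probability for each fixed~$\bss$ by Assumption~\ref{ass:infmat}(i). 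Since $\bV(t)\to{\boldsymbol I}$ in probability, $H_t(\bss)$ converges pointwise to the strictly convex quadratic $\bss^{\top}\bU(t)+\tfrac12\|\bss\|^2$; the Basic Corollary of \cite{HjortPollard2011} then yields that the argmin of $H_t$ differs from $-\bU(t)$ by $o_{\prob}(1)$, which is precisely the assertion.

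This is not merely cosmetic. Your route needs, at the step you yourself flag as the ``main obstacle'', that the Hessian evaluated at the random~$\widetilde{\brho}$ is close to its value at~$\brho_0$ after $\mathcal{I}^{-1/2}$-sandwiching. That requires $\widetilde{\brho}\to\brho_0$, i.e.\ consistency of~$\widehat{\brho}$, which is exactly what the theorem is establishing; and Assumption~\ref{ass:infmat} only controls quantities at~$\brho_0$, not uniformly over a neighborhood. So as written there is a circularity you have identified but not closed. The paper's use of convexity is what breaks it: convexity upgrades pointwise convergence of~$H_t$ to locally uniform convergence and hence convergence of minimizers, delivering consistency and the linear representation in one stroke without any intermediate-point argument. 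In other words, Assumption~\ref{ass:convex} is doing more than guaranteeing a unique optimum---it is the engine of the whole proof via Hjort--Pollard. If you want to salvage the score-equation route under the stated assumptions, you would first need an independent consistency proof (e.g.\ again via convexity) or an additional uniform-in-$\brho$ version of Assumption~\ref{ass:infmat}.
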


Let us discretize the `continuous' time $t\in\mathbb{R}_0^+$ for the process $\{M(t)\}_{t\geq 0}$ in a~way that one observes~$M$ only at all discrete time points $a\in\mathbb{N}$. This is indeed in concordance with the nature of our practical problem, where we evaluate the number of reported claims at the end of the calendar year represented by a~discrete value of~$a$.

Additionally, the next \emph{Lindeberg condition} can extend the assertion of Theorem~\ref{thm:consistency}.
\begin{assumpM}\label{ass:Lind}
$\lim_{a\to\infty}\sum_{i=1}^{a}\E\left({\boldsymbol d}^{\top}\bY_i\right)^2\mathbbm{1}\{|{\boldsymbol d}^{\top}\bY_i|\geq \eps\|{\boldsymbol d}\|_2\}=0$ for all ${\boldsymbol d}\in\mathbb{R}^{q}$ and $\eps>0$, 
where $\bY_i:=\mathcal{I}^{-1/2}(a,\brho_0)\int_{i-1}^{i}\left\{\partial_{\brho_0}\log\psi(z;\brho)\right\}\left(\ud M(z)-\psi(z;\brho_0)\ud z\right)$.
\end{assumpM}

For practical verification purposes, one can assume a~version of the \emph{Lyapunov condition} instead of the Lindeberg one. For instance, for all ${\boldsymbol d}\in\mathbb{R}^{q}$, there exists $\delta>0$ such that $\lim_{a\to\infty}\sum_{i=1}^{a}\E\left|{\boldsymbol d}^{\top}\bY_i\right|^{2+\delta}=0$. On one hand, the Lyapunov condition is more restrictive than the Lindeberg one, on the other hand, it is easier to verify.

\begin{corollary}[Asymptotic normality I]\label{cor:asnorm}
Under Assumptions~\ref{ass:Zs}--\ref{ass:Lind},
\[
\mathcal{I}^{1/2}(a,\brho_0)\left(
\widehat{\brho}-\brho_0\right)\xrightarrow[a\to\infty]{\dist}\mathsf{N}_{q}\left(\zero,{\boldsymbol I}\right).
\]
\end{corollary}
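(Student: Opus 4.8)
The plan is to combine Theorem~\ref{thm:consistency} with a martingale central limit theorem. Theorem~\ref{thm:consistency} reduces the problem to showing that
\[
\mathcal{I}^{-1/2}(a,\brho_0)\sum_{i=1}^{M(a)}\partial_{\brho_0}h\left(Z_i;\brho_0,a\right)\xrightarrow[a\to\infty]{\dist}\mathsf{N}_{q}(\zero,{\boldsymbol I}),
\]
since $o_{\prob}(1)$ terms do not affect the limit distribution. So the first step is to rewrite this score-type sum as a stochastic integral against the compensated counting measure. Recalling $h(Z_i;\brho,a)=\frac{1}{M(a)}\Psi(a;\brho)-\log\psi(Z_i;\brho)$ and that $\partial_{\brho_0}$ of the first summand summed over $i=1,\ldots,M(a)$ yields $\partial_{\brho_0}\Psi(a;\brho)=\int_0^a\partial_{\brho_0}\psi(z;\brho)\,\ud z$ (interchanging derivative and integral is licensed by Assumption~\ref{ass:interchange1}), I would write
\[
-\sum_{i=1}^{M(a)}\partial_{\brho_0}h(Z_i;\brho_0,a)=\sum_{i=1}^{M(a)}\partial_{\brho_0}\log\psi(Z_i;\brho)-\int_0^a\partial_{\brho_0}\psi(z;\brho_0)\,\ud z=\int_0^a\left\{\partial_{\brho_0}\log\psi(z;\brho)\right\}\left(\ud M(z)-\psi(z;\brho_0)\,\ud z\right),
\]
using $\partial_{\brho_0}\psi=\psi_0\,\partial_{\brho_0}\log\psi$. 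Thus the quantity of interest equals $\sum_{i=1}^{a}\bY_i$ with $\bY_i$ exactly as defined in Assumption~\ref{ass:Lind} — here the discretization $a\in\mathbb{N}$ is what lets us chop the stochastic integral into the row-wise array $\{\bY_i\}_{i=1}^a$.

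\textbf{Second step: verify the martingale array CLT hypotheses.} The increments $\int_{i-1}^{i}\{\partial_{\brho_0}\log\psi\}(\ud M-\psi_0\,\ud z)$ are increments of a mean-zero martingale (the compensated Poisson integral), so for fixed $a$ the array $\{\bY_i\}_{i=1,\ldots,a}$ is a martingale-difference array with respect to the natural filtration. To apply the Cramér–Wold device, fix ${\boldsymbol d}\in\mathbb{R}^q$ and consider the scalar array ${\boldsymbol d}^\top\bY_i$. Two things are needed: (a) the conditional (or at least unconditional) variances sum to ${\boldsymbol d}^\top{\boldsymbol d}$ in the limit, i.e. $\sum_{i=1}^a\Var({\boldsymbol d}^\top\bY_i)\to\|{\boldsymbol d}\|_2^2$; and (b) the Lindeberg condition, which is precisely Assumption~\ref{ass:Lind}. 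For (a), by the Poisson integral isometry $\Var\big(\int_{i-1}^i g(z)(\ud M-\psi_0\,\ud z)\big)=\int_{i-1}^i g(z)^2\psi_0(z)\,\ud z$ (for vector $g$, the covariance matrix is $\int g g^\top\psi_0$), so
\[
\sum_{i=1}^a\mathsf{Cov}\left(\int_{i-1}^i\{\partial_{\brho_0}\log\psi\}(\ud M-\psi_0\,\ud z)\right)=\int_0^a\frac{\{\partial_{\brho_0}\psi(z;\brho)\}^{\otimes 2}}{\psi(z;\brho_0)}\,\ud z=\mathcal{I}(a,\brho_0),
\]
hence $\sum_{i=1}^a\mathsf{Cov}(\bY_i)=\mathcal{I}^{-1/2}(a,\brho_0)\mathcal{I}(a,\brho_0)\mathcal{I}^{-1/2}(a,\brho_0)={\boldsymbol I}$ exactly, so ${\boldsymbol d}^\top\bigl(\sum_i\mathsf{Cov}(\bY_i)\bigr){\boldsymbol d}=\|{\boldsymbol d}\|_2^2$ for every $a$ — the variance normalization is automatic. (One should note the integrals are finite: positive-definiteness of $\mathcal{I}(a,\brho_0)$ and the majorants $m_{1,i}$ of Assumption~\ref{ass:interchange1} guarantee this.) Then the Lindeberg–Feller CLT for martingale-difference arrays (e.g. the Hall–Heyde theorem, or for independent-increment arrays the classical Lindeberg–Feller CLT, since the Poisson increments over disjoint unit intervals are independent) delivers ${\boldsymbol d}^\top\sum_{i=1}^a\bY_i\xrightarrow{\dist}\mathsf{N}(0,\|{\boldsymbol d}\|_2^2)$.

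\textbf{Third step: assemble.} Cramér–Wold gives $\sum_{i=1}^a\bY_i\xrightarrow{\dist}\mathsf{N}_q(\zero,{\boldsymbol I})$, and combining with Theorem~\ref{thm:consistency} (applied along $t=a\in\mathbb{N}$) via Slutsky yields $\mathcal{I}^{1/2}(a,\brho_0)(\widehat\brho-\brho_0)=\sum_{i=1}^a\bY_i+o_{\prob}(1)\xrightarrow{\dist}\mathsf{N}_q(\zero,{\boldsymbol I})$. \textbf{The main obstacle} is the bookkeeping in the first step — making sure the discretized stochastic-integral representation $-\sum_{i=1}^{M(a)}\partial_{\brho_0}h(Z_i;\brho_0,a)=\sum_{i=1}^a\bY_i\,\mathcal{I}^{1/2}(a,\brho_0)$ (up to the normalization) is exact, i.e. that the $\frac{1}{M(a)}\Psi(a;\brho)$ term in $h$ contributes the deterministic compensator $\int_0^a\partial_{\brho_0}\psi(z;\brho_0)\,\ud z$ with no leftover terms. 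After that, the CLT is essentially off-the-shelf because the independent-increments structure of the Poisson process makes the conditional-variance (stability) condition hold identically rather than merely in probability; the only genuine hypothesis imposed is the Lindeberg/Lyapunov condition, which is assumed. One subtlety worth a line is that the upper limit of summation is the random $M(a)$ rather than a deterministic index, but rewriting everything as $\int_0^a(\cdots)(\ud M-\psi_0\,\ud z)$ sidesteps this since the stochastic integral already accounts for the random number of jumps.
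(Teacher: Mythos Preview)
Your proposal is correct and follows essentially the same route as the paper's own proof: rewrite the normalized score as $\sum_{i=1}^a\bY_i$ via the compensated Poisson integral, verify $\E\bY_i=\zero$ and $\sum_i\mathsf{Cov}(\bY_i)={\boldsymbol I}$ by the It\^{o} isometry, then invoke the Lindeberg CLT for the independent $\bY_i$'s together with Cram\'{e}r--Wold and Theorem~\ref{thm:consistency}. The paper uses the independence of the $\bY_i$'s directly (from disjoint Poisson increments) rather than framing it as a martingale array, but you already note that this suffices, so there is no substantive difference.
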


Let us consider cases of the constant and exponential intensity function.
\begin{example}
Intensity $\psi(z;\brho)=\rho$, which corresponds to a~homogeneous Poisson process. Here, ${\boldsymbol R}=(0,\infty)$ and the cumulative intensity is $\Psi(t,\rho)=\rho t$. The log-likelihood function is $\ell(\rho;\bZ,t)=M(t)\log\rho-\rho t$. The ML estimator is $\hat{\rho}=M(t)/t$, the information number becomes $\mathcal{I}(t;\rho_0)=t/\rho_0$, and $\mathcal{K}(t;\rho_0)=-\rho_0^{-3/2}$. Assumption~\ref{ass:infmat} is easily verifiable. For the Lyapunov condition, the choice of $\delta=1$ leads to $\sum_{i=1}^{a}\left(\frac{\rho_0}{a}\right)^{3/2}\E\left|\int_{i-1}^i\frac{1}{\rho_0}\ud M(z)-\int_{i-1}^i\ud z\right|^3=\sqrt{\frac{\rho_0^{3}}{a}}\E\left|\frac{Z_{i}-Z_{i-1}}{\rho_0}-1\right|^3=\sqrt{\frac{\rho_0^{3}}{a}}(12\mathrm{e}^{-1}-2)\to 0$ as $a\to\infty$, because the interarrival times $\{Z_i-Z_{i-1}\}_i$ have the exponential distribution with parameter~$1/\rho_0$ (i.e., its expectation equals $\rho_0$). Hence, $\sqrt{a/\rho_0}\left(
\widehat{\rho}-\rho_0\right)\xrightarrow[a\to\infty]{\dist}\mathsf{N}\left(0,1\right)$.
\end{example}

\begin{example}
Intensity $\psi(z;\brho)=\exp\{\rho_1+\rho_2z\}$. Here, ${\boldsymbol R}=(0,\infty)\times (0,\infty)$. The log-likelihood function is $\ell(\brho;\bZ,t)=\rho_1M(t)+\rho_2\sum_{i=1}^{M(t)}Z_i-\mathrm{e}^{\rho_1}\left(\mathrm{e}^{\rho_2 t}-1\right)/\rho_2$. The ML estimator of the parameter $\rho_2$ can be obtained as a~solution of $\sum_{i=1}^{M(t)}Z_i+M(t)/\hat{\rho}_2-tM(t)/\left(1-\mathrm{e}^{-\hat{\rho}_2t}\right)=0$ and the ML estimator of the parameter $\rho_1$ comes from $\hat{\rho}_1=\log\left\{\hat{\rho}_2M(t)/\left(\mathrm{e}^{\hat{\rho}_2t}-1\right)\right\}$. Consequently,
\[
\mathcal{I}(t;\brho)=\left(\begin{matrix}
\mathrm{e}^{\rho_1} \left(\mathrm{e}^{\rho_2 t}-1\right)/\rho_2 & \mathrm{e}^{\rho_1} \left\{\mathrm{e}^{\rho_2 t} \left(\rho_2 t-1\right)+1\right\}/\rho_2^2\\
\mathrm{e}^{\rho_1} \left\{\mathrm{e}^{\rho_2 t} \left(\rho_2 t-1\right)+1\right\}/\rho_2^2 & \mathrm{e}^{\rho_1} \left[\mathrm{e}^{\rho_2 t} \left\{\rho_2 t \left(\rho_2 t-2\right)+2\right\}-2\right]/\rho_2^3
\end{matrix}\right),
\]
which can be easily proved to be positive definite for all $t>0$ and any $\brho\in{\boldsymbol R}$. Assumption~\ref{ass:infmat} together with the Lyapunov condition can be checked as well. Hence, $\mathcal{I}^{1/2}(a;\brho_0)\left(\widehat{\brho}-\brho_0\right)\xrightarrow[a\to\infty]{\dist}\mathsf{N}_2\left(\zero,{\boldsymbol I}\right)$.
\end{example}

An~example of the intensity function directly used in the consequent practical analysis of our data for modeling the reporting times of bodily injury claims, see also Figure~\ref{fig:Reporting} (left panel, green line), is given below.
\begin{example}\label{ex:Zbodily}
Intensity $\psi(z;\brho)=\exp\{\rho_1+\rho_2\log z+\rho_3\cos\left(2\pi z/\rho_5\right)+\rho_4\sin\left(2\pi z/\rho_5\right)\}$. The above formulated assumptions are satisfied for a~particular open convex ${\boldsymbol R}\subseteq\mathbb{R}^5$. The defined entities are not presented here due to their voluminous forms.
\end{example}

\begin{figure}[!ht]
	\begin{center}
		\includegraphics[width=0.5\textwidth]{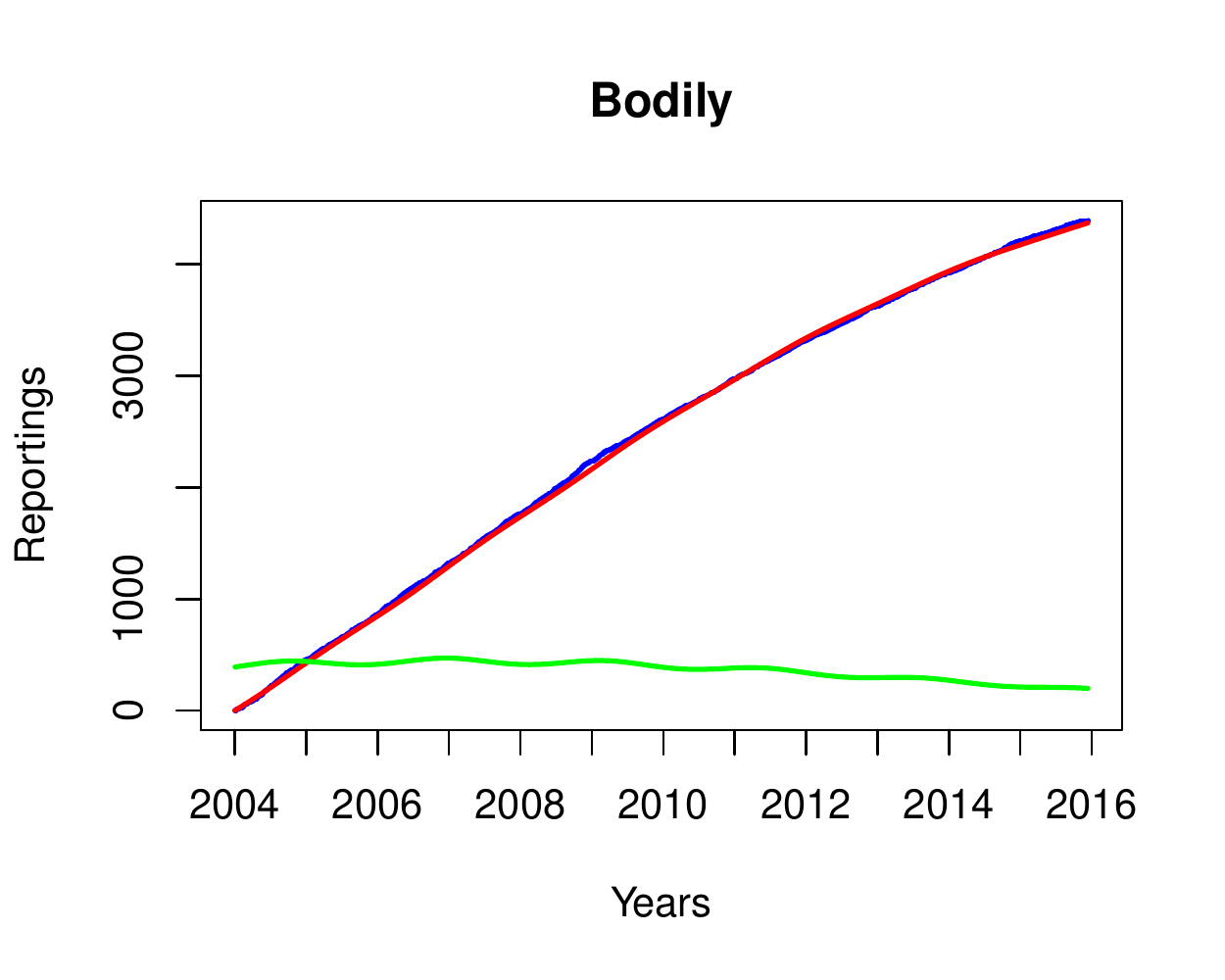}\includegraphics[width=0.5\textwidth]{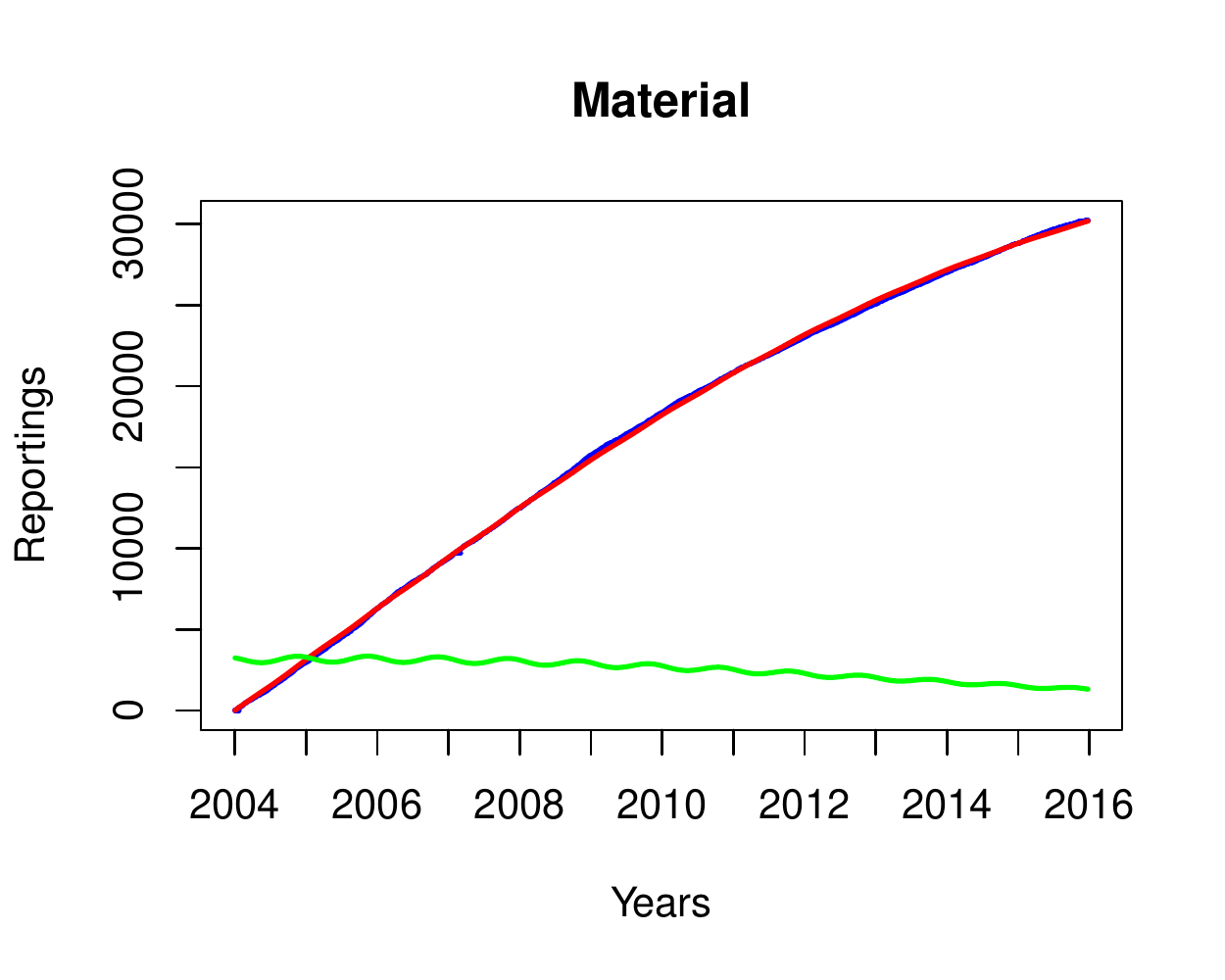}
		\caption{Number of reported claims---empirical (observed) cumulative intensity in blue, estimated cumulative intensity in red, and estimated intensity in green (prediction uses only data up to the end of 2015).}
		\label{fig:Reporting}
	\end{center}
\end{figure}

Besides that, the next example is used in the data analysis for the reporting times of material damage claims, cf.~Figure~\ref{fig:Reporting} (right panel, green line).
\begin{example}\label{ex:Zmaterial}
Intensity $\psi(z;\brho)=\exp\{\rho_1+\rho_2 z+\rho_3 z^2+\rho_4\cos\left(2\pi z/\rho_6\right)+\rho_5\sin\left(2\pi z/\rho_6\right)\}$. The required assumptions are again satisfied, but the above defined entities are not presented here due to their complicated and voluminous forms.
\end{example}


Suitability of Examples~\ref{ex:Zbodily} and~\ref{ex:Zmaterial} for the practical analysis is illustrated in Figure~\ref{fig:Reporting}, where the observed and fitted cumulative intensities (corresponding to the theoretical cumulative intensity $\Psi(t;\brho)$) are compared. The deviations between them are minor. Let us recall that our estimation of the reporting dates' intensity is based on the data up to the end of year 2015. Extrapolation of the estimated cumulative intensity for the `future' year 2016 also nicely mimics the known reality from year 2016 (not used for estimation). The estimated underlying intensity is depicted as well.





Finally, it would be natural to characterize the number of new arriving claims with respect to the intensity of the process~$M$.


\begin{proposition}[Infinite number of renewals]\label{prop:infty}
If Assumption~\ref{ass:Zs} holds and $\lim_{t\to\infty}\Psi(t;\brho)=\infty$, then $\prob\left\{\lim_{t\to\infty}M(t)=\infty\right\}=1$.
\end{proposition}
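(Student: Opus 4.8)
The plan is to exploit the defining property of a non-homogeneous Poisson process: for each fixed $t$, $M(t)$ is Poisson distributed with mean $\Psi(t;\brho)$. Since $M$ is a counting process, $t\mapsto M(t)$ is non-decreasing, so the event $\{\lim_{t\to\infty}M(t)=\infty\}$ is the complement of $\bigcup_{n\in\mathbb{N}}\{M(t)\leq n\ \text{for all}\ t\geq 0\}=\bigcup_{n\in\mathbb{N}}\{\lim_{t\to\infty}M(t)\leq n\}$. Hence it suffices to show $\prob\{\lim_{t\to\infty}M(t)\leq n\}=0$ for every fixed $n$. By monotone convergence (in $t$, along say the integers), $\prob\{\lim_{t\to\infty}M(t)\leq n\}=\lim_{t\to\infty}\prob\{M(t)\leq n\}$, because $\{M(t)\leq n\}$ is a decreasing sequence of events as $t\uparrow\infty$.

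Next I would bound $\prob\{M(t)\leq n\}$ using the Poisson tail. Writing $\lambda_t:=\Psi(t;\brho)$, we have
\[
\prob\{M(t)\leq n\}=\mathrm{e}^{-\lambda_t}\sum_{k=0}^{n}\frac{\lambda_t^{\,k}}{k!}\leq (n+1)\,\mathrm{e}^{-\lambda_t}\,\frac{\lambda_t^{\,n}}{n!}
\]
for $\lambda_t\geq 1$, say. Since $\lim_{t\to\infty}\Psi(t;\brho)=\infty$ by hypothesis, $\lambda_t\to\infty$, and $\mathrm{e}^{-\lambda_t}\lambda_t^{\,n}\to 0$ as $\lambda_t\to\infty$ for each fixed $n$. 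Therefore $\prob\{M(t)\leq n\}\to 0$, which gives $\prob\{\lim_{t\to\infty}M(t)\leq n\}=0$. Taking the union over $n\in\mathbb{N}$ yields $\prob\{\lim_{t\to\infty}M(t)<\infty\}=0$, i.e. $\prob\{\lim_{t\to\infty}M(t)=\infty\}=1$.

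I do not expect a serious obstacle here; the only point requiring a little care is the measure-theoretic bookkeeping: justifying that $\{\lim_t M(t)\le n\}$ and $\{M(t)\le n\ \forall t\}$ coincide (which uses monotonicity of $M$ and right-continuity/integer-valuedness of the paths), and interchanging the limit with the probability via monotone/dominated convergence along a countable cofinal sequence of times. Once that is in place, the Poisson tail estimate is routine. An alternative, essentially equivalent route is to note that $Z_n=\inf\{t\ge 0: M(t)\ge n\}$ is a.s.\ finite for every $n$ — equivalently $\prob\{Z_n<\infty\}=1$ — since $\prob\{Z_n>t\}=\prob\{M(t)<n\}\to 0$; then $M(t)\ge n$ for all $t\ge Z_n$, so $\liminf_{t\to\infty}M(t)\ge n$ a.s.\ for every $n$, and intersecting over $n$ finishes the proof.
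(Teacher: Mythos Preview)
Your argument is correct. It differs from the paper's proof, which proceeds via the time-change representation $M(t)=\widetilde{M}\{\Psi(t;\brho)\}$ for a standard (rate~$1$) Poisson process~$\widetilde{M}$: since $\Psi(t;\brho)\to\infty$, the event $\{\lim_{t\to\infty}M(t)<\infty\}$ becomes $\{\lim_{s\to\infty}\widetilde{M}(s)<\infty\}$, which in turn is contained in the event that some interarrival time $\widetilde{Z}_n-\widetilde{Z}_{n-1}$ of~$\widetilde{M}$ is infinite; a countable union bound and the fact that exponential variables are a.s.\ finite finish the proof. Your route is more direct and self-contained: it only uses that $M(t)$ is Poisson$(\Psi(t;\brho))$ and monotonicity of the paths, and avoids invoking the time-change theorem. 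The paper's approach, on the other hand, is slightly slicker once the time change is granted and makes transparent that the conclusion is really the classical fact that a homogeneous Poisson process has infinitely many renewals. Either way the result is immediate; your measure-theoretic caveats (monotone limits along a countable cofinal sequence) are handled correctly.
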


This proposition reveals that the divergent cumulative intensity $\Psi(t;\brho)$ assures that there are still new claims being reported with probability one.

\subsection{Number of payments}\label{subsec:NumberOfPayments}
Let us recall that the number of payments corresponding to the $i$th claim till time point~$t$ is denoted by~$N_i(t)$. So, we possess panels of count processes $\{N_i(t)\}_{t>0}$ for $i=1,\ldots,M(t)$ that can be represented as $\{\bN(t)\}_{t>0}$. Thus, the claim notifications together with the claim payments can be viewed as a~marked Poisson process with Poisson processes as marks
\[
\{\{M(t)\}_{t\geq 0},\{\bN(t)\}_{t\geq 0}\}.
\]
In practice, we observe the counting process $\{N_i(t)\}_{t>0}$ through $\{U_{i,1},\ldots,U_{i,N_i(t)}\}$ for $i=1,\ldots,M(t)$, where the number of payments for the $i$th claim is denoted by $N_i(t)$ and $U_{i,k}$ is the time of the $k$th payment within the $i$th claim for $k=1,\ldots,N_i(t)$. Moreover, the amount of the $k$th payment for the $i$th claim paid at time $U_{i,k}$ is represented by $X_{i,k}$, which is going to be modeled in Subsection~\ref{subsec:payment_amounts}.

\begin{assumpN}\label{ass:Us}
The ordered payment times $\{U_{i,1},U_{i,2},\ldots\}$ of the $i$th loss are arrival times of a~non-homogeneous Poisson process $\{N_i(t)\}_{t\geq 0}$. Processes $\{N_i(t)\}_{t\geq 0}$, $i=1,2,\ldots$ are independent having parametric intensities $\lambda(t,Z_i;\bth)$ such that $N_i(t)=\sum_{k=1}^{\infty}\mathbbm{1}\{U_{i,k}\leq t\}$, $\bth\in{\boldsymbol P}\subseteq\mathbb{R}^p$, and ${\boldsymbol P}$ is an~open convex set.
\end{assumpN}

Since $U_{i,k}\geq Z_i$ (i.e., the payment times come after the reporting time), the corresponding density~$\lambda$ has to be constant zero up to the reporting date~$Z_i$. Alternatively, one can think of a~`restarted' process $\tilde{N}_{Z_i}(\tau)=\sum_{k=1}^{\infty}\mathbbm{1}\{U_{i,k}-Z_i\leq\tau\}$ with an~`internal' time~$\tau$ of the claim~$i$ after its reporting time~$Z_i$.


Note that the processes $\{N_i(t)\}_{t\geq 0}$, $i=1,2,\ldots$ do not have to be identically distributed, because of possible different effect of the reporting date. Here, the intensities $\lambda(t,Z_i;\bth)$ can be considered as \emph{payment frequencies}. However, the common parameter~$\bth$ is assumed to be shared by different intensity functions $\lambda$'s. In contrast to~\cite{Fawless1987}, we do not assume a~specific product form of the intensity~$\lambda$ and, moreover, we allow the processes~$N_i$ to depend on the process~$M$ via the reporting times~$Z_i$'s as the marks' locations. To the best of our knowledge, we are not aware of \emph{any previous work} dealing with inference for the marked non-homogeneous Poisson process with marks being non-homogeneous Poisson processes.

In order to estimate the unknown parameter $\bth$, we again use the ML approach for the arrival times, which can be considered as an~extension of the case for a~single realization of the Poisson process. Such a~framework can be extended for several independent non-homogeneous Poisson processes, where the likelihood is as follows
\begin{equation*}
L\{\bth;\bN(t),M(t)\}=\prod_{i=1}^{M(t)}\left[\left\{\prod_{k=1}^{N_i(t)}\lambda(U_{i,k},Z_i;\bth)\right\}\exp\left\{-\int_{Z_i}^{t}\lambda(\tau,Z_i;\bth)\ud \tau \right\}\right],
\end{equation*}
where~$\bZ=(Z_1,\ldots,Z_{M(t)})^{\top}$ can be also viewed as the covariates (regressors) of the intensity~$\lambda$ having corresponding realizations~$\bz$. 
One should bear in mind that the whole information about the process $\{M(t)\}_{t\geq 0}$ is included in the sequence $\{Z_1,Z_2,\ldots\}$. Furthermore, the intensity function may be \emph{decomposed}, for instance, as
\begin{equation}
\lambda(t,Z_i;\bth)=\lambda_0(t-Z_i;\bnu)\exp\{f(Z_i;\bBeta)\}
\end{equation}
such that $\bth=(\bnu^{\top},\bBeta^{\top})^{\top}$, $\lambda_0(\tau;\bnu)$ is a~baseline intensity function, where $\lambda_0(\tau;\bnu)=0$ for $\tau<0$, and $f(Z_i;\bBeta)$ is a~parametric covariate function introducing the effects of the covariates $Z_i$'s.

To obtain the ML estimator of~$\bth$, one has to maximize the log-likelihood function
\begin{equation}\label{eq:llN}
\ell\{\bth;\bN(t),M(t)\}
=\sum_{i=1}^{M(t)}\left\{\sum_{k=1}^{N_i(t)}\log\lambda(U_{i,k},Z_i;\bth)-\int_{Z_i}^{t}\lambda(\tau,Z_i;\bth)\ud \tau \right\}.
\end{equation}
The true value~$\bth_0\in{\boldsymbol P}$ of the unknown vector parameter~$\bth$ is supposed to uniquely maximize $\E_{\bth}\ell\{\bth;\bN(t),M(t)\}$. For $\bU_i:=(U_{i,1},\ldots,U_{i,N_i(t)})^{\top}$, let us define
\[
g_i(\bU_i;\bth,t):=\int_{Z_i}^{t}\lambda(\tau,Z_i;\bth)\ud \tau-\sum_{k=1}^{N_i(t)}\log\lambda(U_{i,k},Z_i;\bth),
\]
which means that
\begin{equation}\label{eq:MLEtheta}
\widehat\bth=\arg\min_{\bth\in{\boldsymbol P}} \sum_{i=1}^{M(t)} g_i(\bU_i;\bth,t).
\end{equation}
Assume the following to hold with respect to the $g_i(\bU_i;\bth,t)$ functions in order to obtain consistent and asymptotically normal estimators. Next assumption, being analogous to Assumption~\ref{ass:convex}, secures the existence of the unique solution. 

\begin{assumpN}\label{ass:convexN}
$g_i(\bu;\bth,t)$ are convex in $\bth\in{\boldsymbol P}$ for all $0<u_1<\ldots<u_n<t$ and $n\in\mathbb{N}$.
\end{assumpN}

At a~very first sight, further $\mathscr{N}$-assumptions might be considered as the copied $\mathscr{M}$-assump\-tions mutatis mutandis. There is, however,  an~additional layer of randomness present for the marks $N_i$'s. For instance, deterministic integrals become stochastic ones. Furthermore, the assumptions regarding the process~$M$ are not just replaced by the assumptions for the processes~$N_i$'s. There are indeed several additional assumptions regarding the marks~$N_i$'s added to some assumptions for the original underlying process~$M$. Therefore, one can neither simplify nor unify the $\mathscr{M}$-~and $\mathscr{N}$-assumptions. Next assumption, being similar to Assumption~\ref{ass:interchange1}, controls the almost sure boundedness of the derivatives of the intensity functions for all $Z_i$.

\begin{assumpN}\label{ass:interchange1N}
$\frac{\partial^2}{\partial\bth\partial\bth^{\top}}\lambda(\cdot,Z_i;\bth):\,\mathbb{R}^+\to\mathbb{R}^{p\times p}$ are continuous for all~$\bth\in{\boldsymbol P}$ and there exist Lebesgue-integrable functions $m_{1,i,j}$ and $m_{2,i,j,k}$ such that
\[
\left|\frac{\partial}{\partial\theta_j}\lambda(t,Z_i;\bth)\right|\leq m_{1,i,j}(t)\quad\mbox{and}\quad\left|\frac{\partial^2}{\partial\theta_j\partial\theta_k}\lambda(t,Z_i;\bth)\right|\leq m_{2,i,j,k}(t)
\]
almost surely, for all $\bth\in{\boldsymbol P}$, $i\in\mathbb{N}$, almost every $t>0$, and $j,k=1,\ldots,p$.
\end{assumpN}

Let us define a~cumulative intensity $\Lambda(t,Z_i;\bth):=\int_{Z_i}^t\lambda(\tau,Z_i;\bth)\ud\tau$, an~information matrix $\mathcal{J}(t;\bth_0):=\E\sum_{i=1}^{M(t)}\mathcal{J}_i(t;\bth_0)$, where $\mathcal{J}_i(t;\bth_0):=\int_{Z_i}^t\frac{\{\partial_{\bth_0}\lambda(\tau,Z_i;\bth)\}^{\otimes 2}}{\lambda(\tau,Z_i;\bth_0)}\ud \tau$, and a~matrix $\mathcal{L}_i(t,\bth_0):=\frac{1}{\sqrt{\lambda(t,Z_i;\bth_0)}}\left[\partial^2_{\bth_0}\lambda(t,Z_i;\bth)-\frac{\{\partial_{\bth_0}\lambda(t,Z_i;\bth)\}^{\otimes 2}}{\lambda(t,Z_i;\bth_0)}\right]$ for $t>0$. The forthcoming assumption differs from the analogous one \ref{ass:infmat} via averaging over all the payments~$Z_i$.

\begin{assumpN}\label{ass:infmatN}
As $t\to\infty$,
\begin{enumerate}
\item $M(t)\mathcal{J}^{-1}(t,\bth_0)$ converges in probability to a~positive semidefinite matrix;
\item $\E\sum_{i=1}^{M(t)}\int_{Z_i}^{t}\left\{\mathcal{J}^{-1/2}(t,\bth_0)\mathcal{L}_i(\tau,\bth_0)\mathcal{J}^{-1/2}(t,\bth_0)\right\}^2\ud \tau\to\zero$.
\end{enumerate}
\end{assumpN}

Analogous discussions like after Assumptions~\ref{ass:convex}--\ref{ass:infmat} might be carried out regarding Assumptions~\ref{ass:convexN}--\ref{ass:infmatN}. Briefly and informally, the intensity functions~$\lambda$'s representing the behavior of the payment times' processes~$N_i$'s are supposed to be \emph{sufficiently smooth} and \emph{adequately regular} with respect to the amount of information about the parameter~$\bth$ contained in~$N_i$'s.




\begin{theorem}[Consistency II]\label{thm:consistencyN}
Under Assumptions~\ref{ass:Zs} and \ref{ass:Us}--\ref{ass:infmatN},
\[
\mathcal{J}^{1/2}(t,\bth_0)\left(\widehat{\bth}-\bth_0\right)=-\mathcal{J}^{-1/2}(t,\bth_0)\sum_{i=1}^{M(t)}\partial_{\bth_0}g_i\left(\bU_i;\bth_0,t\right)+o_{\prob}(1),\quad t\to\infty.
\]
\end{theorem}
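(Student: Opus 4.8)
The plan is to run the argument of Theorem~\ref{thm:consistency}, the genuinely new feature being that now both the number of summands $M(t)$ and each summand $g_i(\bU_i;\,\cdot\,,t)$---through $Z_i$ and through the inner process $N_i$---are random. The device is to work conditionally on the $\sigma$-field $\mathcal{F}^{M}_{t}$ generated by the outer reporting process $\{M(s):\,s\leq t\}$ (Assumption~\ref{ass:Zs}), equivalently by $Z_1,Z_2,\ldots$: under Assumption~\ref{ass:Us}, given $\mathcal{F}^{M}_{t}$ the marks $N_1,N_2,\ldots$ are independent non-homogeneous Poisson processes with the known intensities $\lambda(\cdot,Z_i;\bth_0)$, so the first- and second-moment Campbell formulae apply to every inner sum $\sum_{k=1}^{N_i(t)}\phi(U_{i,k})$.

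First I would record the moment identities at the truth. Conditionally on $\mathcal{F}^{M}_{t}$, Campbell's formulae together with the interchange of $\partial_{\bth_0}$ (resp.\ $\partial^2_{\bth_0}$) and $\int_{Z_i}^{t}$, licensed by the integrable majorants of Assumption~\ref{ass:interchange1N}, give $\E[\partial_{\bth_0}g_i(\bU_i;\bth_0,t)\mid\mathcal{F}^{M}_{t}]=\zero$, $\Var(\partial_{\bth_0}g_i(\bU_i;\bth_0,t)\mid\mathcal{F}^{M}_{t})=\mathcal{J}_i(t;\bth_0)$, and $\E[\partial^2_{\bth_0}g_i(\bU_i;\bth_0,t)\mid\mathcal{F}^{M}_{t}]=\mathcal{J}_i(t;\bth_0)$ (the $\partial^2_{\bth_0}\Lambda$ term cancelling the expectation of the $\partial^2_{\bth_0}\lambda/\lambda$ term). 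Since the $N_i$ are conditionally independent, the tower rule and the law of total variance then show, for the normalized score $\bW_t:=-\mathcal{J}^{-1/2}(t,\bth_0)\sum_{i=1}^{M(t)}\partial_{\bth_0}g_i(\bU_i;\bth_0,t)$, that $\E\bW_t=\zero$ and $\Var\bW_t=\mathcal{J}^{-1/2}(t,\bth_0)\,\E\big(\sum_{i=1}^{M(t)}\mathcal{J}_i(t;\bth_0)\big)\,\mathcal{J}^{-1/2}(t,\bth_0)=\bI$; hence $\bW_t=O_{\prob}(1)$. (Positive definiteness of $\mathcal{J}(t;\bth_0)$, needed to form $\mathcal{J}^{-1/2}$, follows from identifiability of $\bth_0$ and Assumption~\ref{ass:interchange1N}, just as for $\mathcal{I}$ after Assumption~\ref{ass:interchange1}.)

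Next I would bring in convexity, exactly as in Theorem~\ref{thm:consistency}. By Assumption~\ref{ass:convexN} the map $\bth\mapsto\sum_{i=1}^{M(t)}g_i(\bU_i;\bth,t)$ is convex, so $\widehat{\bth}$ exists, is unique, is interior to $\boldsymbol P$, and solves $\sum_{i=1}^{M(t)}\partial_{\bth}g_i(\bU_i;\widehat{\bth},t)=\zero$. Reparametrizing $\bth=\bth_0+\mathcal{J}^{-1/2}(t,\bth_0)\bss$ and writing $G_t(\bss):=\sum_{i=1}^{M(t)}\{g_i(\bU_i;\bth_0+\mathcal{J}^{-1/2}(t,\bth_0)\bss,t)-g_i(\bU_i;\bth_0,t)\}$, the map $\bss\mapsto G_t(\bss)$ is convex with minimizer $\widehat{\bss}=\mathcal{J}^{1/2}(t,\bth_0)(\widehat{\bth}-\bth_0)$, and a second-order Taylor expansion (valid by Assumption~\ref{ass:interchange1N}) gives $G_t(\bss)=-\bss^{\top}\bW_t+\tfrac12\bss^{\top}A_t(\bss)\bss$, where $A_t(\bss)$ is the normalized Hessian $\mathcal{J}^{-1/2}(t,\bth_0)\big(\sum_{i=1}^{M(t)}\partial^2 g_i\big)\mathcal{J}^{-1/2}(t,\bth_0)$ evaluated at points on the segment joining $\bth_0$ and $\bth_0+\mathcal{J}^{-1/2}(t,\bth_0)\bss$. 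I would show $A_t(\bss)$ converges in probability to $\bI$ for each fixed $\bss$ in three steps: (a) the conditional mean of $\mathcal{J}^{-1/2}(t,\bth_0)\sum_{i}\partial^2_{\bth_0}g_i\,\mathcal{J}^{-1/2}(t,\bth_0)$ is $\bI$, by the identity above; (b) its conditional variance, computed entrywise exactly as for $\mathcal{I}$ and $\mathcal{K}$ in the proof of Theorem~\ref{thm:consistency} but now with $\mathcal{J}^{-1/2}(t,\bth_0)\mathcal{L}_i(\cdot,\bth_0)\mathcal{J}^{-1/2}(t,\bth_0)$ integrated over $[Z_i,t]$, summed in $i$ and averaged, tends to $\zero$ by Assumption~\ref{ass:infmatN}(ii) and the Cauchy--Schwarz inequality; (c) replacing $\bth_0$ by the intermediate point is negligible because its distance to $\bth_0$ is at most $\|\mathcal{J}^{-1/2}(t,\bth_0)\bss\|\to0$---using that $\mathcal{J}^{-1/2}(t,\bth_0)\to\zero$, which follows from Assumption~\ref{ass:infmatN}(i) together with $M(t)\to\infty$ as in Theorem~\ref{thm:consistency}---so continuity of the second derivatives and the majorants of Assumption~\ref{ass:interchange1N} finish the step by dominated convergence, conditionally on $\mathcal{F}^{M}_{t}$.

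Combining the pieces, $G_t(\bss)=-\bss^{\top}\bW_t+\tfrac12\|\bss\|_2^2+r_t(\bss)$ with $\bW_t=O_{\prob}(1)$, $r_t(\bss)=o_{\prob}(1)$ for each fixed $\bss$, and $G_t$ convex; the standard convexity argument for minimizers of convex random processes (already used for Theorem~\ref{thm:consistency}) then yields $\widehat{\bss}=\bW_t+o_{\prob}(1)$, which is precisely the claimed linearization $\mathcal{J}^{1/2}(t,\bth_0)(\widehat{\bth}-\bth_0)=-\mathcal{J}^{-1/2}(t,\bth_0)\sum_{i=1}^{M(t)}\partial_{\bth_0}g_i(\bU_i;\bth_0,t)+o_{\prob}(1)$. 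I expect step~(b) to be the main obstacle: it is a law-of-large-numbers statement for the normalized Hessian in which the number of terms, their locations $Z_i$, and the inner Poisson processes are all random simultaneously, so the conditioning and tower-rule bookkeeping must be handled with care---this is exactly why Assumption~\ref{ass:infmatN} is stated as an expectation over the $Z_i$'s rather than as the deterministic integral condition of Assumption~\ref{ass:infmat}. A smaller technical point is confirming $\mathcal{J}^{-1/2}(t,\bth_0)\to\zero$ in step~(c) so that the intermediate points collapse onto $\bth_0$.
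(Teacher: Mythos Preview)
Your approach is essentially the paper's: condition on the outer process $\{M(s):s\le t\}$, compute the first and second moments of the score and Hessian for the inner processes $N_i$ via Campbell-type identities, and then invoke the Hjort--Pollard convexity lemma on the reparametrized criterion $G_t(\bss)$. The only structural difference is cosmetic---the paper expands with the Hessian $\mathcal{V}(t)$ evaluated at $\bth_0$ and a separate remainder $\mathcal{R}_t(\bss)=M(t)\,o\!\big(\bss^{\top}\mathcal{J}^{-1}(t,\bth_0)\bss\big)$ controlled by Assumption~\ref{ass:infmatN}(i), whereas you absorb the remainder into a Lagrange-form Hessian $A_t(\bss)$ at an intermediate point and handle the drift via your step~(c). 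Both routes are valid.

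There is one slip to repair in step~(a). The \emph{conditional} mean of the normalized Hessian given $\mathcal{F}^{M}_{t}$ is not $\bI$ but the random matrix
\[
\mathcal{J}^{-1/2}(t,\bth_0)\Big(\textstyle\sum_{i=1}^{M(t)}\mathcal{J}_i(t;\bth_0)\Big)\mathcal{J}^{-1/2}(t,\bth_0);
\]
only its \emph{unconditional} expectation equals $\bI$, by the very definition $\mathcal{J}(t;\bth_0)=\E\sum_i\mathcal{J}_i(t;\bth_0)$. Consequently, to obtain $A_t\to\bI$ in probability the law of total variance gives two pieces: your step~(b) controls $\E\big[\Var(\cdot\mid\mathcal{F}^{M}_{t})\big]$ via Assumption~\ref{ass:infmatN}(ii), but you also need $\Var\big[\E(\cdot\mid\mathcal{F}^{M}_{t})\big]\to 0$, i.e.\ that the random conditional mean above concentrates at $\bI$. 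The paper does not treat (a) and (b) as independently sufficient; it computes the full unconditional $\tr\Var\{\mathcal{V}(t)\bss\}$ in one pass---expanding by total variance so the cross terms cancel and the whole expression collapses to the $\mathcal{L}_i$-integral of Assumption~\ref{ass:infmatN}(ii). Make sure your bookkeeping does the same rather than asserting (a) on its own.
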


Again, let us discretize the `continuous' time $t\in\mathbb{R}_0^+$ for the processes $\{N_i(t)\}_{t\geq 0}$ in a~way that one observes $N_i$ only at discrete time points $a\in\mathbb{N}$, e.g., status at the closed calendar years. The next Lindeberg condition is used to extend the assertion of Theorem~\ref{thm:consistencyN} in order to derive asymptotic normality of~$\widehat{\bth}$.
\begin{assumpN}\label{ass:lindN}
$\lim_{a\to\infty}\sum_{i=1}^{a}\E\left({\boldsymbol d}^{\top}\mathcal{Y}_i\right)^2\mathbbm{1}\{|{\boldsymbol d}^{\top}\mathcal{Y}_i|\geq \eps\|{\boldsymbol d}\|_2\}=0$ for all ${\boldsymbol d}\in\mathbb{R}^{p}$ and $\eps>0$, where $\mathcal{Y}_i:=\mathcal{J}^{-1/2}(a,\bth_0)\int_{j-1}^{j}\int_{z}^{a}\left\{\partial_{\bth_0}\log\lambda(\tau,z;\bth)\right\}(\ud\tilde{N}_z(\tau-z)-\lambda(\tau,z;\bth_0)\ud\tau)\ud M(z)$.
\end{assumpN}

The Lyapunov condition can be assumed as well. For instance, for all ${\boldsymbol d}\in\mathbb{R}^{p}$, there exists $\delta>0$ such that $\lim_{a\to\infty}\sum_{i=1}^{a}\E\left|{\boldsymbol d}^{\top}\mathcal{Y}_i\right|^{2+\delta}=0$.

\begin{corollary}[Asymptotic normality II]\label{cor:asnormN}
Under Assumptions~\ref{ass:Zs} and \ref{ass:Us}--\ref{ass:lindN},
\[
\mathcal{J}^{1/2}(a,\bth_0)\left(
\widehat{\bth}-\bth_0\right)\xrightarrow[a\to\infty]{\dist}\mathsf{N}_{p}\left(\zero,{\boldsymbol I}\right).
\]
\end{corollary}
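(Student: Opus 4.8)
The plan is to feed the stochastic linearization of Theorem~\ref{thm:consistencyN} into a multivariate Lindeberg--Feller central limit theorem for a triangular array of independent, non-identically distributed, mean-zero random vectors indexed by the unit reporting-time windows. By Theorem~\ref{thm:consistencyN} and Slutsky's lemma it suffices to prove that, along $a\in\mathbb{N}$, the normalized score $S_a:=-\mathcal{J}^{-1/2}(a,\bth_0)\sum_{i=1}^{M(a)}\partial_{\bth_0}g_i(\bU_i;\bth_0,a)$ converges in distribution to $\mathsf{N}_p(\zero,\bI)$. Differentiating $g_i$ and using $\partial_{\bth_0}\lambda=\lambda\,\partial_{\bth_0}\log\lambda$, one rewrites $-\partial_{\bth_0}g_i(\bU_i;\bth_0,a)=\int_{Z_i}^{a}\{\partial_{\bth_0}\log\lambda(\tau,Z_i;\bth)\}(\ud\tilde N_{Z_i}(\tau-Z_i)-\lambda(\tau,Z_i;\bth_0)\ud\tau)$, a stochastic integral of the predictable integrand $\partial_{\bth_0}\log\lambda(\cdot,Z_i;\bth)$ against the compensated payment martingale of the $i$th claim. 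Note that $\mathcal{J}(a,\bth_0)$ is a deterministic matrix (an expectation), positive definite by the analogue of the argument pinning down $\mathcal{I}$, so the normalizer $\mathcal{J}^{-1/2}(a,\bth_0)$ is non-random.

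Next I would partition the reporting axis into the windows $(i-1,i]$, $i=1,\dots,a$; every reported claim lies in exactly one window, and grouping the summands of $S_a$ by the window containing their reporting time recasts $S_a$ as $\sum_{i=1}^{a}\mathcal{Y}_i$ with $\mathcal{Y}_i$ precisely the vector in Assumption~\ref{ass:lindN}. Since $M$ is a non-homogeneous Poisson process (Assumption~\ref{ass:Zs}), its restrictions to disjoint windows are independent; since the mark processes $\{N_i\}$ are mutually independent given the reporting times (Assumption~\ref{ass:Us}), the pairs (claims reported in window $i$, their payment trajectories) are independent across $i$, hence $\mathcal{Y}_1,\dots,\mathcal{Y}_a$ are independent. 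Conditioning on $\mathcal{F}:=\sigma(\{Z_k\}_{k})$, each inner stochastic integral is the integral of a deterministic (thus predictable) integrand against a mean-zero compensated Poisson process, so $\E[\mathcal{Y}_i\mid\mathcal{F}]=\zero$ and therefore $\E\mathcal{Y}_i=\zero$.

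For the variance normalization, conditionally on $\mathcal{F}$ the contributions of distinct claims to $\mathcal{Y}_i$ are independent, and by the isometry for stochastic integrals against a compensated Poisson process (the predictable quadratic variation of $\tilde N_{Z_k}-{}$compensator is the compensator itself, distinct jumps contributing additively) together with $\{\partial_{\bth_0}\log\lambda\}^{\otimes2}\lambda=\{\partial_{\bth_0}\lambda\}^{\otimes2}/\lambda$, one gets $\E[\mathcal{Y}_i\mathcal{Y}_i^{\top}\mid\mathcal{F}]=\mathcal{J}^{-1/2}(a,\bth_0)\big\{\sum_{k:\,Z_k\in(i-1,i]}\mathcal{J}_k(a;\bth_0)\big\}\mathcal{J}^{-1/2}(a,\bth_0)$. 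Summing over $i=1,\dots,a$, taking expectations, and invoking $\mathcal{J}(a;\bth_0)=\E\sum_{k=1}^{M(a)}\mathcal{J}_k(a;\bth_0)$ yields $\sum_{i=1}^{a}\E(\mathcal{Y}_i\mathcal{Y}_i^{\top})=\mathcal{J}^{-1/2}(a,\bth_0)\mathcal{J}(a,\bth_0)\mathcal{J}^{-1/2}(a,\bth_0)=\bI$; the interchanges of expectation, the finite within-window sums, and the stochastic integrals are justified by the integrable majorants of Assumption~\ref{ass:interchange1N}. With independence, zero means, variance sum $\bI$, and the Lindeberg condition supplied verbatim by Assumption~\ref{ass:lindN} (or implied by its Lyapunov strengthening), the Cram\'er--Wold device combined with the Lindeberg--Feller theorem gives $\sum_{i=1}^{a}{\boldsymbol d}^{\top}\mathcal{Y}_i\xrightarrow{\dist}\mathsf{N}(0,\|{\boldsymbol d}\|_2^2)$ for every ${\boldsymbol d}\in\mathbb{R}^p$, hence $\sum_{i=1}^{a}\mathcal{Y}_i\xrightarrow{\dist}\mathsf{N}_p(\zero,\bI)$, and Slutsky's lemma with the linearization closes the argument.

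The main obstacle is the bookkeeping around the two nested layers of randomness in the blocking and variance steps: one must re-index the random-length sum $\sum_{i=1}^{M(a)}$ as a sum over the deterministic windows $i=1,\dots,a$ (some of which contribute $\zero$) in order to land in the fixed-index Lindeberg--Feller framework, rigorously justify that blocking by reporting window yields genuinely independent summands by combining the Poisson independent-increments property with the conditional independence of the marks given $\{Z_k\}$, and check that conditioning on $\mathcal{F}$ and integrating back via the tower property preserves both the martingale centering and the quadratic-variation identity that ties $\sum_i\E(\mathcal{Y}_i\mathcal{Y}_i^{\top})$ exactly to the identity matrix. Everything else is routine once the discretized, window-indexed representation $S_a=\sum_{i=1}^{a}\mathcal{Y}_i$ is in place.
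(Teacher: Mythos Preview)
Your proposal is correct and follows essentially the same route as the paper: reduce via the linearization of Theorem~\ref{thm:consistencyN} to a CLT for the normalized score, block the score by unit reporting-time windows to obtain $S_a=\sum_{i=1}^{a}\mathcal{Y}_i$, verify independence, mean zero, and total covariance~$\bI$, and then apply Lindeberg--Feller via Cram\'er--Wold. Your treatment of the two layers of randomness (conditioning on~$\mathcal{F}=\sigma(\{Z_k\})$, the It\^o isometry for the inner compensated integrals, and the tower property) is spelled out more explicitly than in the paper, but the argument is the same.
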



The simplest situation is that each $\{N_i(t)\}_{t\geq 0}$ is a~homogeneous Poisson process after the reporting time~$Z_i$ having a~constant common intensity $\theta>0$ for all~$i$'s.
\begin{example}
Intensity $\lambda(\tau,Z_i;\theta)=\theta\mathbbm{1}\{\tau\geq Z_i\}$ with ${\boldsymbol P}=(0,\infty)$. The log-likelihood function is $\ell\{\theta;\bN(t),M(t)\}=(\log\theta)\sum_{i=1}^{M(t)} N_i(t)-\theta\sum_{i=1}^{M(t)}(t-Z_i)$. The ML estimator becomes $\hat{\theta}=\frac{\sum_{i=1}^{M(t)}N_i(t)}{tM(t)-\sum_{i=1}^{M(t)}Z_i}$ and the information number is $\mathcal{J}(t;\theta_0)=\left\{t\Psi(t;\brho_0)-\int_0^tz\psi(z;\brho_0)\ud z\right\}/\theta_0$. Assumption~\ref{ass:infmatN} together with the Lyapunov condition can be checked as well. Hence,
\[
\sqrt{\frac{a\Psi(a;\brho_0)-\int_0^az\psi(z;\brho_0)\ud z}{\theta_0}}\left(
\widehat{\theta}-\theta_0\right)\xrightarrow[a\to\infty]{\dist}\mathsf{N}\left(0,1\right).
\]
Moreover, if the underlying intensity of the Poisson process~$M$ is $\psi(t;\rho)=\rho$ for $t\geq 0$ (i.e., homogeneous Poisson process), then $\sqrt{\frac{\rho_0}{2\theta_0}}a\left(
\widehat{\theta}-\theta_0\right)\xrightarrow[a\to\infty]{\dist}\mathsf{N}\left(0,1\right)$.
\end{example}

Another example contains a~baseline intensity, which is motivated by~\citet[Subsection 8.5, p.~166]{CKSS1991}.
\begin{example}
Intensity $\lambda(\tau,Z_i;\bnu,\eta)=\nu_1\nu_2(\tau-Z_i)^{\nu_1-1}\exp\{\eta Z_i\}\mathbbm{1}\{\tau\geq Z_i\}$. Here, ${\boldsymbol P}=(0,\infty)^2\times\mathbb{R}$. The log-likelihood becomes
\begin{multline*}
\ell\{\bnu,\eta;\bN(t),M(t)\}\\=\sum_{i=1}^{M(t)}\left\{N_i(t)\log(\nu_1\nu_2)+(\nu_1-1)\sum_{k=1}^{N_i(t)}\log(U_{i,k}-Z_i)+\eta Z_iN_i(t)-\nu_2\left(t-Z_i\right)^{\nu_1}\exp\{\eta Z_i\} \right\}.
\end{multline*}
The ML estimator has to be computed numerically. The above formulated assumptions are satisfied for the particular open convex ${\boldsymbol P}\subseteq\mathbb{R}^3$. The defined entities are not presented here due to their complicated forms.
\end{example}

The next example of the intensity function is directly used in the consequent practical analysis of our data for modeling the payment times of bodily injury as well as material damage claims.
\begin{example}\label{ex:N}
$\lambda(\tau,Z_i;\bnu,\bBeta)=\exp\{\nu_1+\nu_2(\tau-Z_i)+\eta_1\cos\left(2\pi Z_i/\eta_3\right)+\eta_2\sin\left(2\pi Z_i/\eta_3\right)\}$. The defined entities are again not presented here due to their voluminous forms.
\end{example}


\subsection{Reporting delay}\label{subsec:ReportingDelay}
Since the reporting delays correspond to different claims from different accidents, independence between the reporting delays $W_i$'s for different contracts is assumed. However, the distribution of the reporting delays is allowed to change with respect to the reporting time $Z_i$. The reporting delays seem to become shorter and shorter, which can be explained by a~possibility to report an~accident over the internet and even by a~denser net of the insurance company branches. So, given $Z_i$, $W_i$ has a~parametric conditional density $f_{W}\{\cdot;w(Z_i,\bvartheta)\}$, where $\bvartheta\in\mathbb{R}^r$. Note that $\{W_i\}_{i\in\mathbb{N}}$ are not identically distributed, which allows, for instance, to assume \emph{time-varying distributions} for the $W_i$'s through the function $w(\cdot,\bvartheta)$. Using similar arguments as in~\cite{HjortPollard2011}, one has consistency and asymptotic normality for the ML estimator~$\widehat{\bvartheta}$.



A~variety of parametric distributions are suitable for the analysis, particularly those with the shapes similar to the ones provided by log-normal, Weibull, or Gamma distributions. All of them have similar performance, whereas the importance lies in the time-varying parameters. Although in the rest of the study, we concentrate ourselves purely on the log-normal distribution, let us briefly recall all three mentioned densities
\begin{align}
f_{Gam}(x; c, d) &= \frac{1}{d^c\Gamma(c)}x^{c-1}\exp\left(-\frac{x}{d}\right), \quad x\geq0, c>0,d>0;\nonumber\\
f_{Wei}(x; c, d) &= \frac{a}{d^c}x^{c-1}\exp\left\{-\left(\frac{x}{d}\right)^c\right\}, \quad x\geq0, c>0,d>0;\nonumber\\
f_{LN}(x; c, d)  &= \frac{1}{\sqrt{2\pi}xd}\exp\left\{-\frac{(\log x - c)^2}{2d^2}\right\}, \quad x>0,c\in\mathbb{R},d>0.\label{eq:LN}
\end{align}
For the Gamma and Weibull distributions, parameters~$c$ and~$d$ are called the shape and scale, respectively, for the log-normal distribution~$c$ and~$d$ are the mean and standard deviation of the distribution on the log scale, respectively.


Taking into account the dependency between the accident date~$Z_i$ and the reporting delay~$W_i$ and, additionally to that, allowing for possible seasonal behavior, we consider truncated Fourier series for the parameters of the conditional distributions with
\begin{align*}
\bvartheta_1 &= (\alpha_c, \beta_c, \delta_{c,1}, \xi_{c,1}, \gamma_{c,1},\ldots,\delta_{c,L}, \xi_{c,L}, \gamma_{c,L})^\top,\\
\bvartheta_2 &= (\alpha_d, \beta_d, \delta_{d,1}, \xi_{d,1}, \gamma_{d,1},\ldots,\delta_{d,L}, \xi_{d,L}, \gamma_{d,L})^\top
\end{align*}
in the form of 
\begin{align}
c(z, \bvartheta_1) &= \alpha_c + \frac{z}{7}\beta_c + \sum_{\ell = 1}^L\left\{\delta_{c,\ell}\cos\left(\frac{\xi_{c,\ell}\cdot2\pi\cdot z}{52\cdot 7}\right) + \gamma_{c,\ell}\sin\left(\frac{\xi_{c,\ell}\cdot2\pi\cdot z}{52\cdot 7}\right)\right\},\label{eq:time-varying1}\\
d(z, \bvartheta_2) &= \alpha_d + \frac{z}{7}\beta_d + \sum_{\ell = 1}^L\left\{\delta_{d,\ell}\cos\left(\frac{\xi_{d,\ell}\cdot2\pi\cdot z}{52\cdot 7}\right) + \gamma_{d,\ell}\sin\left(\frac{\xi_{d,\ell}\cdot2\pi\cdot z}{52\cdot 7}\right)\right\},\label{eq:time-varying2}
\end{align}
where~$52$ is the number of weeks in one year and~7 is the number of days in one week. Later on for considering models with \emph{increasing flexibility}, we discuss constant models with $L = 0$ and $\beta_c=\beta_d=0$, linear models with $L=0$, and models with one ($L = 1$) or two ($L= 2$) seasonality patterns. The ML estimator of $\bvartheta = (\bvartheta_1^\top, \bvartheta_2^\top)^\top$ is obtained as
\begin{align}
&\widehat{\bvartheta}= \arg\max_{\bvartheta}\sum_{i=1}^{M(t)}\log f_{W}\{W_i;c(Z_i,\bvartheta_1), d(Z_i,\bvartheta_2)\},\label{eq:MLEvartheta1}\\
&\mbox{s.t.}\quad c(Z_i,\bvartheta_1)>0,\, d(Z_i,\bvartheta_2)>0, \quad \mbox{for all } i\in\{1,\ldots, M(t)\}\label{eq:MLEvartheta2}.
\end{align}
The condition~$c(Z_i,\bvartheta_1)>0$ is considered only in case when the assumed distribution~$f_W$ is Weibull or Gamma. As initial values in the iterative maximization procedure, we took the parameter values that at best fit the piecewise constant weekly averaged values in the least squares sense.


Figure~\ref{fig:numacc} shows the estimation results for~$f_W$ being the log-normal distribution. 
\begin{figure}[!ht]
\begin{center}
\includegraphics[width=.97\textwidth]{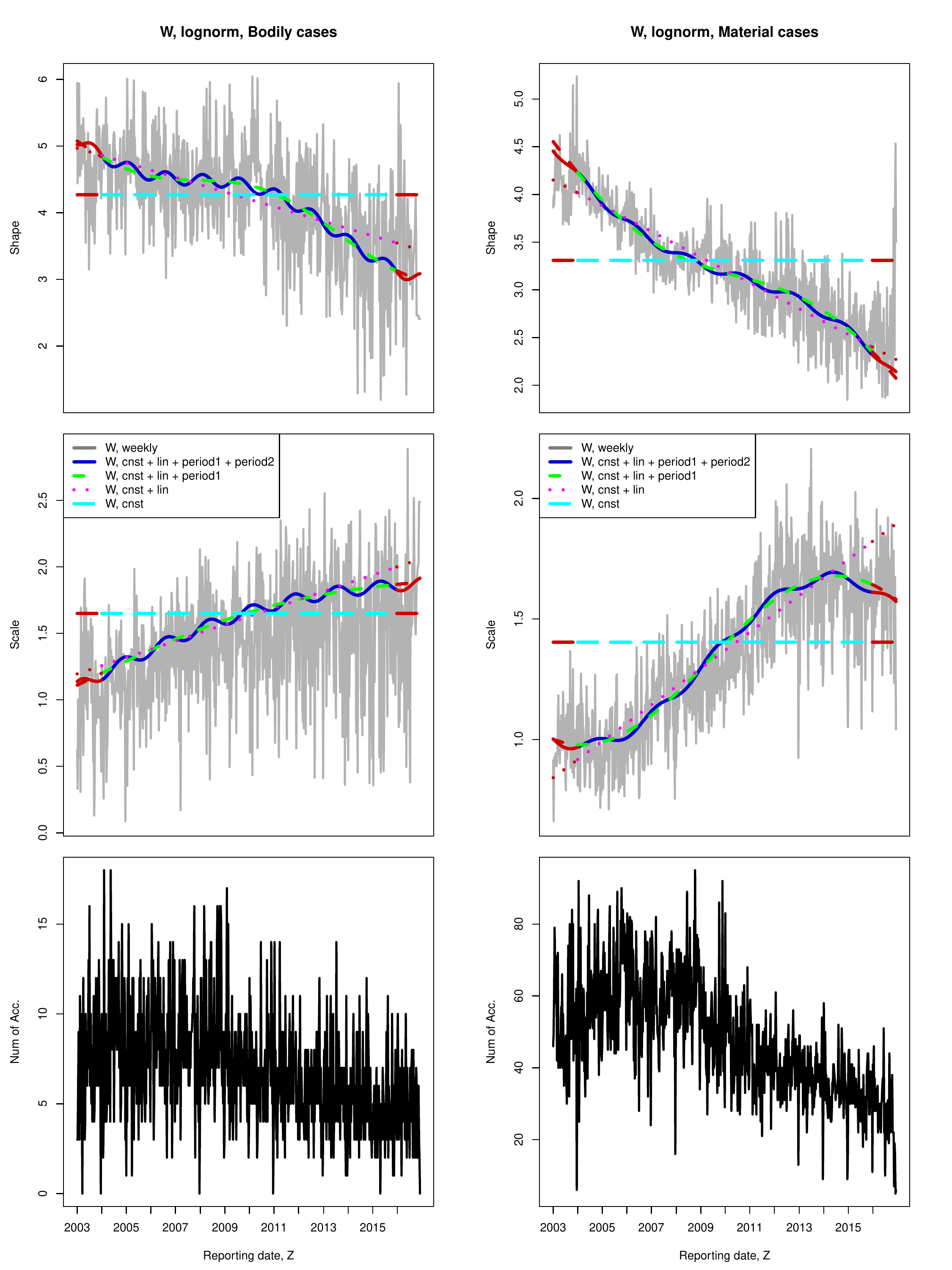}
\caption{Weekly estimates (solid grey) and conditional temporal models: constant (cyan dashed), linear trend (pink dotted), linear trend and one period (green dashed), and linear trend with two periods (blue solid) for shape (top panels) and scale (middle panels) of the log-normal distribution of the reporting delay~$W_i$. The extrapolated periods are depicted in red and the numbers of accidents are in black.}
\label{fig:numacc}
\end{center}
\end{figure}
The left panel corresponds to the bodily injury claims, where the right one to the material damage claims, on the bottom panel we show the number of accidents as the function of the reporting date. In the two upper panels, we depict the location parameters $c(z, \widehat{\bvartheta}_1)$ and two middle panels scale parameters $d(z, \widehat{\bvartheta}_2)$ over different weeks of the reporting date. Different colors represent different complexities of the models used: for cyan we have a~constant unconditional model with $L = 0$ and $\beta_c=\beta_d=0$; pink uses only linear temporal dependency with $L = 0$; green and blue lines have one ($L = 1$) and two ($L = 2$) levels of seasonality. The most flexible density with $L = 2$ has been used in the final omnibus model. With red we depict values extrapolated to the data not used in the estimation, namely years 2003 and 2016. 
This analysis shows strong temporal dependence of the parameters on the distribution of~$W_i$.



In order to validate our results from the fitted model, Figure~\ref{fig:Delay_back} compares the observed and predicted quarterly averaged reporting (waiting) delays in days for the bodily injury claims as well as for the material damage claims. 

\begin{figure}[!ht]
\begin{center}
\includegraphics[width=0.5\textwidth]{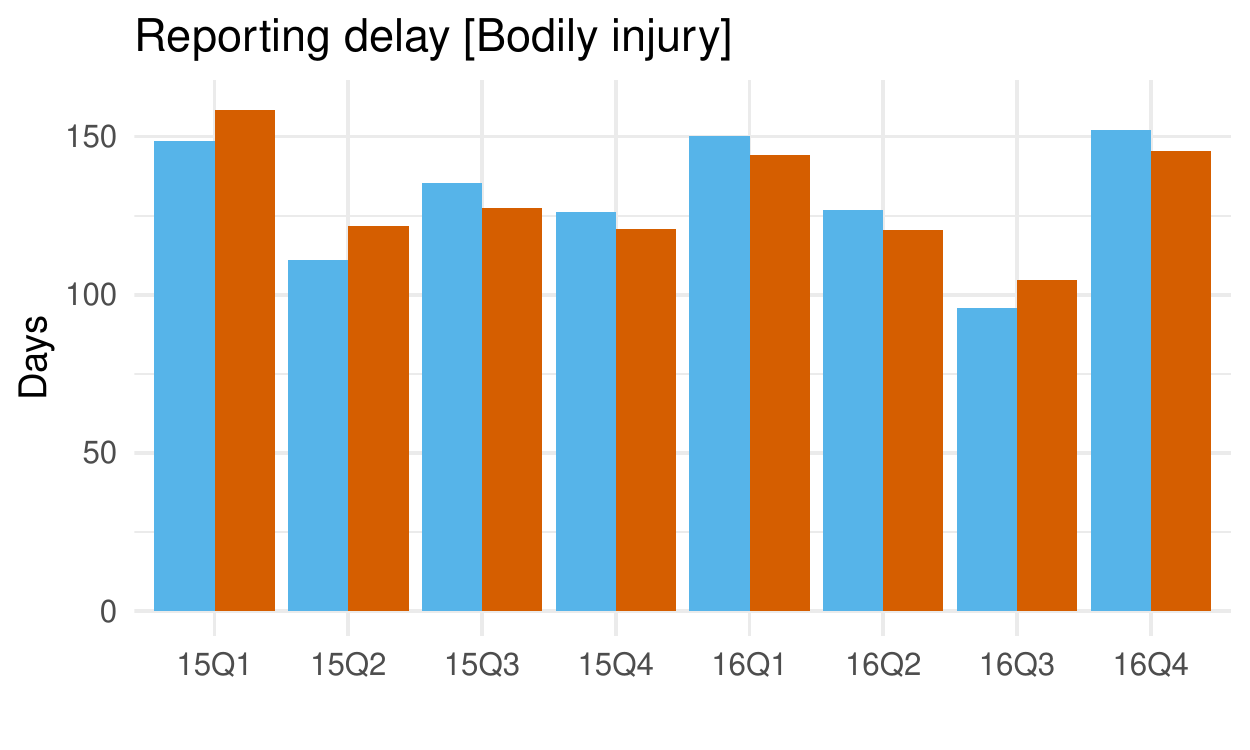}\includegraphics[width=0.5\textwidth]{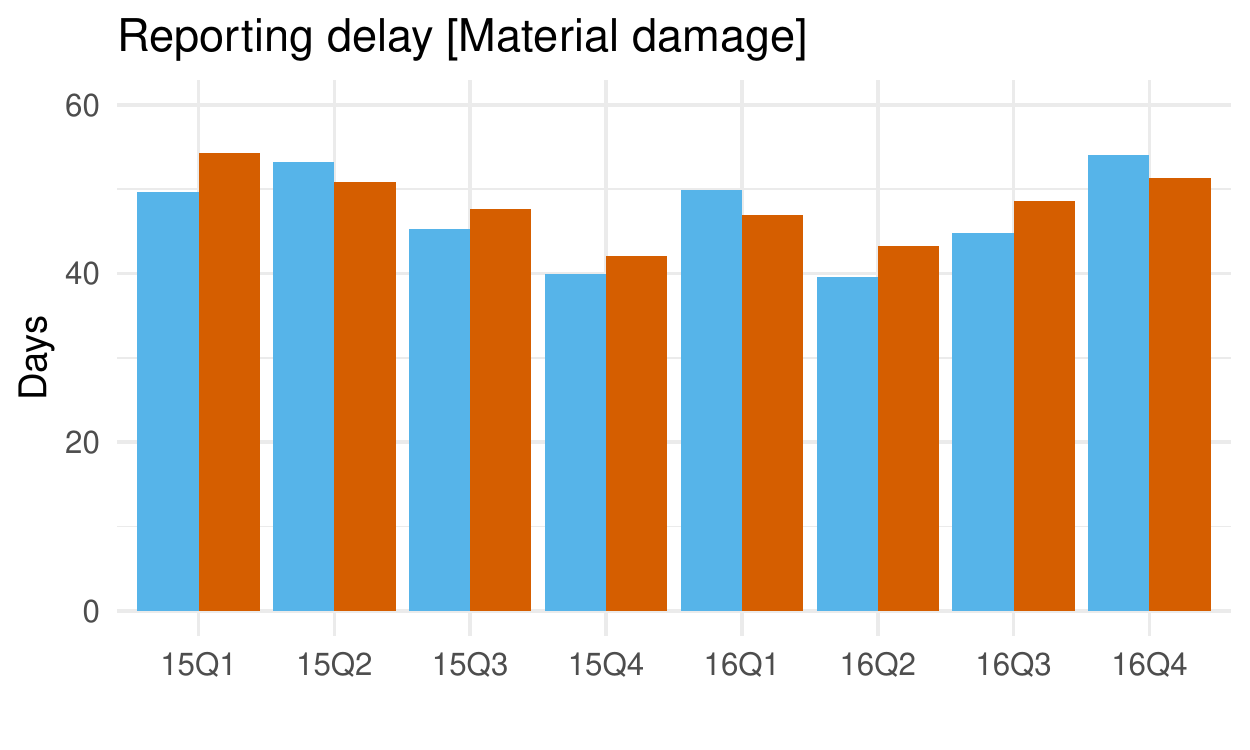}
\caption{Quarterly averaged (with respect to the reporting date) reporting (waiting) delays in days---observed in blue and predicted in orange for the period 2015--2016 (prediction uses only data up to the end of 2015).}
\label{fig:Delay_back}
\end{center}
\end{figure}

\subsection{Payment amounts}\label{subsec:payment_amounts}
Denote the $j$th payment amount for the $i$th claim by $X_{i,j}$, where $j=1,\ldots,N_i(t)$. The $X_{i,j}$'s are independent over all $j$'s as well as all $i$'s. This independency assumption can be easily relaxed and any other times series model (e.g., an~autoregressive model) can be used instead. However, our empirical findings imply independency. Given $Z_i$, $X_{i,j}$ has a~parametric conditional density $f_{X}\{\cdot;v(Z_i,{\boldsymbol\varsigma})\}$, where ${\boldsymbol\varsigma}\in\mathbb{R}^s$ and the function $v(\cdot,{\boldsymbol\varsigma})$ introduces the time-varying effects of $Z_i$'s. Moreover, the reporting delays~$W_i$'s are also supposed to be independent from the payment amounts~$X_{i,j}$'s, as well as all the payment amounts from the same claim are independent among each other. These assumptions are based on the preliminary empirical analysis of the pairwise relationships between the waiting time ($W_i$) and the first ($X_{i, 1}$), second ($X_{i, 2}$), third ($X_{i, 3}$), and fourth ($X_{i, 4}$) claim payment amounts shown in Figure~\ref{fig:WvsX}. For this plot, the data are transformed by the estimated cumulative distribution functions (cdf) $\hat F_X$ and $\hat F_W$ obtained from the plugged-in densities $\hat f_X(\cdot)\equiv f_X\{\cdot,v(Z_i,\widehat{\boldsymbol{\varsigma}})\}$ and $\hat f_W(\cdot)\equiv f_W\{\cdot,w(Z_i,\widehat{\boldsymbol{\vartheta}})\}$, respectively. Further, the data are transformed via the quantile function of the standard normal distribution. If the distributional assumptions are correct and the payments and delays are independent, the bivariate kernel density estimates and scatterplots of the transformed data should be suggestive of circular shapes as clearly visible in Figure~\ref{fig:WvsX}.


\begin{figure}[!ht]
\begin{center}
\includegraphics[width=0.5\textwidth]{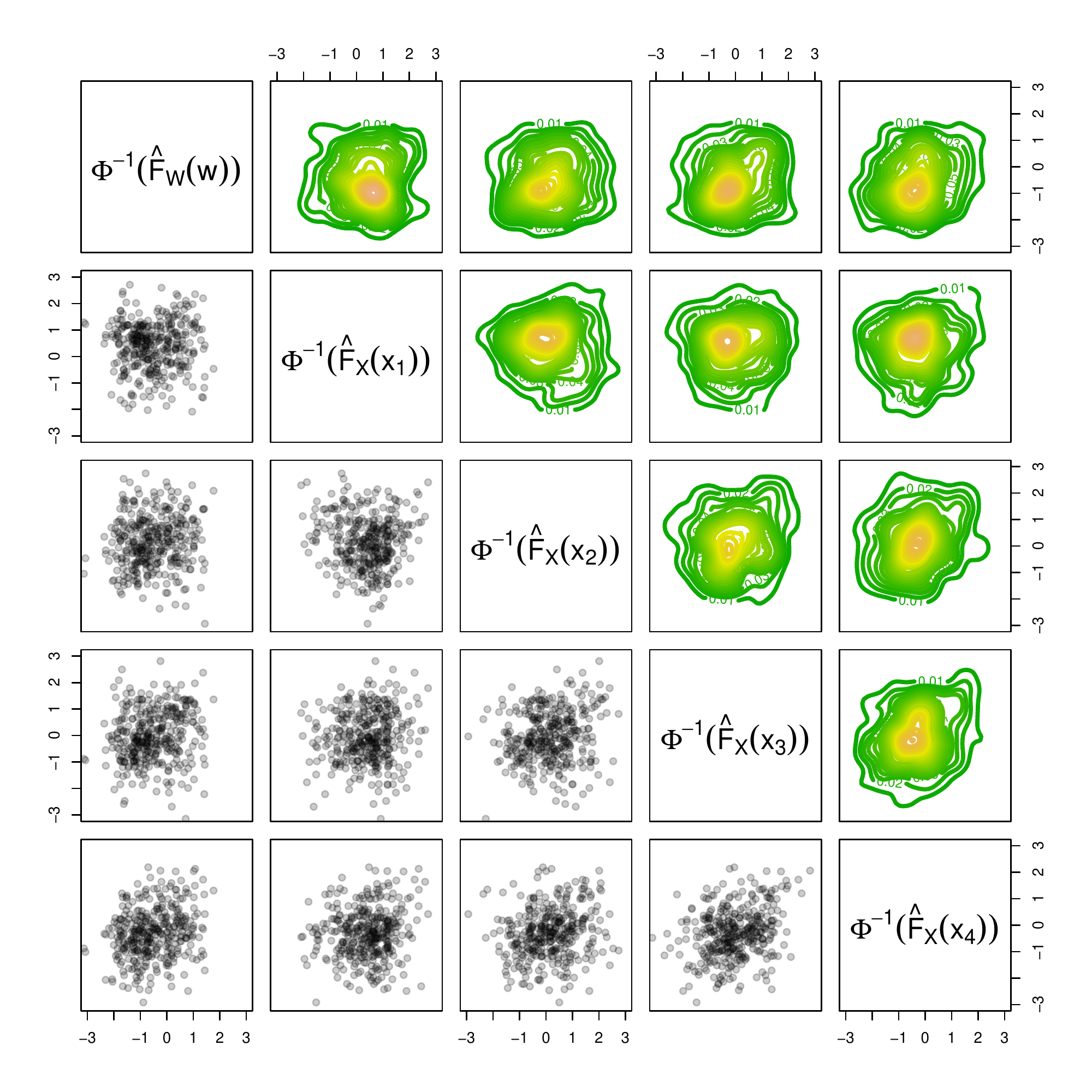}\includegraphics[width=0.5\textwidth]{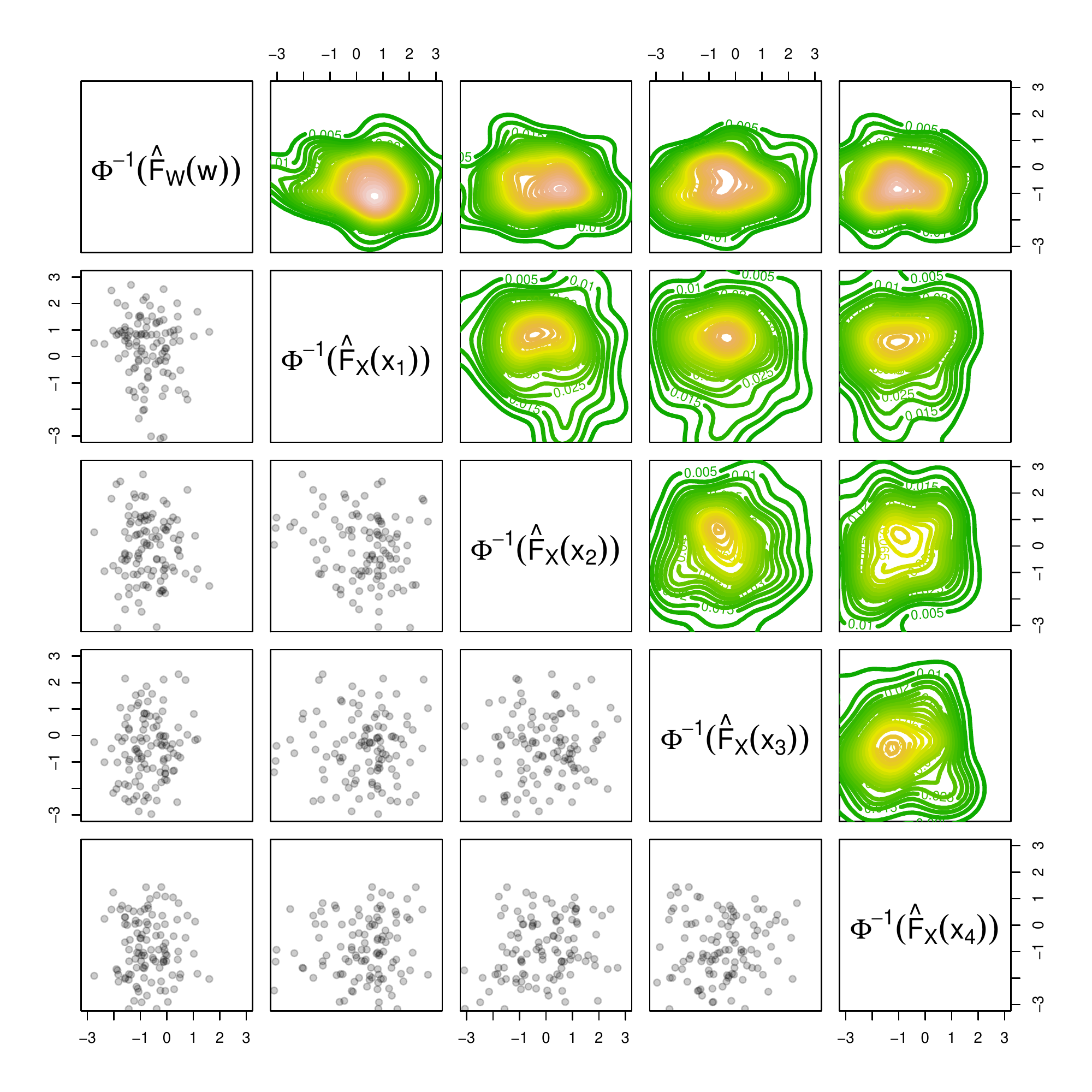}
\caption{Pairwise relationship between the reporting delay and the claim payment amounts (bodily injury claims---left; material damage claims---right). Subfigures below the diagonal show scatter plots for the transformed reporting delays/payment amounts $\Phi^{-1}\{\hat{F}_j(\cdot)\}$ versus $\Phi^{-1}\{\hat{F}_k(\cdot)\}$, where $j,k\in\{W,X_1,X_2,X_3,X_4\}$, $j\neq k$, $\hat{F}_j$ is the corresponding estimated cdf, and $\Phi$ is the cdf of the standard normal distribution. Subfigures above the diagonal display contour plots of $\Phi^{-1}\{\hat{F}_j(\cdot)\}$ against $\Phi^{-1}\{\hat{F}_k(\cdot)\}$.}
\label{fig:WvsX}
\end{center}
\end{figure}


Using similar arguments as in~\cite{HjortPollard2011}, one can prove consistency and asymptotic normality for the ML estimator~$\widehat{\boldsymbol\varsigma}$. The procedure for modeling the claim payments closely resembles the procedure mentioned when modeling the reporting delays in Subsection~\ref{subsec:ReportingDelay}. For the modeling of the payment amounts, we also considered more complex models, where previous payments were included as exogenous variables. This however did not bring any improvements.

Figure~\ref{fig:numacc1} presents the time-varying parameters of log-normal distribution of bodily injury as well as material damage claims for the first payment in yellow (fully flexible model with two seasonal periods) and in grey (separately for each week). Other curves present parameters for all the payment amounts pooled together. As there is no much difference and results for the pooled models seemed to be more stable, we concentrate in the later only on the pooled ones, namely the most flexible with two periods and a~linear trend.

\begin{figure}[!ht]
\begin{center}
\includegraphics[width=.97\textwidth]{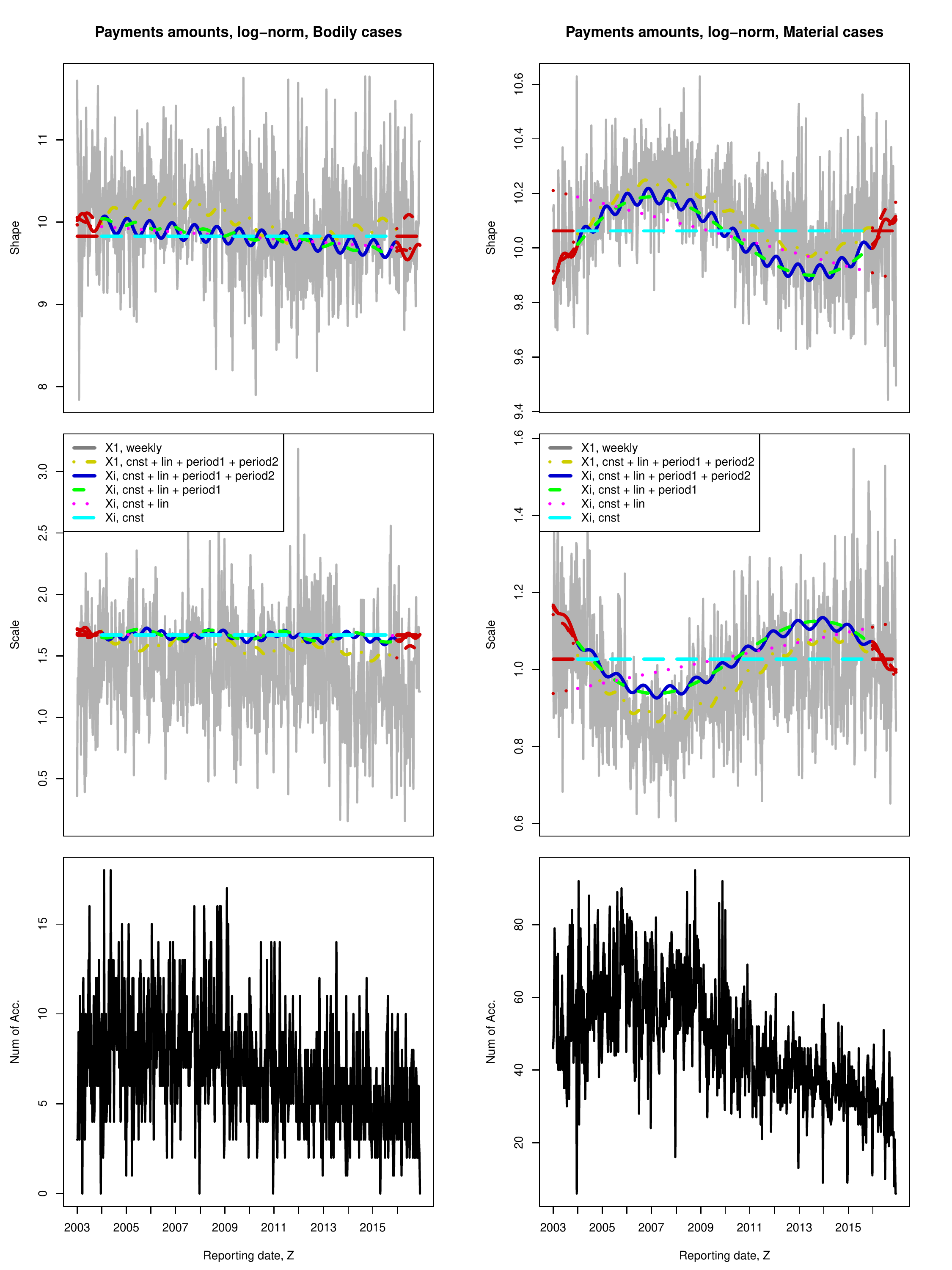}
\caption{Weekly estimates (only $X_{1,t}$ in solid grey) and conditional temporal models: constant (cyan dashed), linear trend (pink dotted), linear trend and one period (green dashed), and linear trend with two periods (blue solid for $X_{i,t}$ and yellow dot-dashed for $X_{1,t}$) for shape (top panel) and scale (middle panel) of the log-normal distribution of the $X_{i,t}$. The extrapolated periods are depicted in red and the numbers of accidents are in black.}
\label{fig:numacc1}
\end{center}
\end{figure}

\section{Practical application and empirical results}\label{sec:practical}
To numerically illustrate the performance of our method, we use two data sets---bodily injury and material damage claims (cf.~motivation and data description in Section~\ref{sec:data}). Let us recall that data from the last available year 2016 are used only for back-testing and comparison with the predicted results. Furthermore, our `micro' (granular, claim-by-claim) approach is also compared with a~traditional standard actuarial technique---\emph{bootstrap chain-ladder} \citep{EV1999}---in combination with linear extrapolation of the reported claims in the next year. This `macro' approach is based on aggregation of data and, hence, it disregards the information about the policy and the claim's development. We refer to it from now on as the aggregated method.


Firstly, the previously described estimation procedures (Section~\ref{sec:theory}) provide parameter estimates of our omnibus model. Secondly, a~Monte Carlo prediction technique is involved in order to generate (simulate) the future claims' developments. In essence, prediction for a~distribution of the total payments in year 2016, for which we also possess the real paid claim amounts.

\subsection{Parameter estimation}
All the estimates are obtained through the ML approach, which guaranties a~proper stochastic inference. For the case of densities, it is widely known that under some regularity conditions the ML estimators are consistent and asymptotically normal. For the case of intensities, we have proved consistency and asymptotic normality of these ML estimators. Consequently, one can plug-in the estimated parameters into the parametric forms of the densities and intensities present in our micro model in order to have predicted (fitted) intensities of the reporting dates/payment dates and densities of the reporting delays/payment amounts. They are going to be used for the simulation of the future payments (dates and amounts).

In particular, the intensity function for modeling the reporting times of bodily injury claims comes from Example~\ref{ex:Zbodily} and in case of material damage claims from Example~\ref{ex:Zmaterial}. The reporting delay and the claim payments are modeled as in relations~\eqref{eq:LN}--\eqref{eq:MLEvartheta2}. And the intensity functions for the payment times of bodily injury as well as material damage claims come from Example~\ref{ex:N}.

\subsection{Monte Carlo predictions}
Basically for each claim being reported up to the future time point~$b$ from Figure~\ref{fig:illustration} (e.g., end of the next calendar year), one needs to simulate payment dates and corresponding payment amounts, which are going to be summed in each simulation's run. These sums of payments give us the simulated (empirical) predictive distribution of the total future payments. Hence, for the next year (in a~general future time window $(a,b]$), we need to simulate the new payment dates for all already reported claims as well as the payment dates corresponding to the incurred but not reported claims. Consequently, it is requisite to generate a~corresponding payment amount for every payment time within the time interval $(a,b]$.

Let us realize that we need to generate many realizations of the non-homogeneous Poisson process for each Monte Carlo simulation's run. To simulate the non-homoge\-neous Poisson process, we suggest to rely on the \emph{thinning} algorithm by~\cite{LS1979}. The main reason for choosing this way to generate enormous number of realizations of the non-homogeneous Poisson process is that this approach can be applied to any rate function without the necessity of numerical integration or simulation of Poisson variables.

\subsubsection{Primary goal: Prediction of distribution of the future cash flows}
Our primary target is to predict the distribution of the total payment amounts within the future time period. Such a~prediction is going to be obtained through Procedure~\ref{alg:primary}.
\begin{algorithm}[!ht]
\caption{Prediction of the distribution of the future payments}
\label{alg:primary}
\begin{algorithmic}[1]
\REQUIRE Collection of observations $\{T_i,Z_i,\{U_{i,j}\}_{j=1}^{N_i(a)},\{X_{i,j}\}_{j=1}^{N_i(a)}\}_{i=1}^{M(a)}$ and number of Monte Carlo simulation's runs~$S$
\ENSURE Simulated predictive distribution of the total future payments in $(a,b]$, i.e., the empirical distribution where probability mass $1/S$ concentrates at each of ${}_{(1)}P(a,b),\ldots,{}_{(S)}{}P(a,b)$
\STATE obtain the ML estimator $\widehat{\brho}$ as in~\eqref{eq:MLErho} for the parametric intensity of the reporting dates
\STATE compute the ML estimator $\widehat{\bth}$ as in~\eqref{eq:MLEtheta} for the parametric intensities of the payments dates
\STATE calculate the ML estimator $\widehat{\boldsymbol\varsigma}$ in the same manner as in~\eqref{eq:MLEvartheta1}--\eqref{eq:MLEvartheta2} for the parametric densities of the payment amounts
\FOR[repeat in order to obtain the empirical distribution]{$s=1$ to $S$}
\STATE generate a~realization of the non-homogeneous Poisson process $\{{}_{(s)}M(t)\}_{t\geq 0}$ with intensity $\psi(t;\widehat{\brho})$ for the future time window $(a,b]$ as the arrival times $\{{}_{(s)}Z_{M(a)+1},\ldots,{}_{(s)}Z_{{}_{(s)}M(b)}\}$ representing the future reporting dates
\FOR[payments for the (old) already reported claims]{$i=1$ to $M(a)$}
\STATE generate a~realization of the non-homogeneous Poisson process $\{{}_{(s)}N_i(t)\}_{t\geq 0}$ with intensity $\lambda(t,Z_i;\widehat{\bth})$ for the time window $(a,b]$ as the arrival times $\{{}_{(s)}U_{i,N_i(a)+1},\ldots,{}_{(s)}U_{i,{}_{(s)}N_i(b)}\}$ representing the future payments dates
\STATE generate the payment amounts $\{{}_{(s)}X_{i,N_i(a)+1},\ldots,{}_{(s)}X_{i,{}_{(s)}N_i(b)}\}$ independently from the density $f_X\{\cdot;v(Z_i,\widehat{\boldsymbol{\varsigma}})\}$
\ENDFOR
\FOR[payments for the (future) new reported claims]{$i=M(a)+1$ to ${}_{(s)}M(b)$}
\STATE generate a~realization of the non-homogeneous Poisson process $\{{}_{(s)}N_i(t)\}_{t\geq 0}$ with intensity $\lambda(t,{}_{(s)}Z_i;\widehat{\bth})$ for the time window $(a,b]$ as the arrival times $\{{}_{(s)}U_{i,N_i(a)+1},\ldots,{}_{(s)}U_{i,{}_{(s)}N_i(b)}\}$ representing the future payments dates
\STATE generate the payment amounts $\{{}_{(s)}X_{i,N_i(a)+1},\ldots,{}_{(s)}X_{i,{}_{(s)}N_i(b)}\}$ independently from the density $f_X\{\cdot;v({}_{(s)}Z_i,\widehat{\boldsymbol{\varsigma}})\}$
\ENDFOR
\STATE calculate the total future payments ${}_{(s)}P(a,b)=\sum_{i=1}^{{}_{(s)}M(b)}\sum_{j=N_i(a)+1}^{{}_{(s)}N_i(b)}{}_{(s)}X_{i,j}$
\ENDFOR
\end{algorithmic}
\end{algorithm}

The predicted distribution of the forthcoming payment amounts within the next year is graphically displayed in Figure~\ref{fig:main}. Here, our micro approach is compared with the traditional method based on data aggregation. Moreover, the true cumulative amount of the one-year ahead payments is depicted in order to judge the point prediction's precision.

\begin{figure}[!ht]
\begin{center} \includegraphics[width=0.5\textwidth]{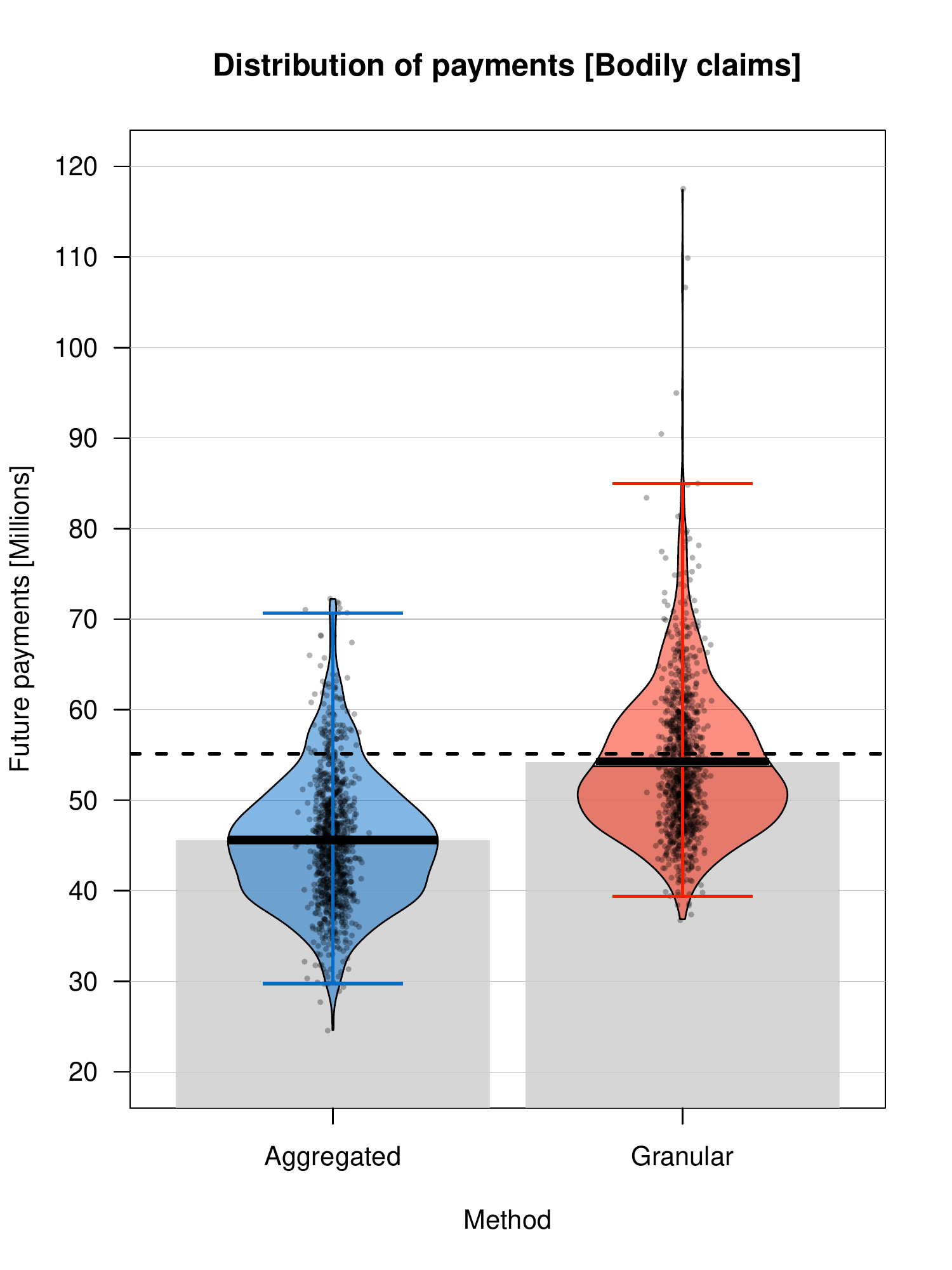}\includegraphics[width=0.5\textwidth]{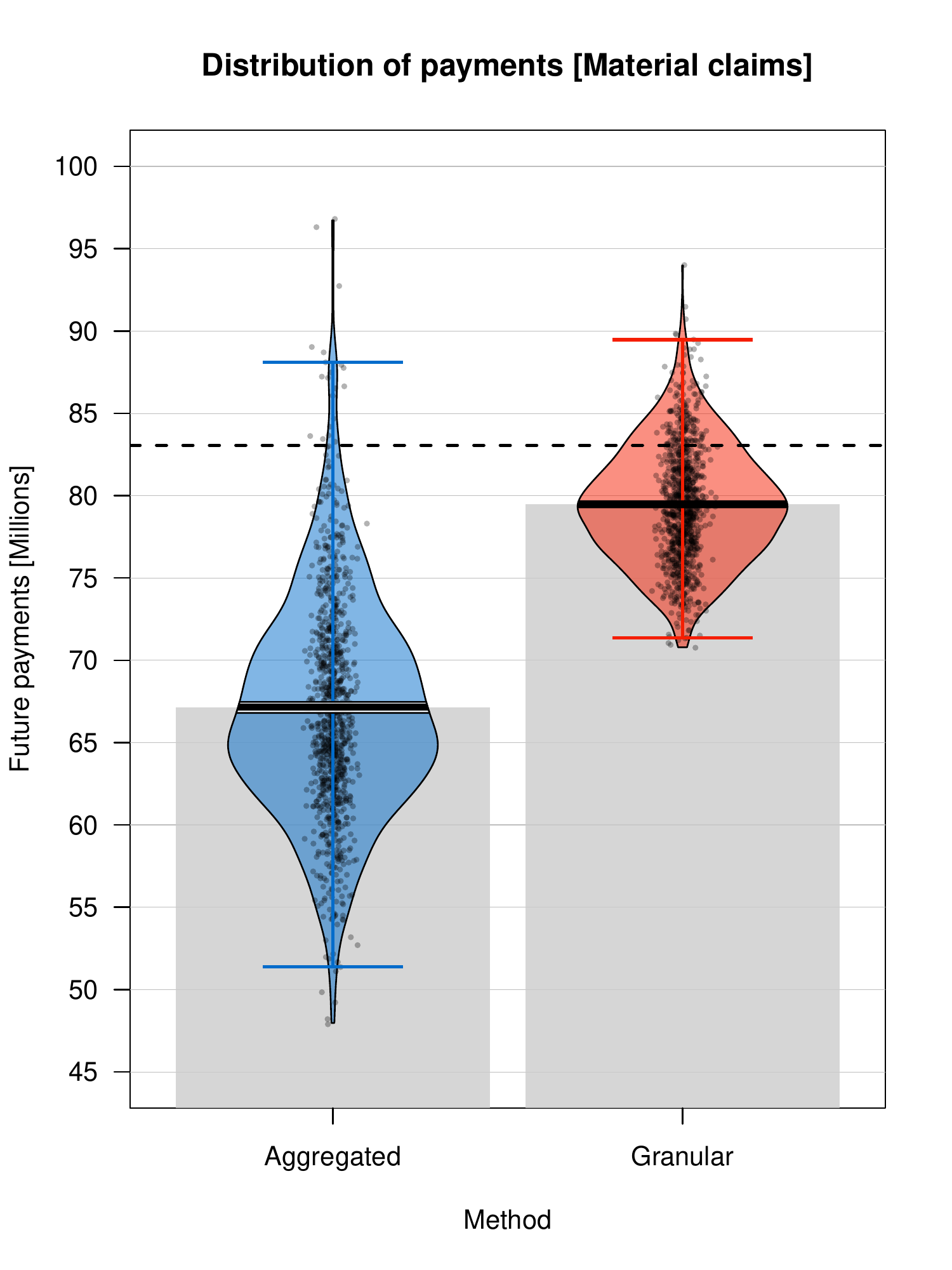}
\caption{Prediction of the distribution of the forthcoming payments for the next year (primary aim)---traditional (aggregated) method in blue, micro (granular) method in red. Bold solid horizontal line corresponds to the median of the predictive distribution. Height of the grey vertical bar corresponds to the mean of the predictive distribution. Colored solid horizontal whiskers represent the $0.5$th and the $99.5$th percentiles of the predictive distribution. Dashed horizontal line stands for the real (true) sum of payments.}
\label{fig:main}
\end{center}
\end{figure}

There are two general and, from a~practical point of view, very important findings with respect to the prediction of the future total payment amounts. First, our claim-by-claim based method is \emph{more precise} in point prediction to the `unknown' true value compared to the traditional technique based on aggregated data. And this holds for both lines of business. Second, our micro approach provides \emph{less volatile} predicted distribution, e.g., in terms of the coefficient of variation.

\subsubsection{Secondary goal: Back-prediction of the truncated occurrence times}
Our secondary practical target is to back-predict the accident dates of the claims, which are \emph{truncated} due to the reporting delay. We are indeed not aware of so-called incurred but not reported claims and the insurance company needs to back-predict these claims, which have already occurred, but have not been reported yet. This can be reached via Procedure~\ref{alg:secondary}.
\begin{algorithm}[!ht]
\caption{Estimation of the intensity of the accident dates}
\label{alg:secondary}
\begin{algorithmic}[1]
\REQUIRE Observations $\{T_i,Z_i\}_{i=1,\ldots,M(a)}$
\ENSURE Fitted intensity $\widehat{\mu}$ for the underlying Poisson process of the accident dates
\STATE obtain the ML estimator $\widehat{\brho}$ as in~\eqref{eq:MLErho} for the parametric intensity of the reporting dates
\STATE calculate the ML estimator $\widehat{\bvartheta}$ as in~\eqref{eq:MLEvartheta1}--\eqref{eq:MLEvartheta2} for the parametric densities of the reporting delays
\STATE get the estimator of the intensity $\mu$ as in~\eqref{eq:displacement}, i.e., by plugging-in the corresponding estimates~$\widehat{\brho}$ and~$\widehat{\bvartheta}$ and performing numerical integration $\widehat{\mu}(t)\equiv\mu(t;\widehat{\brho},\widehat{\bvartheta})=\int_{\mathbb{R}}\psi(u;\widehat{\brho})f_{W}\{t;w(u,\widehat{\bvartheta})\}\ud u$
\end{algorithmic}
\end{algorithm}



The counts of the back-predicted accident dates for the latest year as well as the counts of the predicted accident dates for the next year are visualized in Figure~\ref{fig:AccidentDates}.

\begin{figure}[!ht]
\begin{center}
\includegraphics[width=0.5\textwidth]{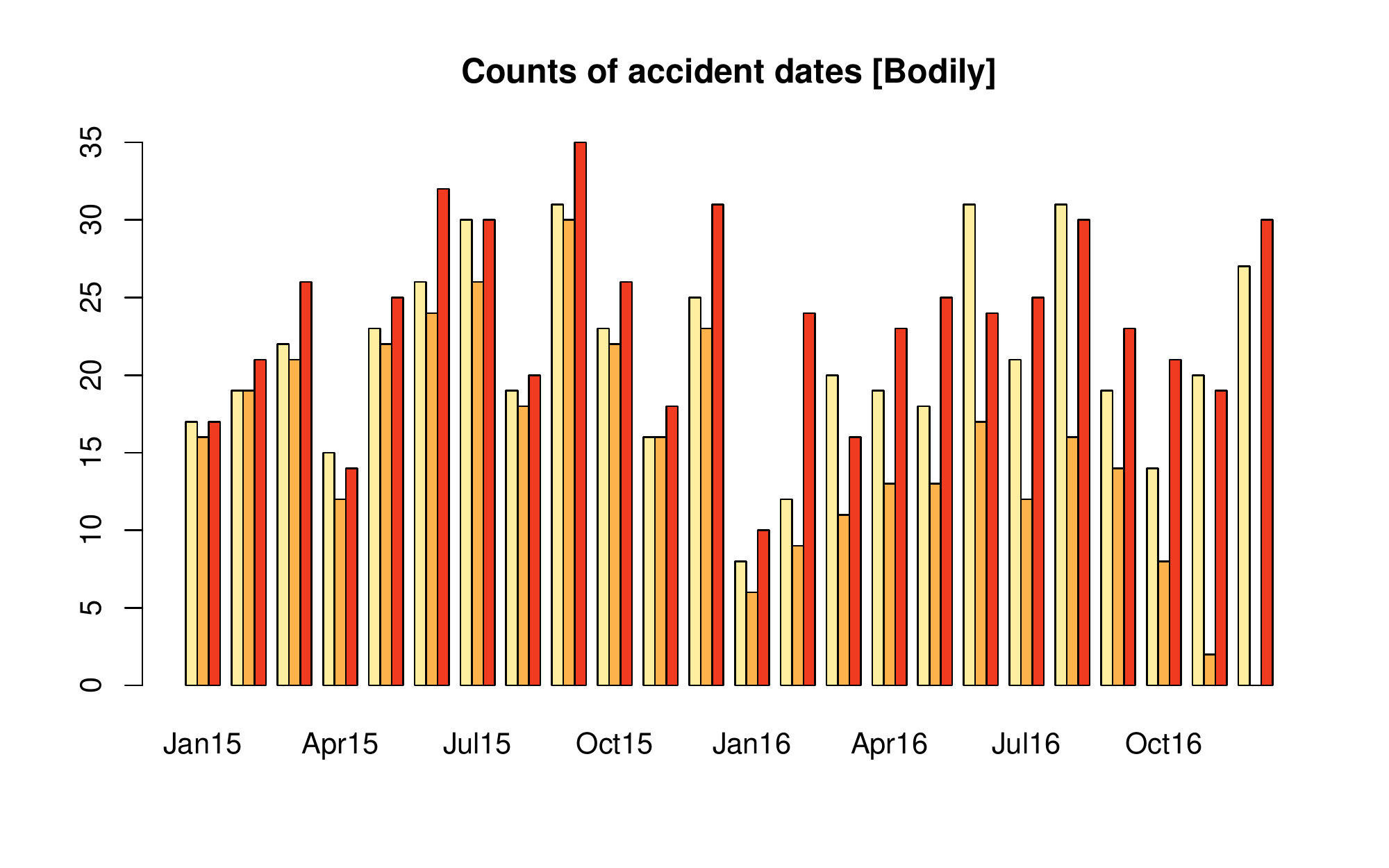}\includegraphics[width=0.5\textwidth]{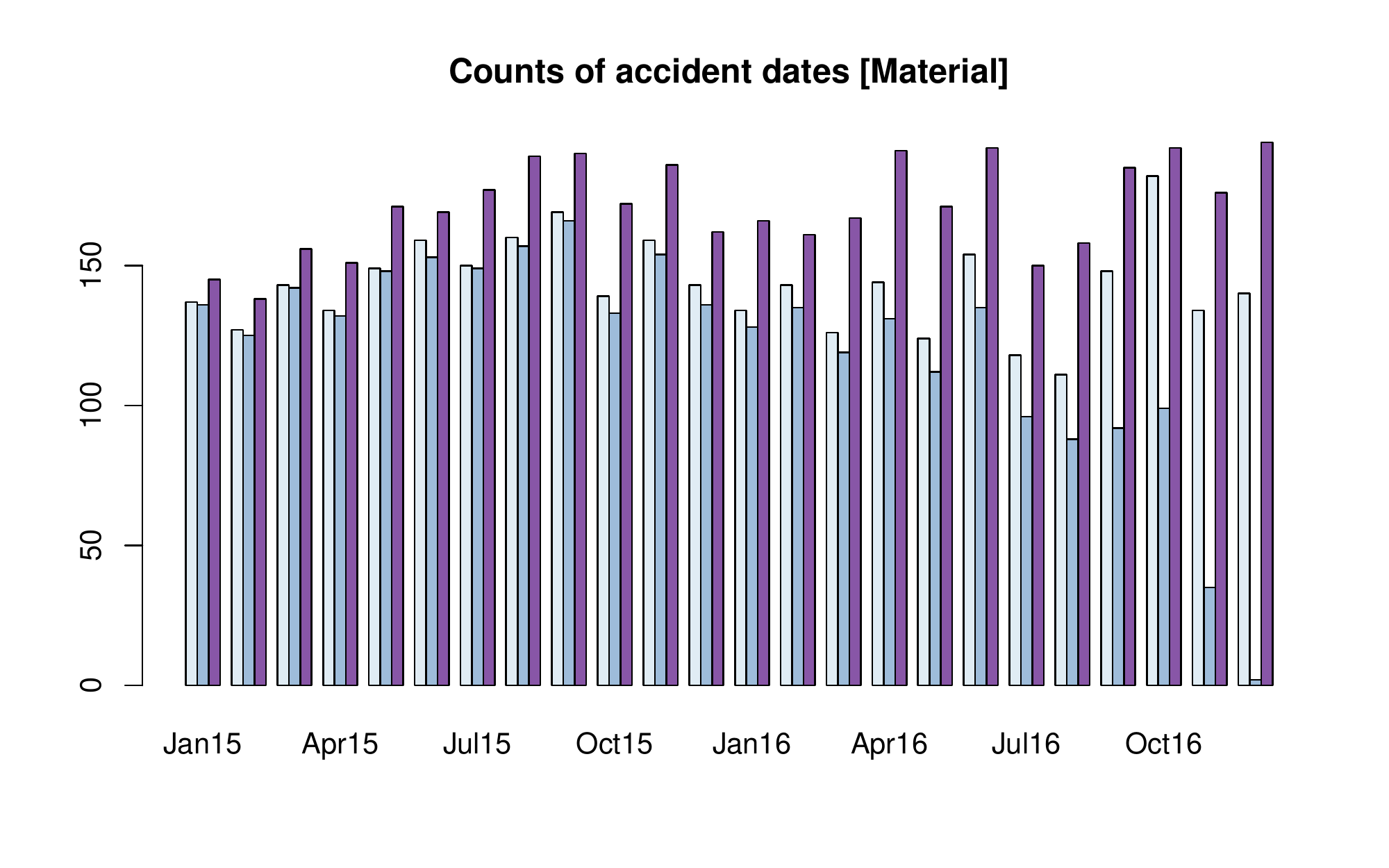}
\caption{Predicted truncated accident dates (secondary aim)---triplets of bars represent: observed counts of the accident dates from the database up to the end of 2016 (in yellow/light blue); observed counts of the accident dates from the database up to the end of 2015 (in orange/darker blue); and predicted counts of the accident dates based on the data till the end of 2015 (in red/violet).}
\label{fig:AccidentDates}
\end{center}
\end{figure}

It is of utmost importance for the insurance company to have information regarding the accidents, which have not been reported yet. Figure~\ref{fig:AccidentDates} reveals triplets of bars, that nicely accommodate the problem of truncated data in our setup. The middle horizontal bar in each triplet corresponds to the known number of accident dates based on the database up to year 2015. The left bar in the triplet stands again for the known number of accident dates, although, coming from the database up to year 2016. Therefore, the left bar has to be higher than the middle one, because additional claims can be reported within the calendar year 2016 (and they can occur in 2016 or even in previous years). The right bar in the triplet represents our prediction. It is supposed to be slightly higher even than the left bar, because there can occur additional accidents even before the end of year 2016 that are not going to be reported till the end of year 2016.

\section{Conclusions}\label{sec:conc}
\emph{Micro forecasting} is in general a~stochastic prediction method for future losses/costs relying on the individual developments of the recorded historical events. Our prediction approach is capable to model the probabilistic behavior of the future losses' occurrences, the occurrences of the incurred but not reported losses, the lengths of the reporting delays, and the frequency and severity of the loss payments in time. This is indeed sufficient for prediction of the future cash-flows for a~predetermined time horizon.

We employ the micro prediction technique in claims reserving. To meet all future insurance claims rising from policies, it is requisite to quantify the outstanding loss liabilities. Here, \emph{utility for solvency of the insurance company} is developed. And, clearly, valuation of the reserving risk in insurance is not the only area of empirical economics, where the proposed methodology can be applied as documented by several case examples.

Quantifying reserving risk in non-life insurance inadvertently yields to a~\emph{theoretical framework of the marked non-homogeneous Poisson process with non-homogeneous Poisson processes as marks}. It can be viewed as an~infinitely stochastic Poisson process and, consequently, a~proper statistical inference relying on simple and verifiable assumptions is derived.





\subsection*{Acknowledgements}
The research of Mat\'{u}\v{s} Maciak and Michal Pe\v{s}ta was supported by the Czech Science Foundation project GA\v{C}R No.~18-01781Y.



\appendix

\section{Proofs}
\begin{proof}[Proof of Theorem~\ref{thm:consistency}]
Let us choose $t>0$. With respect to Assumption~\ref{ass:convex}, consider the convex function
\begin{equation*}
H_t(\bss):=\sum_{i=1}^{M(t)}\left[h\left\{Z_i;\brho_0+\mathcal{I}^{-1/2}(t,\brho_0)\bss,t\right\}-h(Z_i;\brho_0,t)\right]
\end{equation*}
in $\bss\in\mathbb{R}^{q}$. It is minimized by $\mathcal{I}^{1/2}(t,\brho_0)\left(\widehat{\brho}-\brho_0\right)$. The Taylor series expansion gives
\begin{multline}\label{eq:repre}
H_t(\bss)=\bss^{\top}\underbrace{\mathcal{I}^{-1/2}(t,\brho_0)\sum_{i=1}^{M(t)}\partial_{\brho_0}h\left(Z_i;\brho,t\right)}_{=:\bU(t)}\\
+\frac{1}{2}\bss^{\top}\underbrace{\mathcal{I}^{-1/2}(t,\brho_0)\left\{\sum_{i=1}^{M(t)}\partial^2_{\brho_0}h(Z_i;\brho,t)\right\}\mathcal{I}^{-1/2}(t,\brho_0)}_{=:\bV(t)}\bss+r_t(\bss)
\end{multline}
almost surely, where $r_t(\bss)=M(t)o\!\left\{\bss^{\top}\mathcal{I}^{-1}(t,\brho_0)\bss\right\}\to 0$ in probability for $t\to\infty$, because of Assumption~\ref{ass:infmat}~{(i)}.

Assumption~\ref{ass:interchange1} assures that there is a~closed sub-neighborhood of~$\brho_0$ denoted by~$\mathscr{U}_t(\brho_0)$ such that for all $\brho\in\mathscr{U}_t(\brho_0)$, it holds that $\int_0^t\left|\partial_{\brho_0,j}\psi(z;\brho)\right|\ud z<\infty$ for all $j=1,\ldots,q$ and one can interchange derivative and integral, i.e., $\partial_{\brho_0}\int_0^t\psi(z;\brho)\ud z=\int_0^{t}\partial_{\brho_0}\psi(z;\brho)\ud z$.

Let us realize that the sequence $\{Z_i\}_{i\in\mathbb{N}}$ forms arrival times of the Poisson counting process $\{M(t)\}_{t\geq 0}$ and, hence,
\[
\sum_{i=1}^{M(t)}\frac{\partial_{\brho_0}\psi(Z_i;\brho)}{\psi(Z_i;\brho_0)}=\int_0^{t}\frac{\partial_{\brho_0}\psi(z;\brho)}{\psi(z;\brho_0)}\ud M(z).
\]
Recall that $h(Z_i;\brho,t)=\frac{1}{M(t)}\Psi(t;\brho)-\log \psi(Z_i;\brho)$. Since $\frac{\partial_{\brho_0}\psi(\cdot;\brho)}{\psi(\cdot;\brho_0)}$ is continuous, we obtain
\begin{align*}
\E \bU(t)&=\E\left[\mathcal{I}^{-1/2}(t,\brho_0)\sum_{i=1}^{M(t)}\partial_{\brho_0}\left\{\frac{1}{M(t)}\Psi(t;\brho)-\log \psi(Z_i;\brho)\right\}\right]\\
&=\mathcal{I}^{-1/2}(t,\brho_0)\E\left\{\partial_{\brho_0}\int_0^t\psi(z;\brho)\ud z-\sum_{i=1}^{M(t)}\frac{\partial_{\brho_0}\psi(Z_i;\brho)}{\psi(Z_i;\brho_0)}\right\}\\
&=\mathcal{I}^{-1/2}(t,\brho_0)\E\left\{\partial_{\brho_0}\int_0^t\psi(z;\brho)\ud z-\int_0^{t}\frac{\partial_{\brho_0}\psi(z;\brho)}{\psi(z;\brho_0)}\ud M(z)\right\}\\
&=\mathcal{I}^{-1/2}(t,\brho_0)\left\{\partial_{\brho_0}\int_0^t\psi(z;\brho)\ud z-\int_0^{t}\frac{\partial_{\brho_0}\psi(z;\brho)}{\psi(z;\brho_0)}\psi(z;\brho_0)\ud z\right\}=\zero.
\end{align*}
One can apply the It\^{o} isometry for jump processes
\begin{align}
\Var\bU(t)&=\E \{\bU(t)\}^{\otimes 2}=\E\left[\mathcal{I}^{-1/2}(t,\brho_0)\sum_{i=1}^{M(t)}\partial_{\brho_0}\left\{\frac{1}{M(t)}\Psi(t;\brho)-\log \psi(Z_i;\brho)\right\}\right]^{\otimes 2}\nonumber\\
&=\mathcal{I}^{-1/2}(t,\brho_0)\E\left\{\partial_{\brho_0}\int_0^t\psi(z;\brho)\ud z-\int_0^{t}\frac{\partial_{\brho_0}\psi(z;\brho)}{\psi(z;\brho_0)}\ud M(z)\right\}^{\otimes 2}\mathcal{I}^{-1/2}(t,\brho_0)\nonumber\\
&=\mathcal{I}^{-1/2}(t,\brho_0)\E\Bigg[\left\{\partial_{\brho_0}\int_0^t\psi(z;\brho)\ud z\right\}^{\otimes 2}+\left\{\int_0^{t}\frac{\partial_{\brho_0}\psi(z;\brho)}{\psi(z;\brho_0)}\ud M(z)\right\}^{\otimes 2}\Bigg.\nonumber\\
&\qquad\Bigg.-\left\{\partial_{\brho_0}\int_0^t\psi(z;\brho)\ud z\right\}\left\{\int_0^{t}\frac{\partial_{\brho_0}\psi(z;\brho)}{\psi(z;\brho_0)}\ud M(z)\right\}^{\top}\Bigg.\nonumber\\
&\qquad\Bigg.-\left\{\int_0^{t}\frac{\partial_{\brho_0}\psi(z;\brho)}{\psi(z;\brho_0)}\ud M(z)\right\}\left\{\partial_{\brho_0}\int_0^t\psi(z;\brho)\ud z\right\}^{\top}\Bigg]\mathcal{I}^{-1/2}(t,\brho_0)\nonumber\\
&=\mathcal{I}^{-1/2}(t,\brho_0)\Bigg[\int_0^{t}\frac{\left\{\partial_{\brho_0}\psi(z;\brho)\right\}^{\otimes 2}}{\psi^2(z;\brho_0)}\psi(z;\brho_0)\ud z\Bigg]\mathcal{I}^{-1/2}(t,\brho_0)\nonumber\\
&=\mathcal{I}^{-1/2}(t,\brho_0)\mathcal{I}(t;\brho_0)\mathcal{I}^{-1/2}(t,\brho_0)={\boldsymbol I},\label{eq:VarUt}
\end{align}
due to Assumptions~\ref{ass:interchange1}. Moreover,
\begin{align}
\E \bV(t)&=\E\left[\mathcal{I}^{-1/2}(t,\brho_0)\sum_{i=1}^{M(t)}\partial_{\brho_0}^2\left\{\frac{1}{M(t)}\Psi(t;\brho)-\log \psi(Z_i;\brho)\right\}\mathcal{I}^{-1/2}(t,\brho_0)\right]\nonumber\\
&=\mathcal{I}^{-1/2}(t,\brho_0)\E\Bigg[\partial_{\brho_0}^2\int_0^t\psi(z;\brho)\ud z\Bigg.\nonumber\\
&\quad\Bigg.-\sum_{i=1}^{M(t)}\left(\frac{\partial_{\brho_0}^2\psi(Z_i;\brho)}{\psi(Z_i;\brho_0)}-\frac{\left\{\partial_{\brho_0}\psi(Z_i;\brho)\right\}^{\otimes 2}}{\psi^2(Z_i;\brho_0)}\right)\Bigg]\mathcal{I}^{-1/2}(t,\brho_0)\nonumber\\
&=\mathcal{I}^{-1/2}(t,\brho_0)\E\Bigg[\partial_{\brho_0}^2\int_0^t\psi(z;\brho)\ud z\Bigg.\nonumber\\
&\quad\Bigg.-\int_0^{t}\left(\frac{\partial_{\brho_0}^2\psi(z;\brho)}{\psi(z;\brho_0)}-\frac{\left\{\partial_{\brho_0}\psi(z;\brho)\right\}^{\otimes 2}}{\psi^2(z;\brho_0)}\right)\ud M(z)\Bigg]\mathcal{I}^{-1/2}(t,\brho_0)\nonumber\\
&=\mathcal{I}^{-1/2}(t,\brho_0)\Bigg[\partial_{\brho_0}^2\int_0^t\psi(z;\brho)\ud z\Bigg.\nonumber\\
&\quad\Bigg.-\int_0^{t}\left(\frac{\partial_{\brho_0}^2\psi(z;\brho)}{\psi(z;\brho_0)}-\frac{\left\{\partial_{\brho_0}\psi(z;\brho)\right\}^{\otimes 2}}{\psi^2(z;\brho_0)}\right)\psi(z;\brho_0)\ud z\Bigg]\mathcal{I}^{-1/2}(t,\brho_0)\nonumber\\
&=\mathcal{I}^{-1/2}(t,\brho_0)\int_0^t\frac{\left\{\partial_{\brho_0}\psi(z;\brho)\right\}^{\otimes 2}}{\psi(z;\brho_0)}\ud z\,\mathcal{I}^{-1/2}(t,\brho_0)\nonumber\\
&=\mathcal{I}^{-1/2}(t,\brho_0)\mathcal{I}(t;\brho_0)\mathcal{I}^{-1/2}(t,\brho_0)={\boldsymbol I},\label{eq:EVt}
\end{align}
because of Assumption~\ref{ass:interchange1}. Furthermore for every $\bss\in\mathbb{R}^q$, it holds that
\[
\Var\left\{\bss^{\top}\bV(t)\bss\right\}=\bss^{\top}\Var\left\{\bV(t)\bss\right\}\bss=\bss^{\top}\left[\E\left\{\bV(t)\bss\right\}^{\otimes 2}-\left\{\E\bV(t)\bss\right\}^{\otimes 2}\right]\bss.
\]
The $(j,k)$-element of $\E\left\{\bV(t)\bss\right\}^{\otimes 2}\equiv\E\left\{\bV(t)\bss\bss^{\top}\bV(t)\right\}$ has a~form of
\begin{multline}\label{eq:jkelement}
\E\sum_{\ell=1}^q\sum_{m=1}^q s_{\ell}s_m\left(\bV(t)\right)_{j,\ell}\left(\bV(t)\right)_{m,k}=\E\sum_{\ell=1}^q\sum_{m=1}^q s_{\ell}s_m\\
\times\sum_{\tilde{\ell}=1}^q\sum_{\breve{\ell}=1}^q\kappa_{j,\tilde{\ell}}(t)\left(\sum_{i=1}^{M(t)}\partial^2_{\brho_0}h(Z_i;\brho,t)\right)_{\tilde{\ell},\breve{\ell}}\kappa_{\breve{\ell},\ell}(t)
\sum_{\tilde{m}=1}^q\sum_{\breve{m}=1}^q\kappa_{m,\tilde{m}}(t)\left(\sum_{i=1}^{M(t)}\partial^2_{\brho_0}h(Z_i;\brho,t)\right)_{\tilde{m},\breve{m}}\kappa_{\breve{m},k}(t),
\end{multline}
where $\bss\bss^{\top}=\left(s_{\ell}s_m\right)_{\ell=1,m=1}^{q,q}$ and $\mathcal{I}^{-1/2}(t,\brho_0)=:\left(\kappa_{\ell,m}(t)\right)_{\ell=1,m=1}^{q,q}$. Let us calculate
\begin{align}
&\E\left\{\left(\sum_{i=1}^{M(t)}\partial^2_{\brho_0}h(Z_i;\brho,t)\right)_{j,\ell}\left(\sum_{i=1}^{M(t)}\partial^2_{\brho_0}h(Z_i;\brho,t)\right)_{m,k}\right\}\nonumber\\
&=\E\Bigg[\Bigg(\partial_{\brho_0}^2\int_0^t\psi(z;\brho)\ud z-\sum_{i=1}^{M(t)}\left\{\frac{\partial_{\brho_0}^2\psi(Z_i;\brho)}{\psi(Z_i;\brho_0)}-\frac{\left\{\partial_{\brho_0}\psi(Z_i;\brho)\right\}^{\otimes 2}}{\psi^2(Z_i;\brho_0)}\right\}\Bigg)_{j,\ell}\Bigg.\nonumber\\
&\qquad\Bigg.\times\Bigg(\partial_{\brho_0}^2\int_0^t\psi(z;\brho)\ud z-\sum_{i=1}^{M(t)}\left\{\frac{\partial_{\brho_0}^2\psi(Z_i;\brho)}{\psi(Z_i;\brho_0)}-\frac{\left\{\partial_{\brho_0}\psi(Z_i;\brho)\right\}^{\otimes 2}}{\psi^2(Z_i;\brho_0)}\right\}\Bigg)_{m,k}\Bigg]\nonumber\\
&=\E\Bigg[\Bigg\{\partial^2_{\brho_0,j,\ell}\int_0^t\psi(z;\brho)\ud z\Bigg\}\Bigg\{\partial^2_{\brho_0,m,k}\int_0^t\psi(z;\brho)\ud z\Bigg\}\Bigg.\Bigg.\nonumber\\
&\quad\Bigg.\Bigg. + \Bigg\{\int_0^{t}\left(\frac{\partial_{\brho_0,j,\ell}^2\psi(z;\brho)}{\psi(z;\brho_0)}-\frac{\partial_{\brho_0,j}\psi(z;\brho)\partial_{\brho_0,\ell}\psi(z;\brho)}{\psi^2(z;\brho_0)}\right)\ud M(z)\Bigg\}\Bigg.\Bigg.\nonumber\\
&\qquad\Bigg.\Bigg. \times\Bigg\{\int_0^{t}\left(\frac{\partial_{\brho_0,m,k}^2\psi(z;\brho)}{\psi(z;\brho_0)}-\frac{\partial_{\brho_0,m}\psi(z;\brho)\partial_{\brho_0,k}\psi(z;\brho)}{\psi^2(z;\brho_0)}\right)\ud M(z)\Bigg\}\Bigg.\Bigg.\nonumber\\
&\quad\Bigg.\Bigg. -\Bigg\{\partial^2_{\brho_0,j,\ell}\int_0^t\psi(z;\brho)\ud z\Bigg\}\Bigg\{\int_0^{t}\left(\frac{\partial_{\brho_0,m,k}^2\psi(z;\brho)}{\psi(z;\brho_0)}-\frac{\partial_{\brho_0,m}\psi(z;\brho)\partial_{\brho_0,k}\psi(z;\brho)}{\psi^2(z;\brho_0)}\right)\ud M(z)\Bigg\}\Bigg.\Bigg.\nonumber\\
&\quad\Bigg.\Bigg. -\Bigg\{\partial^2_{\brho_0,m,k}\int_0^t\psi(z;\brho)\ud z\Bigg\}\Bigg\{\int_0^{t}\left(\frac{\partial_{\brho_0,j,\ell}^2\psi(z;\brho)}{\psi(z;\brho_0)}-\frac{\partial_{\brho_0,j}\psi(z;\brho)\partial_{\brho_0,\ell}\psi(z;\brho)}{\psi^2(z;\brho_0)}\right)\ud M(z)\Bigg\}\Bigg]\nonumber\\
&=\Bigg\{\partial^2_{\brho_0,j,\ell}\int_0^t\psi(z;\brho)\ud z\Bigg\}\Bigg\{\partial^2_{\brho_0,m,k}\int_0^t\psi(z;\brho)\ud z\Bigg\}\Bigg.\Bigg.\nonumber\\
&\quad\Bigg. + \int_0^{t}\left\{\frac{\partial_{\brho_0,j,\ell}^2\psi(z;\brho)}{\psi(z;\brho_0)}-\frac{\partial_{\brho_0,j}\psi(z;\brho)\partial_{\brho_0,\ell}\psi(z;\brho)}{\psi^2(z;\brho_0)}\right\}\Bigg.\Bigg.\Bigg.\nonumber\\
&\qquad\Bigg.\Bigg.\times\left\{\frac{\partial_{\brho_0,m,k}^2\psi(z;\brho)}{\psi(z;\brho_0)}-\frac{\partial_{\brho_0,m}\psi(z;\brho)\partial_{\brho_0,k}\psi(z;\brho)}{\psi^2(z;\brho_0)}\right\}\psi(z;\brho_0)\ud z\Bigg.\Bigg.\nonumber\\
&\quad\Bigg. + \Bigg[\int_0^{t}\left\{\frac{\partial_{\brho_0,j,\ell}^2\psi(z;\brho)}{\psi(z;\brho_0)}-\frac{\partial_{\brho_0,j}\psi(z;\brho)\partial_{\brho_0,\ell}\psi(z;\brho)}{\psi^2(z;\brho_0)}\right\}\psi(z;\brho_0)\ud z\Bigg]\Bigg.\Bigg.\nonumber\\
&\qquad\Bigg. \times\Bigg[\int_0^{t}\left\{\frac{\partial_{\brho_0,m,k}^2\psi(z;\brho)}{\psi(z;\brho_0)}-\frac{\partial_{\brho_0,m}\psi(z;\brho)\partial_{\brho_0,k}\psi(z;\brho)}{\psi^2(z;\brho_0)}\right\}\psi(z;\brho_0)\ud z\Bigg]\Bigg.\nonumber\\
&\quad\Bigg. -\Bigg\{\partial^2_{\brho_0,j,\ell}\int_0^t\psi(z;\brho)\ud z\Bigg\}\Bigg[\int_0^{t}\left\{\frac{\partial_{\brho_0,m,k}^2\psi(z;\brho)}{\psi(z;\brho_0)}-\frac{\partial_{\brho_0,m}\psi(z;\brho)\partial_{\brho_0,k}\psi(z;\brho)}{\psi^2(z;\brho_0)}\right\}\psi(z;\brho_0)\ud z\Bigg]\Bigg.\nonumber\\
&\quad\Bigg. -\Bigg\{\partial^2_{\brho_0,m,k}\int_0^t\psi(z;\brho)\ud z\Bigg\}\Bigg[\int_0^{t}\left\{\frac{\partial_{\brho_0,j,\ell}^2\psi(z;\brho)}{\psi(z;\brho_0)}-\frac{\partial_{\brho_0,j}\psi(z;\brho)\partial_{\brho_0,\ell}\psi(z;\brho)}{\psi^2(z;\brho_0)}\right\}\psi(z;\brho_0)\ud z\Bigg]\nonumber\\
&=\int_0^{t}\frac{1}{\psi(z;\brho_0)}\left\{\partial_{\brho_0,j,\ell}^2\psi(z;\brho)-\frac{\partial_{\brho_0,j}\psi(z;\brho)\partial_{\brho_0,\ell}\psi(z;\brho)}{\psi(z;\brho_0)}\right\}\nonumber\\
&\qquad\times\left\{\partial_{\brho_0,m,k}^2\psi(z;\brho)-\frac{\partial_{\brho_0,m}\psi(z;\brho)\partial_{\brho_0,k}\psi(z;\brho)}{\psi(z;\brho_0)}\right\}\ud z\nonumber\\
&\quad+\int_0^t\frac{\partial_{\brho_0,j}\psi(z;\brho)\partial_{\brho_0,\ell}\psi(z;\brho)}{\psi(z;\brho_0)}\ud z\int_0^t\frac{\partial_{\brho_0,m}\psi(z;\brho)\partial_{\brho_0,k}\psi(z;\brho)}{\psi(z;\brho_0)}\ud z.\label{eq:Ejlkm}
\end{align}
Moreover,
\begin{align}
&\E\left(\sum_{i=1}^{M(t)}\partial^2_{\brho_0}h(Z_i;\brho,t)\right)_{j,\ell}\E\left(\sum_{i=1}^{M(t)}\partial^2_{\brho_0}h(Z_i;\brho,t)\right)_{m,k}\nonumber\\
&=\E\Bigg(\partial_{\brho_0}^2\int_0^t\psi(z;\brho)\ud z-\sum_{i=1}^{M(t)}\left[\frac{\partial_{\brho_0}^2\psi(Z_i;\brho)}{\psi(Z_i;\brho_0)}-\frac{\left\{\partial_{\brho_0}\psi(Z_i;\brho)\right\}^{\otimes 2}}{\psi^2(Z_i;\brho_0)}\right]\Bigg)_{j,\ell}\nonumber\\
&\qquad\times\E\Bigg(\partial_{\brho_0}^2\int_0^t\psi(z;\brho)\ud z-\sum_{i=1}^{M(t)}\left[\frac{\partial_{\brho_0}^2\psi(Z_i;\brho)}{\psi(Z_i;\brho_0)}-\frac{\left\{\partial_{\brho_0}\psi(Z_i;\brho)\right\}^{\otimes 2}}{\psi^2(Z_i;\brho_0)}\right]\Bigg)_{m,k}\nonumber\\
&=\int_0^t\frac{\partial_{\brho_0,j}\psi(z;\brho)\partial_{\brho_0,\ell}\psi(z;\brho)}{\psi(z;\brho_0)}\ud z\int_0^t\frac{\partial_{\brho_0,m}\psi(z;\brho)\partial_{\brho_0,k}\psi(z;\brho)}{\psi(z;\brho_0)}\ud z.\label{eq:EjlEkm}
\end{align}
Thus,
\begin{align*}
\tr\Var\left\{\bV(t)\bss\right\}&=\tr\left[\E\left\{\bV(t)\bss\right\}^{\otimes 2}-\left\{\E\bV(t)\bss\right\}^{\otimes 2}\right]\\
&=\tr\int_0^{t}\mathcal{I}^{-1/2}(t,\brho_0)\mathcal{K}(z,\brho_0)\mathcal{I}^{-1/2}(t,\brho_0)\bss\bss^{\top}\mathcal{I}^{-1/2}(t,\brho_0)\mathcal{K}(z,\brho_0)\mathcal{I}^{-1/2}(t,\brho_0)\ud z\\
&=\bss^{\top}\left[\int_0^{t}\left\{\mathcal{I}^{-1/2}(t,\brho_0)\mathcal{K}(z,\brho_0)\mathcal{I}^{-1/2}(t,\brho_0)\right\}^2\ud z\right]\bss\to 0
\end{align*}
and, consequently, $\Var\left\{\bss^{\top}\bV(t)\bss\right\}\to 0$ as $t\to\infty$, because of Assumption~\ref{ass:infmat}~(ii).

Basic Corollary from~\cite{HjortPollard2011} can be now applied on representation~\eqref{eq:repre}, which directly provides the assertion of this theorem.
\end{proof}

\begin{proof}[Proof of Corollary~\ref{cor:asnorm}]
Let us choose $a\in\mathbb{N}$. One can observe that
\begin{multline*}
\mathcal{I}^{-1/2}(a,\brho_0)\sum_{i=1}^{M(a)}\partial_{\brho_0}h\left(Z_i;\brho_0,a\right)\\
=-\sum_{i=1}^{a}\int_{i-1}^{i}\mathcal{I}^{-1/2}(a,\brho_0)\left\{\partial_{\brho_0}\log\psi(z;\brho)\right\}\left(\ud M(z)-\psi(z;\brho_0)\ud z\right)=-\sum_{i=1}^a\bY_i
\end{multline*}
is a~sum of independent random vectors. Since
\[
\E\bY_i=\E\int_{i-1}^{i}\mathcal{I}^{-1/2}(a,\brho_0)\left\{\partial_{\brho_0}\log\psi(z;\brho)\right\}\left(\ud M(z)-\psi(z;\brho_0)\ud z\right)=\zero
\]
and
\begin{multline*}
\E\bY_i^{\otimes 2}=\E\left[\int_{i-1}^{i}\mathcal{I}^{-1/2}(a,\brho_0)\left\{\partial_{\brho_0}\log\psi(z;\brho)\right\}\left(\ud M(z)-\psi(z;\brho_0)\ud z\right)\right]^{\otimes 2}\\
=\mathcal{I}^{-1/2}(a,\brho_0)\int_{i-1}^i\frac{\left\{\partial_{\brho_0}\psi(z;\brho)\right\}^{\otimes 2}}{\psi^2(z;\brho_0)}\psi(z;\brho_0)\ud z\,\mathcal{I}^{-1/2}(a,\brho_0),
\end{multline*}
we have
\begin{align*}
&\Var\sum_{i=1}^a\bY_i=\Var\left\{\mathcal{I}^{-1/2}(a,\brho_0)\sum_{i=1}^{M(a)}\partial_{\brho_0}h\left(Z_i;\brho_0,a\right)\right\}\\
&=\mathcal{I}^{-1/2}(a,\brho_0)\int_{0}^a\frac{\left\{\partial_{\brho_0}\psi(z;\brho)\right\}^{\otimes 2}}{\psi(z;\brho_0)}\ud z\,\mathcal{I}^{-1/2}(a,\brho_0)=\mathcal{I}^{-1/2}(a,\brho_0)\mathcal{I}(a,\brho_0)\mathcal{I}^{-1/2}(a,\brho_0)={\boldsymbol I}.
\end{align*}

The Lindeberg condition in Assumption~\ref{ass:Lind} and the Cram\'{e}r-Wold device~\citep[Theorem~29.4]{Billingsley2008} allow us to apply the multivariate central limit theorem (CLT) for the zero mean independent random vectors $\bY_i$'s. Thus,
\[
\mathcal{I}^{-1/2}(a,\brho_0)\sum_{i=1}^{M(a)}\partial_{\brho_0}h\left(Z_i;\brho_0,a\right)\xrightarrow[a\to\infty]{\dist}\mathsf{N}_{q}\left(\zero,{\boldsymbol I}\right).
\]
Hence, the desired convergence in distribution follows from the asymptotic representation in Theorem~\ref{thm:consistency}.
\end{proof}

\begin{proof}[Proof of Proposition~\ref{prop:infty}]
For the non-homogeneous Poisson process $\{M(t)\}_{t\geq 0}$ holds that $M(t)=\widetilde{M}(\Psi(t;\brho))$, where $\{\widetilde{M}(t)\}_{t\geq 0}$ is a~standard Poisson process (i.e., homogeneous Poisson process with intensity equal one). Suppose that $\widetilde{Z}_n$'s are arrival times of the standard Poisson process $\{\widetilde{M}(t)\}_{t\geq 0}$. Since
\begin{multline*}
\prob\left\{\lim_{t\to\infty}M(t)<\infty\right\}=\prob\left[\lim_{t\to\infty}\widetilde{M}\{\Psi(t;\brho)\}<\infty\right]=\prob\left(\widetilde{Z}_n-\widetilde{Z}_{n-1}=\infty\,\,\mbox{for some}\,\,n\right)\\
=\prob\left[\bigcup_{n=1}^{\infty} \left\{\widetilde{Z}_n-\widetilde{Z}_{n-1}=\infty\right\}\right]\leq\sum_{n=1}^{\infty}\prob\left(\widetilde{Z}_n-\widetilde{Z}_{n-1}=\infty\right)=0,
\end{multline*}
we have that $\lim_{t\to\infty}M(t)=\infty$ with probability one.
\end{proof}

\begin{proof}[Proof of Theorem~\ref{thm:consistencyN}]
Let us choose $t>0$. With respect to Assumption~\ref{ass:convexN}, consider the convex function
\begin{equation*}
G_t(\bss):=\sum_{i=1}^{M(t)}\left[g_i\left\{\bU_i;\bth_0+\mathcal{J}^{-1/2}(t,\bth_0)\bss,t\right\}-g_i(\bU_i;\bth_0,t)\right]
\end{equation*}
in $\bss\in\mathbb{R}^{p}$. It is minimized by $\mathcal{J}^{1/2}(t,\bth_0)\left(\widehat{\bth}-\bth_0\right)$. The Taylor series expansion gives
\begin{multline}\label{eq:repreN}
G_t(\bss)=\bss^{\top}\underbrace{\mathcal{J}^{-1/2}(t,\bth_0)\sum_{i=1}^{M(t)}\partial_{\bth_0}g_i\left(\bU_i;\bth,t\right)}_{=:\mathcal{U}(t)}\\
+\frac{1}{2}\bss^{\top}\underbrace{\mathcal{J}^{-1/2}(t,\bth_0)\left\{\sum_{i=1}^{M(t)}\partial^2_{\bth_0}g_i(\bU_i;\bth,t)\right\}\mathcal{J}^{-1/2}(t,\bth_0)}_{=:\mathcal{V}(t)}\bss+\mathcal{R}_t(\bss)
\end{multline}
almost surely, where $\mathcal{R}_t(\bss)=M(t)o\!\left(\bss^{\top}\mathcal{J}^{-1}(t,\bth_0)\bss\right)\to 0$ in probability for $t\to\infty$, because of Assumption~\ref{ass:infmatN}~{(i)}.

Assumption~\ref{ass:interchange1N} assures that there has to exist a~closed sub-neighborhood of $\bth_0$ denoted by~$\mathscr{V}_t(\brho_0)$ such that for all $\bth\in\mathscr{V}_t(\bth_0)$, it holds that $\int_{Z_i}^t\left|\partial_{\bth_0,j}\lambda(\tau,Z_i;\bth)\right|\ud \tau<\infty$ almost surely for all $j=1,\ldots,p$ and one can interchange derivative and integral, i.e., $\partial_{\bth_0}\int_{Z_i}^t\lambda(\tau,Z_i;\bth)\ud \tau=\int_{Z_i}^{t}\partial_{\bth_0}\lambda(\tau,Z_i;\bth)\ud \tau$.

Let us realize that for every $i\in\mathbb{N}$ the sequence $\{U_{i,k}\}_{k\in\mathbb{N}}$ forms arrival times of the Poisson counting process $\{N_i(t)\}_{t\geq 0}$ and, hence,
\begin{equation}\label{eq:intNi}
\sum_{k=1}^{N_i(t)}\frac{\partial_{\bth_0}\lambda(U_{i,k},Z_i;\bth)}{\lambda(U_{i,k},Z_i;\bth_0)}=\int_{Z_i}^{t}\frac{\partial_{\bth_0}\lambda(\tau,Z_i;\bth)}{\lambda(\tau,Z_i;\bth_0)}\ud N_i(\tau).
\end{equation}
In sequel, we use conditioning on $\{M(z)\}_{z\in[0,t]}$ (i.e., information contained in the process~$M$ up to time~$t$), which corresponds to conditioning on $Z_1,\ldots,Z_{M(t)}$. Recall that $g_i(\bU_i;\bth,t)=\int_{Z_i}^{t}\lambda(\tau,Z_i;\bth)\ud \tau-\sum_{k=1}^{N_i(t)}\log\lambda(U_{i,k},Z_i;\bth)$. Since $\frac{\partial_{\bth_0}\lambda(\cdot,Z_i;\bth)}{\lambda(\cdot,Z_i;\bth_0)}$ is continuous, we obtain
\begin{align*}
&\E \{\mathcal{U}(t)|\{M(z)\}_{z\in[0,t]}\}\\
&=\E\Bigg[\mathcal{J}^{-1/2}(t,\bth_0)\sum_{i=1}^{M(t)}\partial_{\bth_0}\Bigg\{\Lambda(t,Z_i;\bth)-\sum_{k=1}^{N_i(t)}\log\lambda(U_{i,k},Z_i;\bth)\Bigg\}\Bigg|\{M(z)\}_{z\in[0,t]}\Bigg]\\
&=\mathcal{J}^{-1/2}(t,\bth_0)\sum_{i=1}^{M(t)}\E\Bigg\{\partial_{\bth_0}\int_{Z_i}^t\lambda(\tau,Z_i;\bth)\ud \tau-\sum_{k=1}^{N_i(t)}\frac{\partial_{\bth_0}\lambda(U_{i,k},Z_i;\bth)}{\lambda(U_{i,k},Z_i;\bth_0)}\Bigg|\{M(z)\}_{z\in[0,t]}\Bigg\}\\
&=\mathcal{J}^{-1/2}(t,\bth_0)\sum_{i=1}^{M(t)}\E\Bigg\{\partial_{\bth_0}\int_{Z_i}^t\lambda(\tau,Z_i;\bth)\ud \tau-\int_{Z_i}^t\frac{\partial_{\bth_0}\lambda(\tau,Z_i;\bth)}{\lambda(\tau,Z_i;\bth_0)}\ud N_i(\tau)\Bigg|\{M(z)\}_{z\in[0,t]}\Bigg\}\\
&=\mathcal{J}^{-1/2}(t,\bth_0)\sum_{i=1}^{M(t)}\Bigg\{\partial_{\bth_0}\int_{Z_i}^t\lambda(\tau,Z_i;\bth)\ud \tau-\int_{Z_i}^t\frac{\partial_{\bth_0}\lambda(\tau,Z_i;\bth)}{\lambda(\tau,Z_i;\bth_0)}\lambda(\tau,Z_i;\bth_0)\ud \tau\Bigg\}=\zero.
\end{align*}
One can apply the It\^{o} isometry for jump processes in a~similar way as in~\eqref{eq:VarUt} and utilize that the processes $N_i$'s are independent
\begin{align*}
&\Var\{\mathcal{U}(t)|\{M(z)\}_{z\in[0,t]}\}\\
&=\sum_{i=1}^{M(t)}\E\Bigg[\mathcal{J}^{-1/2}(t,\bth_0)\partial_{\bth_0}\Bigg\{\Lambda(t,Z_i;\bth)-\sum_{k=1}^{N_i(t)}\log\lambda(U_{i,k},Z_i;\bth)\Bigg\}\Bigg|\{M(z)\}_{z\in[0,t]}\Bigg]^{\otimes 2}\\
&=\sum_{i=1}^{M(t)}\mathcal{J}^{-1/2}(t,\bth_0)\E\Bigg\{\partial_{\bth_0}\int_{Z_i}^t\lambda(\tau,Z_i;\bth)\ud \tau\Bigg.\\
&\quad\Bigg.-\int_{Z_i}^t\frac{\partial_{\bth_0}\lambda(\tau,Z_i;\bth)}{\lambda(\tau,Z_i;\bth_0)}\ud N_i(\tau)\Bigg|\{M(z)\}_{z\in[0,t]}\Bigg\}^{\otimes 2}\mathcal{J}^{-1/2}(t,\bth_0)\\
&=\mathcal{J}^{-1/2}(t,\bth_0)\sum_{i=1}^{M(t)}\Bigg\{\int_{Z_i}^{t}\frac{\left\{\partial_{\bth_0}\lambda(\tau,Z_i;\bth)\right\}^{\otimes 2}}{\lambda^2(\tau,Z_i;\bth_0)}\lambda(\tau,Z_i;\bth_0)\ud \tau\Bigg\}\mathcal{J}^{-1/2}(t,\bth_0),
\end{align*}
due to Assumption~\ref{ass:interchange1N}. Then, we get $\E\mathcal{U}(t)=\E[\E\{\mathcal{U}(t)|\{M(z)\}_{z\in[0,t]}\}]=\zero$ and
\begin{align*}
\Var\mathcal{U}(t)&=\Var[\E\{\mathcal{U}(t)|\{M(z)\}_{z\in[0,t]}\}]+\E[\Var\{\mathcal{U}(t)|\{M(z)\}_{z\in[0,t]}\}]\\
&=\mathcal{J}^{-1/2}(t,\bth_0)\E\sum_{i=1}^{M(t)}\mathcal{J}_i(t;\bth_0)\mathcal{J}^{-1/2}(t,\bth_0)={\boldsymbol I}.
\end{align*}
Moreover analogously as in~\eqref{eq:EVt},
\begin{align*}
\E \mathcal{V}(t)&=\E[\E\{\mathcal{V}(t)|\{M(z)\}_{z\in[0,t]}\}]\\
&=\E\Bigg[\mathcal{J}^{-1/2}(t,\bth_0)\sum_{i=1}^{M(t)}\E\Bigg\{\partial_{\bth_0}^2\int_{Z_i}^t\lambda(\tau,Z_i;\bth)\ud \tau\Bigg.\Bigg.\\
&\quad\Bigg.\Bigg.-\int_{Z_i}^{t}\left(\frac{\partial_{\bth_0}^2\lambda(\tau,Z_i;\bth)}{\lambda(\tau,Z_i;\bth_0)}-\frac{\left\{\partial_{\bth_0}\lambda(\tau,Z_i;\bth)\right\}^{\otimes 2}}{\lambda^2(\tau,Z_i;\bth_0)}\right)\ud N_i(\tau)\Bigg|\{M(z)\}_{z\in[0,t]}\Bigg\}\mathcal{J}^{-1/2}(t,\bth_0)\Bigg]\\
&=\mathcal{J}^{-1/2}(t,\bth_0)\E\left[\sum_{i=1}^{M(t)}\int_{Z_i}^t\frac{\left\{\partial_{\bth_0}\lambda(\tau,Z_i;\bth)\right\}^{\otimes 2}}{\lambda(\tau,Z_i;\bth_0)}\ud \tau\right]\mathcal{J}^{-1/2}(t,\bth_0)\\
&=\mathcal{J}^{-1/2}(t,\bth_0)\mathcal{J}(t;\bth_0)\mathcal{J}^{-1/2}(t,\bth_0)={\boldsymbol I},
\end{align*}
because of Assumption~\ref{ass:interchange1N}. Furthermore for every $\bss\in\mathbb{R}^p$, it holds that
\[
\Var\left\{\bss^{\top}\mathcal{V}(t)\bss\right\}=\bss^{\top}\Var\left\{\mathcal{V}(t)\bss\right\}\bss=\bss^{\top}\left[\E\left\{\mathcal{V}(t)\bss\right\}^{\otimes 2}-\left\{\E\mathcal{V}(t)\bss\right\}^{\otimes 2}\right]\bss.
\]
The $(j,k)$-element of $\E\left\{\mathcal{V}(t)\bss\right\}^{\otimes 2}\equiv\E\left\{\mathcal{V}(t)\bss\bss^{\top}\mathcal{V}(t)\right\}$ has a~form of
\begin{multline}\label{eq:jkelementN}
\E\sum_{\ell=1}^p\sum_{m=1}^p s_{\ell}s_m\left(\mathcal{V}(t)\right)_{j,\ell}\left(\mathcal{V}(t)\right)_{m,k}=\E\sum_{\ell=1}^p\sum_{m=1}^p s_{\ell}s_m\\
\times\sum_{\tilde{\ell}=1}^p\sum_{\breve{\ell}=1}^p\kappa_{j,\tilde{\ell}}(t)\left(\sum_{i=1}^{n}\partial^2_{\bth_0}g_i(\bU_i;\bth,t)\right)_{\tilde{\ell},\breve{\ell}}\kappa_{\breve{\ell},\ell}(t)\sum_{\tilde{m}=1}^p\sum_{\breve{m}=1}^p\kappa_{m,\tilde{m}}(t)\left(\sum_{i=1}^{n}\partial^2_{\bth_0}g_i(\bU_i;\bth,t)\right)_{\tilde{m},\breve{m}}\kappa_{\breve{m},k}(t),
\end{multline}
where $\bss\bss^{\top}=\left(s_{\ell}s_m\right)_{\ell=1,m=1}^{p,p}$ and $\mathcal{J}^{-1/2}(t,\bth_0)=:\left(\kappa_{\ell,m}(t)\right)_{\ell=1,m=1}^{p,p}$. In a~similar fashion as in~\eqref{eq:Ejlkm}--\eqref{eq:EjlEkm} together with the independence of~$N_i$'s, we get
\begin{align}
&\E\left[\left(\sum_{i=1}^{M(t)}\partial^2_{\bth_0}g_i(\bU_i;\bth,t)\right)_{j,\ell}\left(\sum_{i=1}^{M(t)}\partial^2_{\bth_0}g_i(\bU_i;\bth,t)\right)_{m,k}\Bigg|\{M(z)\}_{z\in[0,t]}\right]\nonumber\\
&=\sum_{i=1}^{M(t)}\E\Bigg[\Bigg(\partial_{\bth_0}^2\int_{Z_i}^t\lambda(\tau,Z_i;\bth)\ud \tau\nonumber\\
&\quad\quad-\sum_{\tilde{i}=1}^{N_i(t)}\left\{\frac{\partial_{\bth_0}^2\lambda(U_{i,\tilde{i}},Z_i;\bth)}{\lambda(U_{i,\tilde{i}},Z_i;\bth_0)}-\frac{\left\{\partial_{\bth_0}\lambda(U_{i,\tilde{i}},Z_i;\bth)\right\}^{\otimes 2}}{\lambda^2(U_{i,\tilde{i}},Z_i;\bth_0)}\right\}\Bigg)_{j,\ell}\Bigg.\nonumber\\
&\quad\Bigg.\times\Bigg(\partial_{\bth_0}^2\int_{Z_i}^t\lambda(\tau,Z_i;\bth)\ud \tau\nonumber\\
&\quad\quad-\sum_{\tilde{i}=1}^{N_i(t)}\left\{\frac{\partial_{\bth_0}^2\lambda(U_{i,\tilde{i}},Z_i;\bth)}{\lambda(U_{i,\tilde{i}},Z_i;\bth_0)}-\frac{\left\{\partial_{\bth_0}\lambda(U_{i,\tilde{i}},Z_i;\bth)\right\}^{\otimes 2}}{\lambda^2(U_{i,\tilde{i}},Z_i;\bth_0)}\right\}\Bigg)_{m,k}\Bigg|\{M(z)\}_{z\in[0,t]}\Bigg]\nonumber\\
&\quad+\mathop{\sum_{i=1}^{M(t)}\!\sum_{\iota=1}^{M(t)}}_{i\neq\iota}\E\Bigg(\partial_{\bth_0}^2\int_{Z_i}^t\lambda(\tau,Z_i;\bth)\ud \tau\nonumber\\
&\quad\quad-\sum_{\tilde{i}=1}^{N_i(t)}\left\{\frac{\partial_{\bth_0}^2\lambda(U_{i,\tilde{i}},Z_i;\bth)}{\lambda(U_{i,\tilde{i}},Z_i;\bth_0)}-\frac{\left\{\partial_{\bth_0}\lambda(U_{i,\tilde{i}},Z_i;\bth)\right\}^{\otimes 2}}{\lambda^2(U_{i,\tilde{i}},Z_i;\bth_0)}\right\}\Bigg|\{M(z)\}_{z\in[0,t]}\Bigg)_{j,\ell}\nonumber\\
&\quad\times\E\Bigg(\partial_{\bth_0}^2\int_{Z_{\iota}}^t\lambda(\tau,Z_{\iota};\bth)\ud \tau\nonumber\\
&\quad\quad-\sum_{\tilde{\iota}=1}^{N_{\iota}(t)}\left\{\frac{\partial_{\bth_0}^2\lambda(U_{\iota,\tilde{\iota}},Z_{\iota};\bth)}{\lambda(U_{\iota,\tilde{\iota}},Z_{\iota};\bth_0)}-\frac{\left\{\partial_{\bth_0}\lambda(U_{\iota,\tilde{\iota}},Z_{\iota};\bth)\right\}^{\otimes 2}}{\lambda^2(U_{\iota,\tilde{\iota}},Z_{\iota};\bth_0)}\right\}\Bigg|\{M(z)\}_{z\in[0,t]}\Bigg)_{m,k}\nonumber\\
&=\sum_{i=1}^{M(t)}\int_{Z_i}^{t}\frac{1}{\lambda(\tau,Z_i;\bth_0)}\left\{\partial_{\bth_0,j,\ell}^2\lambda(\tau,Z_i;\bth)-\frac{\partial_{\bth_0,j}\lambda(\tau,Z_i;\bth)\partial_{\bth_0,\ell}\lambda(\tau,Z_i;\bth)}{\lambda(\tau,Z_i;\bth_0)}\right\}\nonumber\\
&\quad\times\left\{\partial_{\bth_0,m,k}^2\lambda(\tau,Z_i;\bth)-\frac{\partial_{\bth_0,m}\lambda(\tau,Z_i;\bth)\partial_{\bth_0,k}\lambda(\tau,Z_i;\bth)}{\lambda(\tau,Z_i;\bth_0)}\right\}\ud \tau\nonumber\\
&\quad +\left\{\sum_{i=1}^{M(t)}\int_{Z_i}^{t}\frac{\partial_{\bth_0,j}\lambda(\tau,Z_i;\bth)\partial_{\bth_0,\ell}\lambda(\tau,Z_i;\bth)}{\lambda(\tau,Z_i;\bth_0)}\ud\tau\right\}\left\{\sum_{i=1}^{M(t)}\int_{Z_i}^{t}\frac{\partial_{\bth_0,m}\lambda(\tau,Z_i;\bth)\partial_{\bth_0,k}\lambda(\tau,Z_i;\bth)}{\lambda(\tau,Z_i;\bth_0)}\ud\tau\right\}.\label{eq:EjlkmN}
\end{align}
Moreover,
\begin{align}
&\E\left(\sum_{i=1}^{M(t)}\partial^2_{\bth_0}g_i(\bU_i;\bth,t)\Bigg|\{M(z)\}_{z\in[0,t]}\right)_{j,\ell}\E\left(\sum_{i=1}^{M(t)}\partial^2_{\bth_0}g_i(\bU_i;\bth,t)\Bigg|\{M(z)\}_{z\in[0,t]}\right)_{m,k}\nonumber\\
&=\left\{\sum_{i=1}^{M(t)}\int_{Z_i}^{t}\frac{\partial_{\bth_0,j}\lambda(\tau,Z_i;\bth)\partial_{\bth_0,\ell}\lambda(\tau,Z_i;\bth)}{\lambda(\tau,Z_i;\bth_0)}\ud\tau\right\}\left\{\sum_{i=1}^{M(t)}\int_{Z_i}^{t}\frac{\partial_{\bth_0,m}\lambda(\tau,Z_i;\bth)\partial_{\bth_0,k}\lambda(\tau,Z_i;\bth)}{\lambda(\tau,Z_i;\bth_0)}\ud\tau\right\}.\label{eq:EjlEkmN}
\end{align}
Thus,
\begin{align*}
&\tr\Var\left\{\mathcal{V}(t)\bss\right\}=\tr\E\left[\Var\left\{\mathcal{V}(t)\bss|\{M(z)\}_{z\in[0,t]}\right\}\right]\\
&\quad+\tr\Var\left[\E\left\{\mathcal{V}(t)\bss|\{M(z)\}_{z\in[0,t]}\right\}\right]\\
&=\tr\E\left[\E\left\{\mathcal{V}(t)\bss|\{M(z)\}_{z\in[0,t]}\right\}^{\otimes 2}-\left\{\E\left(\mathcal{V}(t)\bss|\{M(z)\}_{z\in[0,t]}\right)\right\}^{\otimes 2}\right]\\
&\quad+\tr\E\left[\E\left\{\mathcal{V}(t)\bss|\{M(z)\}_{z\in[0,t]}\right\}\right]^{\otimes 2}-\tr\left[\E\left\{\E\left(\mathcal{V}(t)\bss|\{M(z)\}_{z\in[0,t]}\right)\right\}\right]^{\otimes 2}\\
&=\tr\E\sum_{i=1}^{M(t)}\int_{Z_i}^{t}\mathcal{J}^{-1/2}(t,\bth_0)\mathcal{L}_i(\tau,\bth_0)\mathcal{J}^{-1/2}(t,\bth_0)\bss\bss^{\top}\mathcal{J}^{-1/2}(t,\bth_0)\mathcal{L}_i(\tau,\bth_0)\mathcal{J}^{-1/2}(t,\bth_0)\ud \tau\\
&=\bss^{\top}\left[\E\sum_{i=1}^{M(t)}\int_{Z_i}^{t}\left\{\mathcal{J}^{-1/2}(t,\bth_0)\mathcal{L}_i(\tau,\bth_0)\mathcal{J}^{-1/2}(t,\bth_0)\right\}^2\ud \tau\right]\bss\to 0,\quad t\to 0
\end{align*}
and, consequently, $\Var\left\{\bss^{\top}\mathcal{V}(t)\bss\right\}\to 0$ as $t\to\infty$, because of Assumption~\ref{ass:infmatN}~(ii).

Finally, Basic Corollary from~\cite{HjortPollard2011} can be applied on representation~\eqref{eq:repreN}, which directly provides the assertion of this theorem.
\end{proof}

\begin{proof}[Proof of Corollary~\ref{cor:asnormN}]
Let us choose $a\in\mathbb{N}$. Recall that $\tilde{N}_{z}(\tau)=\sum_{k=1}^{\infty}\mathbbm{1}\{U_{i,k}-z\leq\tau\}$ is the restarted process of~$N_i$. One can observe that
\begin{align*}
&\mathcal{J}^{-1/2}(a,\bth_0)\sum_{i=1}^{M(a)}\partial_{\bth_0}g_i\left(\bU_i;\bth_0,a\right)\\
&=\sum_{i=1}^{M(a)}\mathcal{J}^{-1/2}(a,\bth_0)\partial_{\bth_0}\left\{\int_{Z_i}^{a}\lambda(\tau,Z_i;\bth)\ud \tau-\sum_{k=1}^{N_i(a)}\log\lambda(U_{i,k},Z_i;\bth)\right\}\\
&=-\sum_{i=1}^{M(a)}\mathcal{J}^{-1/2}(a,\bth_0)\left[\int_{Z_i}^{a}\left\{\partial_{\bth_0}\log\lambda(\tau,Z_i;\bth)\right\}\left(\ud N_i(\tau)-\lambda(\tau,Z_i;\bth_0)\ud\tau\right)\right]\\
&=-\sum_{j=1}^{a}\int_{j-1}^{j}\mathcal{J}^{-1/2}(a,\bth_0)\left[\int_{z}^{a}\left\{\partial_{\bth_0}\log\lambda(\tau,z;\bth)\right\}\left(\ud\tilde{N}_z(\tau-z)-\lambda(\tau,z;\bth_0)\ud\tau\right)\right]\ud M(z)\\
&=-\sum_{i=1}^a\mathcal{Y}_i
\end{align*}
is a~sum of independent random vectors. Since
\begin{align*}
\E\mathcal{Y}_i&=\E\int_{j-1}^{j}\mathcal{J}^{-1/2}(a,\bth_0)
\left[\int_{z}^{a}\left\{\partial_{\bth_0}\log\lambda(\tau,z;\bth)\right\}\left(\ud\tilde{N}_z(\tau-z)-\lambda(\tau,z;\bth_0)\ud\tau\right)\right]\ud M(z)\\
&=\zero,
\end{align*}
we have
\[
\Var\sum_{i=1}^a\mathcal{Y}_i=\Var\left\{\mathcal{J}^{-1/2}(a,\bth_0)\sum_{i=1}^{M(a)}\partial_{\bth_0}g_i\left(\bU_i;\bth_0,a\right)\right\}=\mathcal{J}^{-1/2}(a,\bth_0)\mathcal{J}(a,\bth_0)\mathcal{J}^{-1/2}(a,\bth_0)={\boldsymbol I}.
\]

The Lindeberg condition in Assumption~\ref{ass:lindN} and the Cram\'{e}r-Wold device~\citep[Theorem~29.4]{Billingsley2008} allow us to apply the multivariate CLT for the zero mean independent random vectors $\mathcal{Y}_i$'s. Then,
\[
\mathcal{J}^{-1/2}(a,\bth_0)\sum_{i=1}^{M(t)}\partial_{\bth_0}g_i\left(\bU_i;\bth_0,a\right)\xrightarrow[a\to\infty]{\dist}\mathsf{N}_{p}\left(\zero,{\boldsymbol I}\right).
\]
To conclude, the desired convergence in distribution follows from the asymptotic representation in Theorem~\ref{thm:consistencyN}.
\end{proof}

\end{document}